\let\cite\citep
\newcommand{\parencite}{\citep}
\newcommand{\textcite}{\citet}
\title{Hindsight and Sequential Rationality of Correlated Play}
\author{
Dustin Morrill\\
        University of Alberta; Amii\textsuperscript{\rm $\dagger$}\\
        Edmonton, Alberta, Canada\\
        \texttt{morrill@ualberta.ca}\\
        \And
    Ryan D'Orazio\\
        Universit\'{e} de Montr\'{e}al; Mila\\
        Montr\'{e}al, Qu\'{e}bec, Canada\\
        \texttt{ryan.dorazio@mila.quebec}\\
        \AND
    Reca Sarfati\\
        Massachusetts Institute of Technology\\
        Cambridge, Massachusetts, United States\\
        \texttt{sarfati@mit.edu}\\
        \And
    Marc Lanctot\\
        DeepMind\textsuperscript{\rm $\ddagger$}\\
        \texttt{lanctot@google.com}\\
        \AND
    James R.\ Wright\textsuperscript{\rm $\dagger$}\\
        \texttt{james.wright@ualberta.ca}\\
        \And
    Amy R.\ Greenwald\\
        Brown University\\
        Providence, Rhode Island, United States\\
        \texttt{amy\_greenwald@brown.edu}\\
        \AND
    Michael Bowling\textsuperscript{\rm $\dagger$, \rm $\ddagger$}\\
        \texttt{mbowling@ualberta.ca}
}
\def\correctionPaperReferenceKey/{hsr2020arxivCorrections}
\DeclarePairedDelimiter{\abs}{\lvert}{\rvert}
\DeclarePairedDelimiter{\subex}{(}{)}
\DeclarePairedDelimiter{\subblock}{[}{]}
\DeclarePairedDelimiter{\tuple}{(}{)}
\DeclarePairedDelimiter{\set}{\{}{\}}
\newcommand{\reals}{\mathbb{R}}
\newcommand{\Simplex}{\Delta}
\newcommand{\simplex}{\Simplex}
\newcommand{\bs}[1]{\bm{#1}}
\newcommand{\expectation}{\mathbb{E}}
\newcommand{\E}{\expectation}
\newcommand{\as}{\doteq}
\newcommand{\given}{\,|\,}
\newcommand{\where}{\;|\;}
\newcommand{\PureStratSet}{S}
\newcommand{\pureStrat}{s}
\newcommand{\pureProfile}{\pureStrat}
\newcommand{\reward}{r}
\newcommand{\utility}{u}
\newcommand{\policy}{\pi}
\newcommand{\StrategySet}{\Pi}
\newcommand{\strategy}{\policy}
\newcommand{\strat}{\strategy}
\newcommand{\gap}{\varepsilon}
\newcommand{\recDist}{\mu}
\newcommand{\Actions}{\mathcal{A}}
\newcommand{\regret}{\rho}
\newcommand{\Regret}{R}
\newcommand{\supReward}{U}
\newcommand{\maxReward}{\supReward}
\newcommand{\EXT}{\textsc{ex}}
\newcommand{\IN}{\textsc{in}}
\newcommand{\INT}{\IN}
\newcommand{\SWAP}{\textsc{sw}}
\newcommand{\infoSet}{I}
\newcommand{\InfoSets}{\mathcal{I}}
\newcommand{\reachProb}{P}
\newcommand{\chance}{c}
\newcommand{\Histories}{\mathcal{H}}
\newcommand{\TerminalHistories}{\mathcal{Z}}
\newcommand{\playerChoice}{\mathcal{P}}
\newcommand{\emptyHistory}{\varnothing}
\newcommand{\termValue}{\reward}
\newcommand{\cfIv}{v}
\newcommand{\cfv}{\cfIv}
\newcommand{\immStrat}{\sigma}
\newcommand{\DevSet}{\Phi}
\newcommand{\dev}{\phi}
\newcommand{\TARGET}{\odot}
\newcommand{\TRIGGER}{\text{!}}
\DeclareRobustCommand\onedot{\futurelet\@let@token\@onedot}
\def\@onedot{\ifx\@let@token.\else.\null\fi\xspace}
\def\eg/{\emph{e.g}\onedot} \def\Eg/{\emph{E.g}\onedot}
\def\ie/{\emph{i.e}\onedot} \def\Ie/{\emph{I.e}\onedot}
\def\cf/{\emph{c.f}\onedot} \def\Cf/{\emph{C.f}\onedot}
\def\vs/{\emph{vs}\onedot} \def\Vs/{\emph{Vs}\onedot}
\def\etc/{\emph{etc}\onedot}
\def\wrt/{w.r.t\onedot} \def\dof/{d.o.f\onedot}
\def\etal/{\emph{et al}\onedot}
\def\viceversa/{\emph{vice-versa}}
\def\ow/{\emph{o.w}\onedot}
\def\whp/{w.h.p\onedot}
\def\apriori/{\emph{a priori}} \def\Apriori/{\emph{A priori}}
\def\ala/{\`{a} la}
\def\naive/{na\"{\i}ve} \def\Naive/{Na\"{\i}ve}
\def\rmPlus/{regret matching\textsuperscript{+}}
\def\rrmPlus/{RRM\textsuperscript{+}}
\def\rcfrPlus/{RCFR\textsuperscript{+}}
\def\cfrPlus/{CFR\textsuperscript{+}}
\def\NashConv/{\textsc{NashConv}}
\def\NashConvAUC/{$\overline{\textsc{NashConv}}$}
\newcommand{\textbxf}[1]{{\fontseries{b}\selectfont #1}}
\def\efceFootnote/{3}
 \def\EventX/{X}
\def\EventY/{Y}
\def\Upgrade/{U}
\def\NotUpgrade/{$\neg$\Upgrade/}
\def\mEventX{\text{\EventX/}}
\def\mEventY{\text{\EventY/}}
\def\mUpgrade{\text{\Upgrade/}}
\def\mNotUpgrade{\neg\mUpgrade}
\def\Rock/{R}
\def\Paper/{P}
\def\Scissors/{S}
\def\PredRock/{\Rock/?}
\def\PredNotRock/{$\neg$\PredRock/}
\def\Heads/{H}
\def\Tails/{T}
\def\Match/{M}
\def\NotMatch/{$\neg$\Match/}
\def\mHeads{\text{\Heads/}}
\def\mTails{\text{\Tails/}}
\def\mMatch{\text{\Match/}}
\def\mNotMatch{\neg\text{\Match/}}
\def\Play/{P}
\def\mPlay{\text{\Play/}}
\def\mNotPlay{\neg\mPlay}
\def\NotPlay/{$\mNotPlay$}
\def\EvColDesc/{The ``EV'' column shows the expected utility of each deviation under a uniform distribution over the two recommendations.}
\def\ArrowDescriptions/{The line of play is traced out by bold arrows leading to a single outcome highlighted in a rectangle.
Black arrows are recommendations, red arrows are deviations, and grey arrows are leftover actions.}
\def\InfoSetDesc/{Dashed lines indicate that each of player two's histories are in the same information set.}
\def\PlayerTwoDesc/{Player two's behavior is fixed given a recommendation and only serves to determine player one's value so it is shown separately at the top of the figure.
}
\def\botsExampleFollowTrees{
  \node(follow-Label) [behaveLabel] {always\\follow}; \&
  \node(follow-1) [state] {};
  \devRecTreeNodeCoords{follow-1};
  \deviationExampleStates{follow-1}{}{};
  \botsExampleTerminalsOne{follow-1}{}{}{}{draw=black};
  \devRecRootEdges{follow-1}{\Upgrade/}{alt}{\NotUpgrade/}{follow};
  \devRecLEdges{follow-1}{\EventX/}{alt}{\EventY/}{follow};
  \devRecREdges{follow-1}{\EventX/}{alt}{\EventY/}{follow}; \&

  \node(follow-2) [state] {};
  \devRecTreeNodeCoords{follow-2};
  \node(follow-2L) [state] at (follow-2LCoord) {};
  \deviationExampleStates{follow-2}{}{};
  \botsExampleTerminalsTwo{follow-2}{}{}{draw=black}{};
  \devRecRootEdges{follow-2}{\Upgrade/}{alt}{\NotUpgrade/}{follow};
  \devRecLEdges{follow-2}{\EventX/}{follow}{\EventY/}{alt};
  \devRecREdges{follow-2}{\EventX/}{follow}{\EventY/}{alt}; \&
  \node(follow-Value) {$+1.5$};
}
\def\mpExampleFollowTrees{
  \node(follow-Label) [behaveLabel] {always\\follow}; \&
  \node(follow-1) [state] {};
  \devRecTreeNodeCoords{follow-1};
  \deviationExampleStates{follow-1}{}{};
  \mpExampleTerminalsOne{follow-1}{}{}{}{draw=black};
  \devRecRootEdges{follow-1}{M}{alt}{$\neg\text{M}$}{follow};
  \devRecLEdges{follow-1}{H}{follow}{T}{alt};
  \devRecREdges{follow-1}{H}{alt}{T}{follow};
  \&

  \node(follow-2) [state] {};
  \devRecTreeNodeCoords{follow-2};
  \deviationExampleStates{follow-2}{}{};
  \mpExampleTerminalsTwo{follow-2}{draw=black}{}{}{};
  \devRecRootEdges{follow-2}{M}{follow}{$\neg\text{M}$}{alt};
  \devRecLEdges{follow-2}{H}{follow}{T}{alt};
  \devRecREdges{follow-2}{H}{alt}{T}{follow};
  \&
  \node(follow-Value) {0};
}
 \definecolor{offWhite}{RGB}{240,240,240}
\definecolor{grey}{RGB}{180,180,180}
\definecolor{darkgreen}{RGB}{0,125,0}
\definecolor{lime}{RGB}{255,200,0}
\definecolor{amiiBlue}{RGB}{16,72,118}
\definecolor{amiiPink}{RGB}{241,97,119}
\definecolor{amiiYellow}{RGB}{248,209,109}
\definecolor{amiiPurple}{RGB}{123,105,145}
\colorlet{yes}{cyan!50!white}
\colorlet{newYes}{cyan!75!white}
\colorlet{no}{red!50!white}
\colorlet{newNo}{red!75!white}
\newcommand{\eqTableEntry}[2]{
  \cellcolor{#1}{\textcolor{black}{#2}}
}
  \newtheorem{theorem}{Theorem}
  \newtheorem{lemma}{Lemma}
  \newtheorem{corollary}{Corollary}
  \newtheorem{definition}{Definition}
\DeclareMathSymbol{\negative}{\mathbin}{AMSa}{"39}
\newcommand\safeIncCounter[1]{\@ifundefined{c@#1}{\newcounter{#1}\stepcounter{#1}}{\stepcounter{#1}}}
\def\gTwoFive/{g\textsubscript{2, 5, $\uparrow$}}
\def\gThreeFour/{g\textsubscript{3, 4, $\uparrow$}}
  \def\correctionBlockDescription/{The changes to the original paper pertaining to this correction can be found within labeled and numbered ``correction'' blocks.
    A brief description of the specific change within a given correction block can be found at the start of that block.}
  \def\correctionBlockDescription/{The changes to the original paper pertaining to this correction are highlighted with labeled and numbered ``correction'' blocks in the separate report \textcite{\correctionPaperReferenceKey/}.}
\newcommand{\hsrIntroCorrectionsFootnote}{
  \footnote{This paper is changed from its original version published as \textcite{hsr2020}.
    That version lacks an example of a beneficial external deviation in a CFCE and asserts that counterfactual deviations subsume external deviations without qualification.
    This assertion is incorrect, as the example illustrates.
    Instead, observable sequential rationality is required to ensure that counterfactual deviations subsume external deviations.
    \correctionBlockDescription/
  }
}
\def\cmToEmFactor{2.3710630158366}
\def\shortAYLength{0.5cm}
\tikzset{
  influenceArrow/.style={>=Triangle, -{>[scale=#1]}},
  influenceArrow/.default=0.4
}
\tikzstyle{devColor} = [draw=red]
\tikzstyle{dev} = [devColor, very thick]
\tikzstyle{followColor} = [draw=black]
\tikzstyle{follow} = [followColor, very thick]
\tikzstyle{infoColor} = [draw=cyan]
\tikzstyle{info} = [infoColor, very thick]
\tikzstyle{infoArrow} = [influenceArrow, thin, infoColor]
\tikzstyle{alt} = [draw=grey]
\tikzstyle{zeroProb} = [draw=grey]
\tikzstyle{rec} = [draw=black, densely dashed, very thick]
\def\stateMinimumRadius{0.18cm}
\tikzstyle{state} = [circle, draw=black, minimum size=2*\stateMinimumRadius, inner sep=0.5mm, fill=white]
\tikzstyle{util} = [inner sep=1mm]
\tikzstyle{behaveLabel} = [text width=2cm]
\tikzstyle{actionArrow} = [thick, >=Stealth, -{>[scale=0.7]}]
\tikzstyle{showPoint} = [shape=circle, fill=black, minimum size=0.3em]
\newcommand{\drawEdgeL}[5][]{
  \draw[actionArrow, #5]
    (#3)
    --
    node[right, xshift=0.5mm] {\small #1}
    node[anchor=south east, xshift=1mm, yshift=1mm] {\small #2} (#4);
}
\newcommand{\drawEdgeR}[5][]{
  \draw[actionArrow, #5]
    (#3)
    --
    node[left, xshift=-0.5mm] {\small #1}
    node[anchor=south west, xshift=-1mm, yshift=1mm] {\small #2} (#4);
}
\newcommand{\deviationExampleMatrix}{
  \tikzstyle{column 1} = [anchor=north west]
  \tikzstyle{column 2} = [anchor=north]
  \tikzstyle{column 3} = [anchor=north]
  \tikzstyle{column 4} = [anchor=north]
  \tikzstyle{row 1 column 4} = [anchor=north]
  \matrix[row sep=2mm, column sep=0.4cm, ampersand replacement=\&]
}
\newcommand{\thinDeviationExampleMatrix}{
  \tikzstyle{column 1} = [anchor=north]
  \tikzstyle{column 2} = [anchor=north]
  \matrix[column sep=0.1cm, ampersand replacement=\&]
}
\newcommand{\devRecTreeTerminalCoords}[1]{
  \coordinate(#1LCoord) at ($(#1Coord.south west)+(-\cmToEmFactor*0.7em,-\cmToEmFactor*0.75em)$);
  \coordinate(#1RCoord) at ($(#1Coord.south east)+(\cmToEmFactor*0.7em,-\cmToEmFactor*0.75em)$);
}
\newcommand{\devRecTreeInternalCoords}[1]{
  \coordinate(#1LCoord) at ($(#1.south west)+(-\cmToEmFactor*0.9em,-\cmToEmFactor*0.35em)$);
  \coordinate(#1RCoord) at ($(#1.south east)+(\cmToEmFactor*0.9em,-\cmToEmFactor*0.35em)$);
}
\newcommand{\devRecTreeNodeCoords}[1]{
  \devRecTreeInternalCoords{#1};
  \devRecTreeTerminalCoords{#1L};
  \devRecTreeTerminalCoords{#1R};
}
\newcommand{\devRecInternalEdges}[5]{
  \drawEdgeL{#2}{#1}{#1L}{#3};
  \drawEdgeR{#4}{#1}{#1R}{#5};
}
\newcommand{\devRecRootEdges}[5]{
  \devRecInternalEdges{#1}{#2}{#3}{#4}{#5};
}
\newcommand{\devRecTerminalEdges}[5]{
  \drawEdgeL{#2}{#1}{#1L.north}{#3};
  \drawEdgeR{#4}{#1}{#1R.north}{#5};
}
\newcommand{\devRecLEdges}[5]{
  \devRecTerminalEdges{#1L}{#2}{#3}{#4}{#5};
}
\newcommand{\devRecREdges}[5]{
  \devRecTerminalEdges{#1R}{#2}{#3}{#4}{#5};
}
\newcommand{\rowSepLine}[2][thick]{
  \draw [line cap=rect,#1]
    ($(#2-Label.north west)+(0,1mm)$)
    --
    ($(#2-Value.north east)+(0,1mm)$);
}
\newcommand{\mpExampleHeader}{
  \node {\textbf{behavior}}; \&

  \node(rec1Label) {\textbf{recommendation 1:}};
  \node(rec1P2State) [state, right=0.5cm of rec1Label.east, anchor=west] {2};
  \node(rec1P2H) [below left=\shortAYLength and 0.4cm of rec1P2State.south west] {};
  \node(rec1P2T) [below right=\shortAYLength and 0.4cm of rec1P2State.south east] {};
  \drawEdgeL{H}{rec1P2State}{rec1P2H}{follow};
  \drawEdgeR{T}{rec1P2State}{rec1P2T}{alt}; \&

  \node(rec2Label) {\textbf{recommendation 2:}};
  \node(rec2P2State) [state, right=0.5cm of rec2Label.east, anchor=west] {2};
  \node(rec2P2H) [below left=\shortAYLength and 0.4cm of rec2P2State.south west] {};
  \node(rec2P2T) [below right=\shortAYLength and 0.4cm of rec2P2State.south east] {};
  \drawEdgeL{H}{rec2P2State}{rec2P2H}{alt};
  \drawEdgeR{T}{rec2P2State}{rec2P2T}{follow}; \&

  \node(valueLabel) {\textbf{EV}};
}
\newcommand{\botsExampleHeader}{
  \node {\textbf{behavior}}; \&

  \node(rec1Label) {\textbf{recommendation 1:}};
  \node(rec1P2State) [state, right=0.5cm of rec1Label.east, anchor=west] {2};
  \node(rec1P2H) [below left=\shortAYLength and 0.4cm of rec1P2State.south west] {};
  \node(rec1P2T) [below right=\shortAYLength and 0.4cm of rec1P2State.south east] {};
  \drawEdgeL{X}{rec1P2State}{rec1P2H}{zeroProb};
  \drawEdgeR{Y}{rec1P2State}{rec1P2T}{follow}; \&

  \node(rec2Label) {\textbf{recommendation 2:}};
  \node(rec2P2State) [state, right=0.5cm of rec2Label.east, anchor=west] {2};
  \node(rec2P2H) [below left=\shortAYLength and 0.4cm of rec2P2State.south west] {};
  \node(rec2P2T) [below right=\shortAYLength and 0.4cm of rec2P2State.south east] {};
  \drawEdgeL{X}{rec2P2State}{rec2P2H}{follow};
  \drawEdgeR{Y}{rec2P2State}{rec2P2T}{zeroProb}; \&

  \node(valueLabel) {\textbf{EV}};
}
\newcommand{\deviationExampleStates}[3]{
  \node(#1L) [state, #2] at (#1LCoord) {};
  \node(#1R) [state, #3] at (#1RCoord) {};
}
\newcommand{\mpExampleTerminalsOne}[5]{
  \node(#1LL) [util, #2] at (#1LLCoord) {\small $+1$};
  \node(#1LR) [util, #3] at (#1LRCoord) {\small $-1$};
  \node(#1RL) [util, #4] at (#1RLCoord) {\small $-1$};
  \node(#1RR) [util, #5] at (#1RRCoord) {\small $+1$};
}
\newcommand{\mpExampleTerminalsTwo}[5]{
  \node(#1LL) [util, #2] at (#1LLCoord) {\small $-1$};
  \node(#1LR) [util, #3] at (#1LRCoord) {\small $+1$};
  \node(#1RL) [util, #4] at (#1RLCoord) {\small $+1$};
  \node(#1RR) [util, #5] at (#1RRCoord) {\small $-1$};
}
\newcommand{\botsExampleTerminalsOne}[5]{
  \node(#1LL) [util, #2] at (#1LLCoord) {\small $0$};
  \node(#1LR) [util, #3] at (#1LRCoord) {\small $+3$};
  \node(#1RL) [util, #4] at (#1RLCoord) {\small $0$};
  \node(#1RR) [util, #5] at (#1RRCoord) {\small $+2$};
}
\newcommand{\botsExampleTerminalsTwo}[5]{
  \node(#1LL) [util, #2] at (#1LLCoord) {\small $+2$};
  \node(#1LR) [util, #3] at (#1LRCoord) {\small $0$};
  \node(#1RL) [util, #4] at (#1RLCoord) {\small $+1$};
  \node(#1RR) [util, #5] at (#1RRCoord) {\small $0$};
}
 \tikzset{
  seqPath/.style args={segment length #1 amplitude #2}{
    very thick,
    line join=round,
    decorate,
    decoration={zigzag, segment length=#1, amplitude=#2}},
  seqPath/.default={segment length 0.2cm amplitude 1mm}
}
\def\actLen{\cmToEmFactor*0.5em}
\def\seqLen{\actLen}
\tikzset{
    absDecisionDag/.style args={size #1}{very thick, inner sep=0em, fill=white, regular polygon, regular polygon sides=3, minimum size=#1+#1/3.4641},
    absDecisionDag/.default={size \seqLen}
}
\newcommandx{\inTrans}[3][1=\actLen, 2=very thick]{
  \def\inTransActLen{#1}
  \def\inTransActionStyle{#2}
  \def\inTransRoot{#3}

\coordinate(\inTransRoot/dev) at ($(\inTransRoot)-(0,\inTransActLen)$);
  \coordinate(\inTransRoot/info) at ($(\inTransRoot/dev)+(-\inTransActLen,0.05em)$);
  \coordinate(\inTransRoot/influenceSource)
    at ($(\inTransRoot)!0.8!(\inTransRoot/info)+(0.1em, 0)$);
  \coordinate(\inTransRoot/south)
    at ($(\inTransRoot/dev)!0.5!(\inTransRoot/info)$);
  \coordinate(\inTransRoot/north)
    at (\inTransRoot -| \inTransRoot/south);
  \coordinate(\inTransRoot/center)
    at ($(\inTransRoot/north)!0.5!(\inTransRoot/south)$);

  \draw[\inTransActionStyle, rounded corners]
    (\inTransRoot/info)
    edge[info]
    ($(\inTransRoot)-(0,0.05em)$)
    (\inTransRoot)
    edge[dev]
    (\inTransRoot/dev);
}
\newcommandx{\seqInTrans}[3][1=follow, 2=\seqLen]{
  \def\seqInTransStyle{#1}
  \def\seqInTransSeqLen{#2}
  \def\seqInTransRoot{#3}

\coordinate(\seqInTransRoot/seq) at ($(\seqInTransRoot.south)-(0,\seqInTransSeqLen)$);
  \inTrans{\seqInTransRoot/seq}

  \draw[seqPath, \seqInTransStyle] (\seqInTransRoot.north) -- (\seqInTransRoot/seq);
}
\tikzstyle{tree} = [absDecisionDag={size \seqLen}]
\newcommand{\inDevDiagram}[1]{
  \def\root{#1}
  \node(\root/recState)
    [tree, info, anchor=north]
    at ($(\root.south)+(-\cmToEmFactor*0.3em,0)$)
    {};

  \node(\root/devState)
    [tree, dev, anchor=north]
    at ($(\root.south)+(\cmToEmFactor*0.3em,0)$)
    {};
  \draw[infoArrow, bend left=50]
    (\root/recState)
    edge[infoColor]
    (\root/devState);

\coordinate(\root/south) at (\root.north |- \root/recState.south);
  \coordinate(\root/center) at ($(\root.north)!0.5!(\root/south)$);
  \coordinate(\root/west) at (\root/center -| \root/recState.west);
}
\newcommand{\inCsDevDiagram}[1]{
  \def\root{#1}
  \seqInTrans{\root};

  \node(\root/devTree) [tree, devColor, anchor=north] at (\root/seq/dev) {};

\coordinate(\root/south) at (\root/devTree.south);
  \coordinate(\root/center) at ($(\root.north)!0.5!(\root/south)$);
  \coordinate(\root/north east) at (\root.north -| \root/devTree.south east);
}
\newcommand{\inCfDevDiagram}[1]{
  \def\root{#1}
  \seqInTrans[dev]{\root};
  \node(\root/followTree)
    [tree, follow, anchor=north]
    at (\root/seq/dev)
    {};

\coordinate(\root/south) at (\root/followTree.south);
  \coordinate(\root/center) at ($(\root.north)!0.5!(\root/south)$);
}
\newcommand{\inActDevDiagram}[1]{
  \def\root{#1}
  \seqInTrans{\root};
  \node(\root/followTree)
    [tree, follow, anchor=north]
    at (\root/seq/dev)
    {};

\coordinate(\root/south) at (\root/followTree.south);
  \coordinate(\root/center) at ($(\root.north)!0.5!(\root/south)$);
}
\newif\csname ifGin@setpagesize\endcsname
\newcommand{\todonote}[4][inline]{\safeIncCounter{#2NoteCounter}
  \todo[color=offWhite,bordercolor=#3,linecolor=#3,#1]{\textbf{\uppercase{#2}$_{\arabic{#2NoteCounter}}$:}~#4}}
\newcommand{\replaced}[3]{\def\counterPrefix{#1}
  \def\arrowMarker{#2}
  \def\replacedText{#3}
  \todo[color=offWhite,bordercolor=red,inline]{$\bs{\arrowMarker}$ \textbf{Replaced (\arabic{Replaced\counterPrefix{}NoteCounter})} \replacedText }}
\newcommand{\replacedStart}[2]{\def\user{#1}
  \def\text{#2}
  \safeIncCounter{Replaced#1NoteCounter}\replaced{\user}{\downarrow}{\text}}
\newcommand{\replacedEnd}[1]{\def\user{#1}
  \replaced{\user}{\uparrow}{}}
\newcommand{\issue}[3]{\todo[color=black,inline]{\textcolor{white}{$\bs{#2}$ \textbf{Issue \##3} (Part \arabic{Issue#3NoteCounter}) #1}}}
\newcommand{\issueChangeStart}[2][]{\safeIncCounter{Issue#2NoteCounter}\issue{#1}{\downarrow}{#2}}
\newcommand{\issueChangeEnd}[2][]{\issue{#1}{\uparrow}{#2}}
\newcommand{\correction}[2]{\def\arrowMarker{#1}
  \def\correctionLabel{#2}
  \todo[color=offWhite,bordercolor=red,inline]{$\bs{\arrowMarker}$ \textbf{Correction (\arabic{CorrectionNoteCounter})} \correctionLabel }}
\newcommand{\correctionStart}[1]{\def\correctionStartLabel{#1}
  \safeIncCounter{CorrectionNoteCounter}
  \correction{\downarrow}{\correctionStartLabel}}
\newcommand{\correctionEnd}{\correction{\uparrow}{}}
 \renewcommand{\todonote}[4][inline]{\ignorespaces}
\renewcommand{\issueChangeStart}[2][]{\ignorespaces}
\renewcommand{\issueChangeEnd}[2][]{\ignorespaces}
\renewcommand{\replacedStart}[2][]{\ignorespaces}
\renewcommand{\replacedEnd}[1][]{\ignorespaces}
\renewcommand{\correctionStart}[2][]{\ignorespaces}
\renewcommand{\correctionEnd}[1][]{\ignorespaces}
\begin{document}
\maketitle

\frenchspacing  

\begin{abstract}
  Driven by recent successes in two-player, zero-sum game solving and playing, artificial intelligence work on games has increasingly focused on algorithms that produce equilibrium-based strategies.
  However, this approach has been less effective at producing competent players in general-sum games or those with more than two players than in two-player, zero-sum games.
  An appealing alternative is to consider adaptive algorithms that ensure strong performance in hindsight relative to what could have been achieved with modified behavior.
  This approach also leads to a game-theoretic analysis, but in the correlated play that arises from joint learning dynamics rather than factored agent behavior at equilibrium.
  We develop and advocate for this \emph{hindsight rationality} framing of learning in general sequential decision-making settings.
  To this end, we re-examine \emph{mediated equilibrium} and \emph{deviation types} in \emph{extensive-form games}, thereby gaining a more complete understanding and resolving past misconceptions.
  We present a set of examples illustrating the distinct strengths and weaknesses of each type of equilibrium in the literature, and prove that no tractable concept subsumes all others.
  This line of inquiry culminates in the definition of the deviation and equilibrium classes that correspond to algorithms in the \emph{counterfactual regret minimization} (\emph{CFR}) family, relating them to all others in the literature.
  Examining CFR in greater detail further leads to a new recursive definition of rationality in correlated play that extends \emph{sequential rationality} in a way that naturally applies to hindsight evaluation.
\end{abstract}

\section{Introduction}
\noindent
Algorithms to approximate maximin or Nash equilibrium strategies have fueled major successes in the autonomous play of human-scale games like
Go~\parencite{Silver16Go,Silver17AGZ,Silver18AlphaZero},
chess and shogi~\parencite{Silver17AChess,Silver18AlphaZero},
poker~\parencite{moravvcik2017deepstack,brown2018superhuman,brown2019superhuman} and StarCraft~\parencite{vinyals2019grandmaster}.
With the exception of multi-player poker addressed by \textcite{brown2019superhuman}, all of these are two-player, zero-sum games.
Less success has been achieved in games outside this space.

While both maximin and Nash equilibrium concepts apply in general-sum, multi-player games, it is not clear that either should be fielded against unknown strategies in these situations.
The maximin concept assumes all other players collude against you, which can produce overly pessimistic strategies.
Nash equilibria make the less pessimistic assumption that each player is rational and independent, but strategies from such equilibria have no performance guarantees against arbitrary strategies and are hard to compute.
Perhaps one reason that AI has struggled to find the same success in multi-agent, general sum games is the lack of a strong theoretical base like that provided by the maximin objective in zero-sum games.

An appealing alternative is to consider adaptive algorithms that ensure strong performance in hindsight relative to what could have been achieved with modified behavior (\eg/, \textcite{hannan1957approximation}).
We advocate for a \emph{hindsight rationality} framing of learning in sequential decision-making settings where rational correlated play, rather than factored equilibrium behavior, arises naturally through repeated play.
Hindsight rationality is the dynamic interpretation of static \emph{mediated equilibrium}~\parencite{aumann1974ce} concepts that have led to elegant decentralized learning algorithms like \emph{regret matching}~\parencite{Hart00} and connections between learning and statistical calibration~\parencite{foster1997calibrated}.

Mediated equilibria in EFGs have gained increasing interest recently with the introduction of
\emph{extensive-form (coarse\nobreakdash-)correlated equilibrium} (\emph{EF(C)CE})~\parencite{forges2002efce,von2008efce-complexity,farina2020efcce}
and \emph{agent-form~\parencite{selten1974af-and-trembling-hand-eq} (coarse\nobreakdash-)correlated equilibrium} (\emph{AF(C)CE})~\parencite{von2008efce-complexity},
along with related algorithms~\parencite{celli2019learning,celli2020noregret}.
However, there has yet to be a clear description of the general mediated equilibrium landscape in EFGs beyond \emph{causal} and \emph{action} \emph{deviations}~\parencite{von2008efce-complexity,Dudik09causal-dev}.
There has also yet to be a clear proposal about how this line of inquiry can be used to design more competent game playing algorithms.
We show how hindsight rationality operationalizes these concepts in concrete \emph{regret minimization} objectives while preserving equilibria as descriptions of jointly rational learning.

We present a more complete picture of the relationships between the deviation types and equilibrium classes discussed in the literature and resolve the misconception that all EFCEs are AFCEs\footnote{We use the definition of EFCE as used in the majority of the literature in artificial intelligence, which differs from von Stengel's intended definition (Personal communication 2020). See Footnote \efceFootnote/.}.
A central component of this analysis is a set of examples illustrating the distinct strengths and weaknesses of each type of equilibrium.
The resulting table of equilibrium relationships reveals that while some such concepts are stronger than others, no tractable concept subsumes all others.

\correctionStart{Fix wording and add footnote.}
We define \emph{counterfactual deviations} to explicitly describe CFR's underlying deviation type.
We illustrate the unique strengths and weaknesses of counterfactual deviations in examples where there is a beneficial counterfactual deviation in an EFCE and an AFCE, and where there are beneficial causal, action, and external deviations in a CFCE.

A core feature of CFR is its recursive structure, and we present a new extension of sequential rationality~\parencite{KrepsWilson82} to correlated play in order to better characterize CFR's behavior.
This \emph{observable sequential rationality} applies naturally to hindsight evaluation and thus algorithms like CFR.
Observable sequential rationality turns out to be the key ingredient that elevates the strength of counterfactual deviations so that they subsume external deviations.\hsrIntroCorrectionsFootnote{}We show that CFR is observably sequentially hindsight rational with respect to counterfactual deviations and provide an example where a \emph{correlated equilibrium} (\emph{CE})~\parencite{aumann1974ce} is not an observable sequential \emph{counterfactual coarse-correlated equilibrium} (\emph{CFCCE}).
Considering that it is generally intractable to compute a CE in EFGs, it is perhaps surprising that CFR, a simple and efficient algorithm, behaves in accordance with an equilibrium class that is not subsumed by a CE.
\correctionEnd
 
\section{Extensive-Form Games}
As we are interested in multi-agent, sequential decision-making, we use the \emph{extensive-form game} (\emph{EFG}) framework as the basis of our technical discussions.

The set of finite-length \emph{histories}, $\Histories$, defines all ways that an EFG can partially play out, \ie/, the game begins with the \emph{empty history}, $\emptyHistory$, and given a history $h$, the successor history after taking \emph{action} $a \in \Actions(h)$ is $ha$.
$\Actions(h)$ is the finite set of actions available at $h$ for the player acting at $h$, determined by the \emph{player function},
$\playerChoice : \Histories \setminus \TerminalHistories \to \set{1, \dots, N} \cup \set{\chance}$,
where $N > 0$ is the number of players and $\chance$ is the \emph{chance player}.
Let $\Histories_i = \set{ h \in \Histories \where \playerChoice(h) = i }$ be the set of histories where player $i$ must act.
We write $h \sqsubseteq h'$ to state that $h$ is a predecessor of $h'$.
A \emph{terminal history}, $z \in \TerminalHistories \subseteq \Histories$, has an empty action set and no successors, and each history must lead to at least one terminal history (since all histories have finite length), at which point rewards are given to each player according to a bounded \emph{utility function}, $\utility_i : \TerminalHistories \to [-\maxReward, \maxReward]$.

To model imperfect information, histories are partitioned into \emph{information sets},
$\InfoSets_i = \{ \infoSet \where \infoSet \subseteq \Histories_i \}$,
where players are unable to distinguish between histories within an information set.
Thus, the set of actions at each history within an information set must, must coincide, \ie/,
$\Actions(\infoSet) \as \Actions(h) = \Actions(h')$,
for any $h, h' \in \infoSet$.
We restrict ourselves to \emph{perfect-recall} information partitions, which are those that ensure players never forget the information sets they encounter during play.
Under this assumption, each player's information-set transition graph forms a forest (not a tree since other players may act first) and are partially ordered.
We use $\infoSet \preceq \infoSet'$ to denote that $\infoSet \in \InfoSets_i$ is a predecessor of $\infoSet' \in \InfoSets_i$.

A \emph{pure strategy} for player $i$ maps each information set to an action.
We denote the finite set of such strategies as
$\PureStratSet_i = \set{ \pureStrat_i : \pureStrat_i(\infoSet) \in \Actions(\infoSet), \, \forall \infoSet \in \InfoSets_i }$
and pure \emph{strategy profiles} with a strategy for each player by
$\PureStratSet = \bigtimes_{i \in \set{1, \dots, N, c}} \PureStratSet_i$.
If we let $-i$ denote the set of players other than player $i$,
we can likewise define $\PureStratSet_{-i} = \bigtimes_{j \neq i} \PureStratSet_j$ as the set of strategies for all players except $i$.

Randomized behavior can be represented with a \emph{mixed strategy},
$\strat_i \in \StrategySet_i = \simplex^{\abs{\PureStratSet_i}}$,
which is a distribution over pure strategies, or a \emph{behavioral strategy}~\parencite{Kuhn53}, which is an assignment of \emph{immediate strategies},
$\strat_i(\infoSet) \in \simplex^{\abs{\Actions(\infoSet)}}$,
to each information set\footnote{$\simplex^d$ is the $d$-dimensional probability simplex.}.
Perfect recall ensures \emph{realization equivalence} between the set of mixed and behavioral strategies~\parencite{Kuhn53}.
That is, for every mixed strategy there is a behavioral strategy, and \viceversa/, that applies the same weight to each terminal history.
Thus, we treat mixed and behavioral strategies as interchangeable representations where $\strat_i(\pureStrat_i)$ is the probability of sampling pure strategy $\pureStrat_i$ and $\strat_i(a \given \infoSet)$ is the probability of playing action $a$ in information set $\infoSet$.
Noticing as well that a pure strategy is just a behavioral strategy with deterministic immediate strategies, we default to the behavioral representation unless otherwise specified.

The \emph{reach probability function} $\reachProb$ describes the probability of playing from one history to another, following from the familiar chain rule of probability.
Let $\infoSet(h)$ denote the unique information set that contains history $h$.
Then the probability that player $i$ plays from $h$ to $h'$, through $ha$ when $h \ne h'$, according to strategy $\strat_i$, is
\begin{align*}
  \reachProb(h, h'; \strat_i) = \begin{dcases}
    1 &\mbox{if } h = h'\\
    0 &\mbox{if } h \not\sqsubseteq h'\\
    \strat_i(a \given \infoSet(h)) \reachProb(ha, h'; \strat_i) &\mbox{if } \playerChoice(h) = i\\
    \reachProb(ha, h'; \strat_i) &\mbox{o.w.}
  \end{dcases}
\end{align*}
The probability of reaching $h$ from the beginning of the game is $\reachProb(h; \strat_i) \as \reachProb(\emptyHistory, h; \strat_i)$,
and we overload $\reachProb$ for strategy tuples, \ie/,
$\reachProb(h, h'; \strat_{-i}) \as \prod_{j \neq i} \reachProb(h, h'; \strat_j)$
is the probability that players except for $i$ play from $h$ to $h'$ and
$\reachProb(h, h'; \strat) \as \reachProb(h, h'; \strat_i) \reachProb(h, h'; \strat_{-i})$.
The terminal reach probabilities of a strategy profile describes its distribution over game outcomes, so the expected utility for player $i$ is
$\utility_i(\strat) = \sum_{z \in \TerminalHistories} \reachProb(z; \strat) \utility_i(z)$.
 
\section{Factored, Correlated, and Online Play}
Factored play is that where players act entirely independently from one another.
A \emph{Nash equilibrium}~\parencite{Nash1951} models jointly rational play under such factored play.
It is a mixed strategy profile where no player can benefit from a unilateral deviation, \ie/, $\strat$ where
$\utility_i(\strat'_i, \strat_{-i}) \le \utility_i(\strat)$ for all $\strat'_i$.

In a two-player, zero-sum game, every Nash equilibrium is a pair of \emph{maximin strategies}, which are those that maximize the player's minimum value against any opponent.
In games with more than two players or general-sum payoffs, a strategy from a Nash equilibrium does not confer the same guarantee.
Additionally, a maximin strategy for player $i$ makes the pessimistic assumption that all other players will collude against $i$.
These deficiencies leave us wanting for a theoretical framework that is better suited to multi-player, general-sum games, upon which to design game playing algorithms.

We suggest that \emph{hindsight rationality}, which is the idea of learning correlated play over time that approaches optimality in hindsight, can serve as such a framework.
This concept arises from an old connection between \emph{mediated equilibria}~\parencite{aumann1974ce} and \emph{online learning} (\eg/, \textcite{hannan1957approximation}).

\subsection{Mediated Equilibria}
A mediated equilibrium is a generalization of Nash equilibrium to correlated play.
It is a \emph{pure strategy profile distribution}--\emph{deviation set} profile  pair,
\begin{align*}
  \tuple*{\recDist \in \simplex^{\abs{\PureStratSet}}, \tuple*{\DevSet_i \subseteq \set*{ \dev \where \dev : \PureStratSet_i \to \PureStratSet_i }}_{i = 1}^N},
\end{align*}
where there is no beneficial unilateral deviation in any player's deviation set.
The benefit to player $i$ of a deviation is the expected utility under the deviation compared to that of following the \emph{recommendations} (strategies) sampled for player $i$ from the recommendation distribution $\recDist$, \ie/,
$\E_{\pureStrat \sim \recDist}[\utility_i(\dev(\pureStrat_i), \pureStrat_{-i}) - \utility_i(\pureStrat)]$.
A \emph{rational} player follows the recommendations from a mediated equilibrium since their value is optimal (with respect to deviation set $\DevSet_i$) given the play of the other players.
The deviation set constrains the extent to which individual strategy modifications can condition on input strategies, thereby providing a mechanism for varying the strength and character of equilibrium rationality constraints.

For example, an equilibrium recommendation distribution under the set of \emph{external transformations}, which are those that ignore their inputs, $\DevSet^{\EXT}_{\PureStratSet_i} = \set{ \dev \where \exists \pureStrat'_i, \, \dev(\pureStrat_i) = \pureStrat'_i, \, \forall \pureStrat_i}$, is a \emph{coarse-correlated equilibrium} (\emph{CCE})~\parencite{moulin1978cce}.
In contrast, a \emph{correlated equilibrium} (\emph{CE})~\parencite{aumann1974ce} is one where the deviation set is unconstrained, which ensures that all CEs are CCEs.
The unconstrained set of deviations is called the set of \emph{swap transformations} and is denoted $\DevSet^{\SWAP}_{\PureStratSet_i}$
It turns out that the smaller set of \emph{internal transformations}~\parencite{foster1999int-regret}, which are those that act like the identity transformation except given one particular strategy, \ie/,
\begin{align*}
  \DevSet^{\INT}_{\PureStratSet_i} = \set{
    \dev \where
      \exists \pureStrat_i, \pureStrat'_i, \,
      \dev(\pureStrat_i) = \pureStrat'_i, \,
      \dev(\pureStrat''_i) = \pureStrat''_i, \,
      \forall \pureStrat''_i \neq \pureStrat_i
    },
\end{align*}
also corresponds to CE~\parencite{greenwald2003general}.
In spite of this size reduction, computing a CE in EFGs is intractable since the pure strategy space grows exponentially with the size of the game.
 
\subsection{Online Learning}

In an online learning setting, a learner repeatedly plays a game with unknown, dynamic, possibly adversarial players.
On each round $1 \le t \le T$, the learner who acts as player $i$ chooses a behavioral strategy, $\strat_i^t$, simultaneously with the other players who in aggregate choose $\strat^t_{-i}$.
The learner is then evaluated based on the expected value they achieve under the selected strategy profile, $\utility_i(\strat^t)$.

In this context, we can consider deviations to the learner's play in hindsight.
A swap transformation generates a corresponding mixed strategy transformation, so we overload $\dev(\strat_i)$ as the mixed strategy that assigns probability
$[\dev\strat_i](\pureStrat_i) = \sum_{\pureStrat'_i, \dev(\pureStrat'_i) = \pureStrat_i} \strat_i(\pureStrat'_i)$ to each $\pureStrat_i$ (where $[\dev\strat_i] \as \dev(\strat_i)$).
The cumulative benefit of deviating from the learner's strategy sequence, $\tuple{\strat^t_i}_{t = 1}^T$, is then
\begin{align}
  \label{eq:regret}
  \regret^{1:T}(\dev) = \sum_{t = 1}^T \utility_i(\dev(\strat_i), \strat^t_{-i}) - \utility_i(\strat^t).
\end{align}
We describe \cref{eq:regret} as the cumulative \emph{regret} the learner suffers for failing to modify their play according to $\dev$.
We denote the \emph{maximum regret} under the set of deviations
$\DevSet$ as
$\Regret^T(\DevSet) = \max_{\dev \in \DevSet} \regret^{1:T}(\dev)$.
A \emph{no-$\DevSet$-regret} algorithm ensures its average (positive) regret for any deviation in $\DevSet$ approaches zero over time.

In a two-player, zero-sum game, two competing no-regret learners generate strategy profiles that converge toward a Nash equilibrium when marginalized as
$\tuple{\frac{1}{T} \sum_{t = 1}^T \strat^t_i}_{i = 1}^N$.
However, marginalized empirical strategy profiles do not generally converge toward mediated or Nash equilibria~\parencite{Shapley}, nor are they necessarily maximin strategies.
Instead, hindsight rationality (as we discuss next) suggests an online learning approach to game playing since it makes a statement about the optimality of play that is actually observed, irrespective of how rational the other players behave.

\subsection{Hindsight Rationality}

The $T$ strategy profiles deployed after $T$ rounds of play forms an \emph{empirical distribution of play},
$\recDist^T \in \simplex^{\abs{\PureStratSet}}$.
That is, for all pure strategy profiles $\pureProfile$,
$\recDist^T(\pureProfile) = \frac{1}{T} \sum_{t = 1}^T \strat^t(\pureProfile)$,
according to the probability of sampling $\pureProfile$ on each round, $\strat^t(\pureProfile) = \prod_i \strat^t_i(\pureStrat_i)$.
If we treat the empirical distribution of play as a recommendation distribution, the benefit to player $i$ of deviation $\dev$ is then exactly player $i$'s average regret per-round, \ie/,
$\E_{\pureStrat \sim \recDist^T}[\utility_i(\dev(\pureStrat_i), \pureStrat_{-i}) - \utility_i(\pureStrat)] = \frac{1}{T} \regret^{1:T}(\dev)$.
A learner participating in online play as player $i$ is thus rational in hindsight with respect to $\DevSet$ if they have no regret, where we say ``in hindsight'' because $\recDist^T$ is constructed from past play.
Furthermore, a learner with with $\gap = \Regret^T(\DevSet)/T$ average regret is approximately rational in hindsight (precisely $\gap$-rational in hindsight).
Thus, a no-regret learner's behavior approaches exact rationality in hindsight as they gain experience.
We introduce the term \emph{hindsight rational} as an equivalent alternative to ``no-regret'' for situations where it is useful to emphasize this shift in perspective, from learning to perform as well as strategy modifications, to learning to correlate optimally with other players.

\section{EFG Deviation Types}

\subsubsection{Causal deviations:}
An \emph{extensive-form correlated equilibrium} (\emph{EFCE})~\parencite{forges2002efce,von2008efce-complexity} is an equilibrium without beneficial \emph{informed causal deviations}~\parencite{Dudik09causal-dev}.
A causal deviation follows the recommended strategy, $\pureStrat_i$, unless it is triggered at a particular information set, $\infoSet^{\TRIGGER} \in \InfoSets_i$, at which point an alternative strategy, $\pureStrat_i^{\TRIGGER}$, is used to play out the rest of the game.\footnote{von Stengel (Personal communication 2020) explains that it was intended for players in an EFCE to receive action recommendations at each information set regardless of earlier behavior.
However, all work that we are aware of using the EFCE concept assumes the causal deviation structure codified by \textcite{Dudik09causal-dev} where players commit to a fixed strategy after being triggered.
This structure arises from \textcite{von2008efce-complexity}'s discussion of EFCEs in the context of recommendation distributions over reduced strategies that only specify the recommendations at information sets that can be reached by following earlier recommendations. In this case, a player receives arbitrary deterministic ``uninformative'' recommendations after deviating.}
The immediate strategy of an informed causal deviation at a given information set is
\begin{align*}
  [\dev\pureStrat_i](\infoSet) = \begin{cases}
    \pureStrat_i^{\TRIGGER}(\infoSet)& \mbox{if } \pureStrat_i(\infoSet^{\TRIGGER}) = a^{\TRIGGER} \mbox{ and } \infoSet^{\TRIGGER} \preceq \infoSet \\
    \pureStrat_i(\infoSet)& \mbox{o.w.},
  \end{cases}
\end{align*}
where a trigger action, $a^{\TRIGGER}$, must be recommended in $\infoSet^{\TRIGGER}$ for the deviation to be triggered.
A \emph{blind causal deviation} is instead always triggered upon reaching $\infoSet^{\TRIGGER}$ (it does not get to ``see'' $\pureStrat_i(\infoSet^{\TRIGGER})$).
An \emph{extensive-form coarse-correlated equilibrium} (\emph{EFCCE})~\parencite{farina2020efcce} is an equilibrium without beneficial blind causal deviations.

\subsubsection{Action deviations:}
\emph{Action deviations} correspond to \emph{agent-form} equilibria~\parencite{von2008efce-complexity}.
They trigger like causal deviations but they resume following the recommended strategy (and thereby re-correlate with the sampled recommendations) after a single action, as if a different agent is in control at each information set.
An \emph{informed action deviation} is defined as
\begin{align*}
  [\dev\pureStrat_i](\infoSet) = \begin{cases}
    \pureStrat_i^{\TRIGGER}(\infoSet)& \mbox{if } \pureStrat_i(\infoSet^{\TRIGGER}) = a^{\TRIGGER} \mbox{ and } \infoSet = \infoSet^{\TRIGGER}\\
    \pureStrat_i(\infoSet)& \mbox{o.w.}
  \end{cases}
\end{align*}
and no such deviation is beneficial in an \emph{agent-form correlated equilibrium} (\emph{AFCE}).
As above, a \emph{blind action deviation} is always triggered upon reaching the trigger information set and its equilibrium concept is \emph{agent-form coarse-correlated equilibrium} (\emph{AFCCE}).

\subsubsection{Intuition:}
For some intuition about these deviation types, imagine a driver traveling through a city with route recommendations.
A causal deviation follows the recommendations until they reach a particular part of the city and then they ignore the recommendations for the rest of the trip.
An action deviation also follows the recommendations to a particular part of the city but then deviates from the recommendations at a single intersection before following them again.
Later, we introduce \emph{counterfactual deviations}, which ignore recommendations until reaching a particular part of the city and then begin following recommendations from there.

\correctionStart{Change the counterfactual deviation to external deviation arrow to be dashed and add a line to the caption explaining the notation.}
For a visual metaphor, see \cref{fig:dev-diagram}, where we visualize each of these deviation types alongside the previously described external and internal transformations.
\correctionEnd

\renewcommand{\inDevDiagram}[1]{
  \def\root{#1}
  \node(\root/recState)
    [absDecisionDag, info, anchor=north]
    at ($(\root.south)+(-0.3cm,-0.2cm)$)
    {};
  \draw[info] (\root.north) -- (\root/recState.north);

  \node(\root/devState)
    [absDecisionDag, dev, anchor=north]
    at ($(\root.south)+(0.3cm,-0.2cm)$)
    {};
  \draw[dev] (\root.north) -- (\root/devState.north);

\coordinate(\root/south) at (\root.north |- \root/recState.south);
  \coordinate(\root/center) at ($(\root.north)!0.5!(\root/south)$);
}
\renewcommand{\inCsDevDiagram}[1]{
  \def\root{#1}
  \node(\root/devRoot) [anchor=north] at ($(\root.south)+(0,-\informedSeqLen)$) {};
  \draw[seqPath, follow] (\root.north) -- (\root/devRoot.north);

  \node(\root/devTree) [absDecisionDag, dev, anchor=north] at ($(\root/devRoot.south)+(0,-\informedActLen)$) {};
  \draw[info] (\root/devRoot.north) -- ($(\root/devTree.north)+(-0.3cm,0)$);
  \draw[dev] (\root/devRoot.north) -- (\root/devTree.north);

\coordinate(\root/south) at (\root/devTree.south);
  \coordinate(\root/center) at ($(\root.north)!0.5!(\root/south)$);
  \coordinate(\root/north east) at (\root.north -| \root/devTree.south east);
}
\renewcommand{\inCfDevDiagram}[1]{
  \def\root{#1}
  \node(\root/infoDevRoot) [anchor=north] at ($(\root.south)+(0,-\informedSeqLen)$) {};
  \draw[seqPath, dev] (\root.north) -- (\root/infoDevRoot.north);

  \node(\root/followTree)
    [absDecisionDag, follow, anchor=north]
    at ($(\root/infoDevRoot.south)+(0,-\informedActLen)$)
    {};
  \draw[info] (\root/infoDevRoot.north) -- ($(\root/followTree.north)+(-0.3cm,0)$);
  \draw[dev] (\root/infoDevRoot.north) -- (\root/followTree.north);
}
\renewcommand{\inActDevDiagram}[1]{
  \def\root{#1}
  \node(\root/infoDevRoot) [anchor=north] at ($(\root.south)+(0,-\informedSeqLen)$) {};
  \draw[seqPath, follow] (\root.north) -- (\root/infoDevRoot.north);

  \node(\root/followTree)
    [absDecisionDag, follow, anchor=north]
    at ($(\root/infoDevRoot.south)+(0,-\informedActLen)$)
    {};
  \draw[info] (\root/infoDevRoot.north) -- ($(\root/followTree.north)+(-0.3cm,0)$);
  \draw[dev] (\root/infoDevRoot.north) -- (\root/followTree.north);

\coordinate(\root/south) at (\root/followTree.south);
  \coordinate(\root/center) at ($(\root.north)!0.5!(\root/south)$);
}

\def\informedActLen{0.33cm}
\def\informedSeqLen{0.4cm}
\begin{figure}[t]
\centering
\begin{tikzpicture}[inner sep=0cm]
  \tikzstyle{arrow} = [thick, >=Straight Barb, -{>[scale=0.7]}]
  \tikzstyle{column 1} = [anchor=west]
  \tikzstyle{osrArrow} = [arrow, dashed]

  \matrix(fig) [column sep=0cm, row sep=0.5cm] {
    \node[yshift=-0.4cm] {internal}; &[0.4cm] &
    \coordinate(inRoot);
    \inDevDiagram{inRoot}; \\

\node(informedLabel) [yshift=-0.5cm] {informed}; &[0.2cm]

\coordinate(iCauseRoot);
    \node(iCauseRoot/label)
      [anchor=west]
      at ($(iCauseRoot.south)+(0.1cm,-0.49cm)$)
      {\small causal};
    \inCsDevDiagram{iCauseRoot}; &

\coordinate(iCfRoot);
    \inCfDevDiagram{iCfRoot};
    \node(label)
      [anchor=west]
      at ($(iCfRoot.south)+(0.1cm,-0.49cm)$)
      {\small counterfactual};
    &[0.15cm]

\coordinate(iActRoot);
    \inActDevDiagram{iActRoot};
    \node(label) [anchor=west] at ($(iActRoot.south)+(0.1cm,-0.49cm)$) {\small action};
    \\

\node(blindLabel) [yshift=-0.4cm] {blind}; &
    {
      \node(bCauseRoot) {};
      \node(devState) [absDecisionDag, dev, anchor=north] at ($(bCauseRoot.south)+(0,-\informedSeqLen-\informedActLen)$) {};
      \draw[seqPath, follow] (bCauseRoot.north) -- (devState.north);
      \node(label) [anchor=west] at ($(bCauseRoot.south)+(0.2cm,-0.38cm)$) {\small causal};
    } &
    {
      \node(bCfRoot) {};
      \node(followState) [absDecisionDag, follow, anchor=north] at ($(bCfRoot.south)+(0,-\informedSeqLen-\informedActLen)$) {};
      \draw[seqPath, dev] (bCfRoot.north) -- (followState.north);
      \node(label) [anchor=west] at ($(bCfRoot.south)+(0.2cm,-0.38cm)$) {\small counterfactual};
    } &
    {
      \node(bActRoot) {};
      \node(devState) [anchor=north] at ($(bActRoot.north)+(0,-\informedSeqLen)$) {};
      \draw[seqPath, follow] (bActRoot.north) -- (devState.north);

      \node(followState) [absDecisionDag, follow, anchor=north] at ($(devState.south)+(0,-\informedActLen)$) {};
      \draw[dev] (devState.north) -- (followState.north);
      \node(label) [anchor=west] at ($(bActRoot.south)+(0.1cm,-0.38cm)$) {\small action};
    } \\
  };
  \node(exLabel) [below=1.1cm of blindLabel.south west, anchor=west, xshift=0.01cm] {external};
  \node(exRoot) [right=1.45cm of exLabel.east, yshift=-0.15cm] [absDecisionDag, dev] {};

  \draw[arrow]
    ($(inRoot/center)!0.3!(iCauseRoot/center)$)
    --
    ($(iCauseRoot/center)!0.25!(inRoot/center)+(0,0.1em)$);

  \coordinate(inBot) at ($(inRoot.north)+(0,-0.8cm)$);
  \draw[arrow] (inBot) -- ($(iCfRoot.north)+(0,0.15cm)$);
  \draw[arrow]
    ($(inRoot/center)!0.25!(iActRoot/center)$)
    --
    ($(iActRoot/center)!0.15!(inRoot/center)+(0,0.3em)$);

  \draw[arrow] ($(iCauseRoot.north)+(0,-1.4cm)$) -- ($(bCauseRoot.north)+(0,0.1cm)$);
  \draw[arrow] ($(iCfRoot.north)+(0,-1.4cm)$) -- ($(bCfRoot.north)+(0,0.1cm)$);
  \draw[arrow] ($(iActRoot.north)+(0,-1.4cm)$) -- ($(bActRoot.north)+(0,0.1cm)$);

  \coordinate(exTop) at ($(exRoot.north)+(0,0.1cm)$);
  \draw[arrow] ($(bCauseRoot.north)+(0.1cm,-1.35cm)$) -- ($(exRoot.north west)+(-0.1cm,0.05cm)$);
  \draw[osrArrow] ($(bCfRoot.north)+(-0.1cm,-1.35cm)$) -- ($(exRoot.north east)+(0.1cm,0.05cm)$);
\end{tikzpicture}
  \caption{A summary of deviation types and their relationships in EFGs.
    The game plays out from top to bottom.
    Straight lines are actions, zigzags are action sequences, and triangles are decision trees.
    Decisions where recommendations are followed are colored black, alterations to recommendations are colored red, and trigger information is colored cyan.
    Arrows denote ordering from a stronger to a weaker deviation type (and therefore a subset to superset equilibrium relationship), the dashed arrow denotes that this relationship holds only under observable sequential rationality.}
  \label{fig:dev-diagram}
\end{figure}
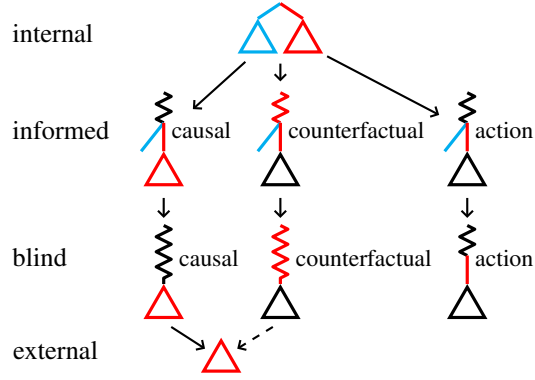

\section{Practical Implications}

We now illustrate some practical differences between tractable deviation types.
These differences are important because algorithms tied to more limited deviation types may not achieve as much reward as ones using stronger deviation types, and the types of mistakes an algorithm makes can depend on the structure of its associated deviation type.
Moreover, even if an algorithm is not designed with a deviation type in mind, it may implicitly use one.
For example, \emph{policy gradient policy iteration} (\emph{PGPI})~\parencite{srinivasan2018actor} and Monte Carlo approximations thereof (\ie/, standard formulations of policy gradient~\parencite{Williams1992reinforce,Sutton00PolicyGradient}) are implicitly tied to action deviations, as we later discuss).
Thus, these results have substantial generality and widespread impact.

\correctionStart{Correct table.}
We summarize our results in a table of relationships between equilibrium concepts (\cref{tab:eq-class-rel}).
\correctionEnd
The bottom two rows and the two rightmost columns reference new equilibrium concepts that will be explained later.

\begin{table}
  \centering
  \begin{tabular}{l|cccc|cc}
    $\downarrow$ implies $\rightarrow$ & CE & CCE & EF & AF & CF & OS-CF \\
    \toprule \addlinespace[0pt]
    CE & \eqTableEntry{yes}{${=}$} & \eqTableEntry{yes}{${\checkmark}$} & \eqTableEntry{yes}{${\checkmark}$} & \eqTableEntry{yes}{${\checkmark}$} & \eqTableEntry{newYes}{$\bs{\checkmark}$} & \eqTableEntry{newNo}{\textbxf{S}} \\
    CCE & \eqTableEntry{no}{M} & \eqTableEntry{yes}{${=}$} & \eqTableEntry{no}{{M}} & \eqTableEntry{no}{\textit{B}} & \eqTableEntry{newNo}{\textbxf{B}} & \eqTableEntry{newNo}{\textbxf{B}} \\
    EF & \eqTableEntry{no}{B} & \eqTableEntry{yes}{$\checkmark$} & \eqTableEntry{yes}{${=}$} & \eqTableEntry{no}{\textit{B}} & \eqTableEntry{newNo}{\textbxf{B}} & \eqTableEntry{newNo}{\textbxf{B}} \\
    AF & \eqTableEntry{no}{I} & \eqTableEntry{no}{I} & \eqTableEntry{no}{I} & \eqTableEntry{yes}{${=}$} & \eqTableEntry{newNo}{\textbxf{I}} & \eqTableEntry{newNo}{\textbxf{I}} \\
    \specialrule{\lightrulewidth}{0pt}{0pt}
    CF & \eqTableEntry{newNo}{\textbxf{M}} & \eqTableEntry{newNo}{\textbxf{R}} & \eqTableEntry{newNo}{\textbxf{M}} & \eqTableEntry{newNo}{\textbxf{M}} & \eqTableEntry{newYes}{$\bs{=}$} & \eqTableEntry{newNo}{\textbxf{S}} \\
    OS-CF & \eqTableEntry{newNo}{\textbxf{M}} & \eqTableEntry{newYes}{$\bs{\checkmark}$} & \eqTableEntry{newNo}{\textbxf{M}} & \eqTableEntry{newNo}{\textbxf{M}} & \eqTableEntry{newYes}{$\bs{\checkmark}$} & \eqTableEntry{newYes}{$\bs{=}$} \\
    \addlinespace[-1.7pt] \bottomrule
  \end{tabular}
  \caption{Equilibrium class relationships.
    The relationships between the coarse-correlated and correlated versions of each EFG equilibrium concept are the same, \eg/, the table is identical if ``EF'' is replaced with ``EFCE'' or ``EFCCE''.
    Cyan cells highlight where the row concept implies the column concept (\eg/, $\text{EF} \subseteq \text{CCE}$) either by deduction ($\checkmark$) or by being the same ($=$).
    Otherwise, the cell is colored red and references a game example that proves the row concept does not imply the column concept (\eg/, $\text{EF} \not\subseteq \text{AF}$), where
    ``I'' $\to$ In or Out, ``B'' $\to$ extended battle-of-the-sexes, ``M'' $\to$ extended matching pennies, ``S'' $\to$ sequential extended battle-of-the-sexes, and ``R'' $\to$ MacQueen's counterexample.
    Emphasized results were previously unclear (italicized entries, \eg/, $\text{EF} \not\Rightarrow \text{AF}$) or undefined (bold entries, \eg/, $\text{CF} \not\Rightarrow \text{EF}$).}
  \label{tab:eq-class-rel}
\end{table}
 
Since causal deviations can trigger at the start of the game, they cannot be weaker than external ones, and \textcite{von2008efce-complexity} showed that they are indeed stronger.
We provide another example proving this relationship in an extended matching pennies game that we later describe.
It is visualized in \cref{fig:mp-dev}.

Section 2.4 of \textcite{von2008efce-complexity} shows how external and causal deviations can outperform action deviations in the \emph{In or Out} game.
If the player chooses In, they choose again between In and Out.
If they ever choose Out, the game ends and they receive a reward of zero.
If they choose In twice, they receive a reward of one.
Given the recommendation $\tuple{\text{Out}, \text{Out}}$, an action deviation can swap a single Out to an In, but not both.
External and causal deviations can achieve a better value by deviating to $\tuple{\text{In}, \text{In}}$.

However, action deviations can outperform both external and causal deviations in other scenarios, ensuring that some EFCEs and CCEs are not AFCEs.
This is a different conclusion than \textcite{von2008efce-complexity} reach under their original EFCE definition.

\subsection{Action Deviations can Outperform Causal Deviations}
\label{sec:ext-bots}

Consider an extended battle-of-the-sexes game where player one first privately chooses between normal or upgraded seating (\NotUpgrade/ or \Upgrade/) for both players at no extra cost before choosing to attend event \EventX/ or \EventY/ simultaneously with their partner.
Player one prefers \EventY/ and player two prefers \EventX/, but if they attend different events, neither is happy so they both get zero.
Attending the less preferred event with their partner gives that player $+1$ with normal seating or $+2$ with upgraded seating, while the other player, who is at their preferred venue, gets $+2$ or $+3$.

The joint recommendations
$\set*{
  \tuple*{\mNotUpgrade, \; \mEventY \given \mUpgrade, \; \mEventY \given \mNotUpgrade},
  \tuple*{\mEventY}
}$
and
$\set*{
  \tuple*{\mNotUpgrade, \; \mEventX \given \mUpgrade, \; \mEventX \given \mNotUpgrade},
  \tuple*{\mEventX}
}$
gives player one $+2$ and $+1$, respectively.
A distribution that assigns $\nicefrac{1}{2}$ to both of these strategy profiles provides player one with an expected value of $+1.5$.
Since both players coordinate perfectly, player one can improve their value to $+2.5$ if they would just choose to upgrade their seating.
But there is no causal deviation that achieves more than $+1.5$ because causal deviations must play a fixed strategy after the deviation is triggered.
An external deviation must play a fixed strategy from the start of the game, which also precludes them from taking advantage of correlation in the recommendation distribution.
Figures A.2 and A.3 in the appendix enumerate all external and causal deviations in this game to prove this fully.

However, it is a valid action deviation to upgrade seating and preserve event coordination because an action deviation can re-correlate, \ie/, follow recommendations at information sets after an earlier deviation.
The resulting deviation strategy profile distribution is uniform over
\[\subblock*{
  \set*{
    \tuple*{\mUpgrade, \; \mEventY \given \mUpgrade, \; \mEventY \given \mNotUpgrade},
    \tuple*{\mEventY}
  },
  \set*{
    \tuple*{\mUpgrade, \; \mEventX \given \mUpgrade, \; \mEventX \given \mNotUpgrade},
    \tuple*{\mEventX}
  }
},\]
which achieves a value of $+2.5$.
See \cref{fig:bots-dev} for a visualization of this deviation.

\begin{figure}[t]
\centering
\begin{tikzpicture}[inner sep=0cm]
  \thinDeviationExampleMatrix{
    \node(rec1Label) {\textbf{rec 1:}};
    \node(rec1P2State) [state, right=0.5cm of rec1Label.east, anchor=west, font=\small] {2};
    \node(rec1P2H) [below left=0.4cm and 0.4cm of rec1P2State.south west] {};
    \node(rec1P2T) [below right=0.4cm and 0.4cm of rec1P2State.south east] {};
    \drawEdgeL{{\small X}}{rec1P2State}{rec1P2H}{alt};
    \drawEdgeR{{\small Y}}{rec1P2State}{rec1P2T}{follow}; \&

    \node(rec2Label) {\textbf{rec 2:}};
    \node(rec2P2State) [state, right=0.5cm of rec2Label.east, anchor=west, font=\small] {2};
    \node(rec2P2H) [below left=0.4cm and 0.4cm of rec2P2State.south west] {};
    \node(rec2P2T) [below right=0.4cm and 0.4cm of rec2P2State.south east] {};
    \drawEdgeL{{\small X}}{rec2P2State}{rec2P2H}{follow};
    \drawEdgeR{{\small Y}}{rec2P2State}{rec2P2T}{alt}; \\[-0.1cm]

    \node(cfDev1-1) [state, dev] {};
    \devRecTreeNodeCoords{cfDev1-1};
    \deviationExampleStates{cfDev1-1}{}{};
    \botsExampleTerminalsOne{cfDev1-1}{}{draw=black}{}{};
    \devRecRootEdges{cfDev1-1}{\Upgrade/}{dev}{\NotUpgrade/}{follow};
    \devRecLEdges{cfDev1-1}{\EventX/}{alt}{\EventY/}{follow};
    \devRecREdges{cfDev1-1}{\EventX/}{alt}{\EventY/}{follow}; \&

    \node(cfDev1-2) [state, dev] {};
    \devRecTreeNodeCoords{cfDev1-2};
    \deviationExampleStates{cfDev1-2}{}{};
    \botsExampleTerminalsTwo{cfDev1-2}{draw=black}{}{}{};
    \devRecRootEdges{cfDev1-2}{\Upgrade/}{dev}{\NotUpgrade/}{follow};
    \devRecLEdges{cfDev1-2}{\EventX/}{follow}{\EventY/}{alt};
    \devRecREdges{cfDev1-2}{\EventX/}{follow}{\EventY/}{alt}; \\
  };
\end{tikzpicture}
  \caption{A beneficial action and counterfactual deviation for player one in an EFCE in extended battle-of-the-sexes.
    Black lines denote recommendations, red lines denote deviations from unobserved recommendations, and grey lines mark actions that are not recommended or used by the deviation.}
  \label{fig:bots-dev}
\end{figure}
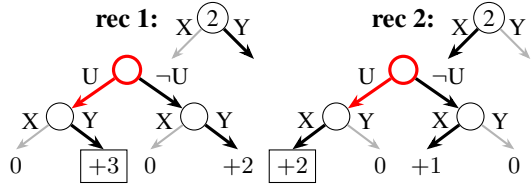

The game can be found in OpenSpiel~\cite{LanctotEtAl2019OpenSpiel} under the name {\tt extended\_bos.efg}.
See Appendix A.1 for an illustration of this game's extensive form, the EFCE recommendation distribution, and all of the possible deviations along with their values.

\section{CFR and Deviation Types}

The hindsight rationality framing encourages us to re-examine a prominent regret minimization algorithm for EFGs, \emph{counterfactual regret minimization} (\emph{CFR})~\parencite{cfr}.
CFR was designed to solve two-player, zero-sum games by minimizing external regret, but is it incidentally hindsight rational for causal or action deviations?
Or do we need a new type of deviation to fully understand CFR?

\subsection{Failure on Causal and Action Deviations}
We present an extension of Shapley's game~\parencite{Shapley} where CFR fails to behave according to an EFCCE or an AFCCE and therefore does not necessarily eliminate incentives for causal or action deviations.

In Shapley's game, both players simultaneously choose between Rock, Paper, and Scissors.
Rock beats Scissors, Scissors beats Paper, and Paper beats Rock, but both players lose if they choose the same item.
A winning player gets $+1$ while losing players get $-1$.
Our extension is that player one privately predicts whether or not player two will choose Rock after choosing their action.
If they accurately predict a Rock play, they receive a bonus, $b$, in addition to their usual reward, and if they accurately predict that they will not play Rock, they receive a smaller bonus, $b/3$.
There is no cost for inaccurate predictions, and the second player's decisions and payoffs are unchanged.
The game can be found in OpenSpiel~\cite{LanctotEtAl2019OpenSpiel} under the name {\tt extended\_shapleys.efg}.
The game's extensive-form is drawn in Figure A.5.

\cref{fig:shapley-small} shows the gap between the expected utility achieved by CFR's self-play empirical distribution (summed across players) and an optimal causal or action deviation across iterations.
In all of our experiments, optimal causal and action deviations achieved the same value.
The E/AFCE lines correspond to an optimal informed deviation while the E/AFCCE lines correspond to an optimal blind deviation.
Notice that in all figures, the gap does not continue to decrease over time as we would expect if CFR were to minimize causal or action regret.
See Figure A.6 for experiments with two other bonus values (0.3 and 30).

\begin{figure}[tb]
\centering
  \includegraphics[width=\linewidth]{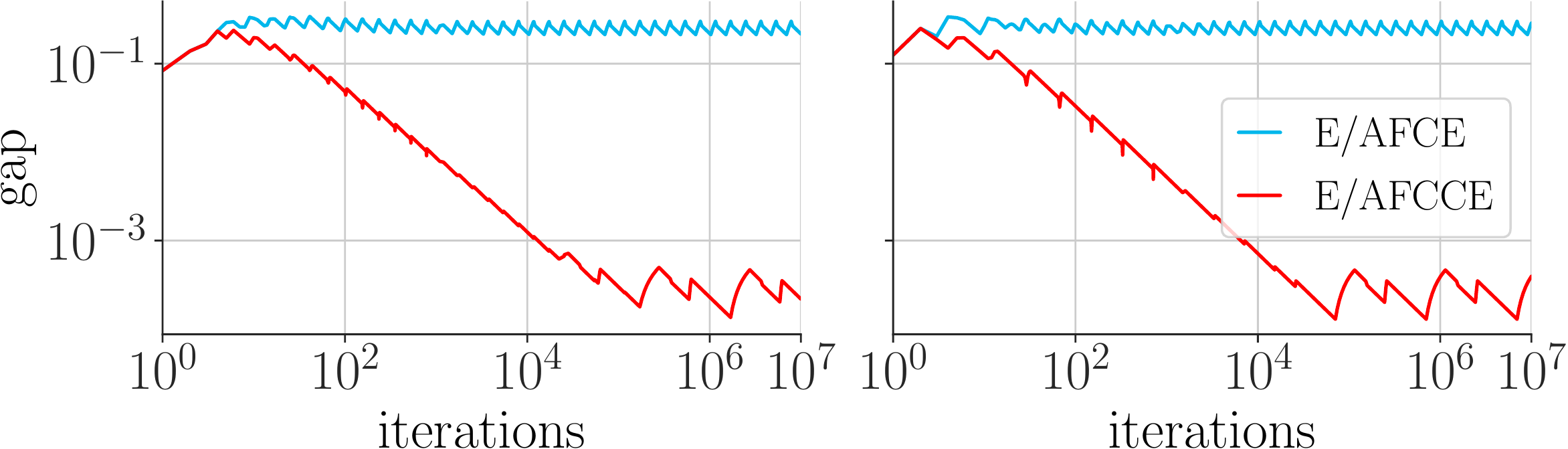}
  \caption{The gap between CFR's self-play empirical distribution and an extensive-form or agent-form (C)CE (E/AF(C)CE) in the extended Shapley's game with $b=0.003$. (Left) simultaneous-update CFR. (Right) alternating-update~\parencite{burch2019revisiting} CFR.}
  \label{fig:shapley-small}
\end{figure}

\subsection{Counterfactual Deviations}
While CFR can fail to minimize causal or action regret, it does more than minimize external regret.
We now define blind and informed \emph{counterfactual deviations} to describe the strategy deviations that arise from the definition of counterfactual value.
We show that conventional CFR with no-external-regret learners at each information set is hindsight rational for blind counterfactual deviations and that CFR with no-internal-regret learners is hindsight rational for informed counterfactual deviations.
Naturally, both types of counterfactual deviations give rise to classes of counterfactual equilibria and CFR's behavior in self-play therefore conforms to such equilibria.

Intuitively, a counterfactual deviation is one that plays from the beginning of the game to reach a particular information set and re-correlates thereafter.
Counterfactual deviations re-correlate like action deviations but they are still able to deviate across sequential information sets.
This ensures that the set of counterfactual deviations contains the set of external deviations.
Both counterfactual deviation variants are visualized in \cref{fig:dev-diagram}.

We briefly review CFR's definition and known theoretical properties before formally defining counterfactual deviations and equilibria and stating CFR's relationships with these concepts.

\subsubsection{Background:}
The \emph{counterfactual value function} at $\infoSet$, $\cfv_{\infoSet} : \Actions(\infoSet) \times \StrategySet \to \reals$, is the immediate expected value for taking action $a$ in $\infoSet$ assuming that everyone plays according to $\strat$ except that $i$ plays to reach $\infoSet$, \ie/,
\begin{align*}
  \cfv_{\infoSet}(a; \strat)
    = \sum_{\substack{
      h \in \infoSet,\\
      z \in \TerminalHistories}}
        \, \underbrace{\reachProb(h; \strat_{-i})}_{\text{Prob.~of being in $h$.}}
        \underbrace{\reachProb(ha, z; \strat) \utility_i(z)}_{\text{Future value given $ha$.}}.
\end{align*}
We overload
$\cfv_{\infoSet}(\immStrat; \strat) \as \E_{a \sim \immStrat}[\cfv_{\infoSet}(a; \strat)]$
for immediate strategies $\immStrat \in \simplex^{\abs{\Actions(\infoSet)}}$, and
$\cfv(\infoSet; \strat) \as \cfv_{\infoSet}(\strat_i(\infoSet); \strat)$
for the counterfactual value of $\infoSet$.

CFR is an online algorithm so it proposes a strategy on each round $t$, $\strat_i^t$, is evaluated based on strategies for the other players, $\strat_{-i}^t$, and uses this feedback to propose its next strategy, $\strat_i^{t + 1}$.
CFR's strategies are determined by no-external-regret learners at each $\infoSet$ that produce immediate strategies $\strat_i^t(\infoSet) \in \simplex^{\abs{\Actions}}$.
These learners are trained on
\emph{immediate counterfactual regret}~\parencite{cfr}, \ie/,
$\regret^t_{\infoSet}(\dev) = \cfv_{\infoSet}(\dev\strat^t_i(\infoSet); \strat^t) - \cfv(\infoSet; \strat^t)$.
The key result of \textcite{cfr} is that CFR's external regret decreases at the same rate as its immediate counterfactual regret, \ie/,
\begin{theorem}\cite[Theorem 3]{cfr}
  \label{thm:cfr}
  Let $\Regret^T_{\infoSet}\subex*{\DevSet^{\EXT}_{\Actions(\infoSet)}} = \max_{\dev \in \DevSet^{\EXT}_{\Actions(\infoSet)}} \regret^{1:T}_{\infoSet}(\dev)$ be the maximum immediate counterfactual regret at each information set $\infoSet$.
  Then the incentive to deviate from a sequence of $T$ strategies for player $i$, $\tuple{\strat^t_i}_{t = 1}^T$, to any other pure strategy is
  $\frac{1}{T} \Regret^T(\DevSet^{\EXT}_{\PureStratSet_i}) \le \frac{1}{T} \sum_{\infoSet \in \InfoSets_i}
    \subex*{\Regret^T_{\infoSet}\subex*{\DevSet^{\EXT}_{\Actions(\infoSet)}}}^+$.
\end{theorem}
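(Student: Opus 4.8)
The plan is to follow \textcite{cfr}'s counterfactual regret decomposition, working up the information-set forest from its leaves. Introduce the \emph{full} counterfactual regret at $\infoSet$,
\begin{align*}
  \like{\Regret}^T(\infoSet) = \max_{\strat'_i} \sum_{t=1}^T \subblock*{\cfv_\infoSet(\strat'_i; \strat^t) - \cfv(\infoSet; \strat^t)},
\end{align*}
where $\strat'_i$ ranges over strategies defined on $\infoSet$ and the information sets $\infoSet' \succeq \infoSet$ below it, and $\cfv_\infoSet(\strat'_i; \strat^t)$ is the counterfactual value of $\infoSet$ when $i$ uses this modified continuation while everyone else plays $\strat^t$. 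Taking $\strat'_i = \strat^t_i$ shows $\like{\Regret}^T(\infoSet) \ge 0$.

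First I would reduce the full-game external regret to a sum over $i$'s \emph{root} information sets (those with no $\InfoSets_i$-predecessor). Under perfect recall player $i$ takes no action before a root $\infoSet$, so $\reachProb(\infoSet; \strat_i) = 1$ and the opponents' reach is the only weighting that counterfactual value applies; every terminal that player $i$ can influence passes through a unique root, while terminals on which $i$ never acts contribute identically to both strategies and cancel. Hence for each fixed $\pureStrat'_i$, $\sum_t [\utility_i(\pureStrat'_i, \strat^t_{-i}) - \utility_i(\strat^t)] = \sum_{\infoSet \text{ root}} \sum_t [\cfv_\infoSet(\pureStrat'_i; \strat^t) - \cfv(\infoSet; \strat^t)]$. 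Because distinct root subtrees are disjoint and a pure strategy chooses independently within each, maximizing over $\pureStrat'_i$ distributes across roots, giving $\Regret^T(\DevSet^{\EXT}_{\PureStratSet_i}) = \sum_{\infoSet \text{ root}} \like{\Regret}^T(\infoSet)$.

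The core step is a recursive bound
\begin{align*}
  \like{\Regret}^T(\infoSet) \le \subex*{\Regret^T_\infoSet\subex*{\DevSet^{\EXT}_{\Actions(\infoSet)}}}^+ + \sum_{\infoSet' \in \text{succ}(\infoSet)} \like{\Regret}^T(\infoSet'),
\end{align*}
where $\text{succ}(\infoSet)$ is the set of immediate $\InfoSets_i$-successors of $\infoSet$. To obtain it, decompose a deviation's continuation value at $\infoSet$ into its immediate strategy $\immStrat' = \strat'_i(\infoSet)$ and its behaviour below: for each action $a$, the value of playing $a$ and then deviating equals $\cfv_\infoSet(a; \strat^t)$ plus the sum, over successors $\infoSet'$ reached through $a$, of $\cfv_{\infoSet'}(\strat'_i; \strat^t) - \cfv(\infoSet'; \strat^t)$. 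Substituting and subtracting $\cfv(\infoSet; \strat^t)$ splits the per-round term into the immediate counterfactual regret of $\immStrat'$ plus an $\immStrat'$-weighted sum of successor continuation regrets. Summing over $t$ and using $\max_x[f+g] \le \max f + \max g$, the immediate part is at most $\subex*{\Regret^T_\infoSet\subex*{\DevSet^{\EXT}_{\Actions(\infoSet)}}}^+$ (a linear objective over the simplex is maximized at an action, and $x \le x^+$), while linearity in $\immStrat'$ puts the continuation part at $\max_a \sum_{\infoSet' \text{ thru } a} \like{\Regret}^T(\infoSet') \le \sum_{\infoSet' \in \text{succ}(\infoSet)} \like{\Regret}^T(\infoSet')$, the last inequality using that each $\like{\Regret}^T(\infoSet') \ge 0$ and the successor sets of different actions are disjoint.

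Finally I would unroll the recursion. By induction up the forest, $\like{\Regret}^T(\infoSet) \le \sum_{\infoSet' \succeq \infoSet} \subex*{\Regret^T_{\infoSet'}\subex*{\DevSet^{\EXT}_{\Actions(\infoSet')}}}^+$. Summing over roots and using that every $\infoSet' \in \InfoSets_i$ has a unique root ancestor (perfect recall) collapses the double sum to $\sum_{\infoSet' \in \InfoSets_i}\subex*{\Regret^T_{\infoSet'}\subex*{\DevSet^{\EXT}_{\Actions(\infoSet')}}}^+$; dividing by $T$ yields the claim. The main obstacle is the decomposition in the core step: one must verify that, under perfect recall, the opponents' reach probabilities factor so that the continuation contribution of a sub-deviation appears exactly as counterfactual-value differences at the child information sets, with no double counting of the terminals reached directly between $\infoSet$ and its successors.
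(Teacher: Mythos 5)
Your overall route is the same as the paper's: your ``core step'' is precisely the paper's \cref{lem:cfr-decomp} (full counterfactual regret at $\infoSet$ bounded by immediate counterfactual regret plus successor regrets), and the paper likewise obtains \cref{thm:cfr} by unrolling that recursion over the information-set forest. However, there is one genuine flaw in your argument: the claim that $\like{\Regret}^T(\infoSet) \ge 0$, justified by ``taking $\strat'_i = \strat^t_i$.'' The comparator inside $\like{\Regret}^T(\infoSet)$ must be a \emph{single} strategy held fixed across all $T$ rounds, so you cannot substitute the time-varying sequence $\strat^t_i$; and the claim itself is false in general. Full counterfactual regret can be strictly negative: if the learner's play happens to track a non-stationary opponent (in matching pennies, playing Heads in the round the opponent plays Heads and Tails in the round they play Tails), the realized value exceeds that of every fixed comparator. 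Consequently, the step where you replace $\max_a \sum_{\infoSet' \text{ through } a} \like{\Regret}^T(\infoSet')$ by $\sum_{\infoSet' \in \text{succ}(\infoSet)} \like{\Regret}^T(\infoSet')$ fails as written---dropping the maximum in favor of the full sum is only valid when the summands are non-negative.

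The repair is standard, and it is exactly why the $\subex*{\cdot}^+$ operators appear in the statement of \cref{thm:cfr}. Either carry positive parts through your recursion---prove
$\subex*{\like{\Regret}^T(\infoSet)}^+ \le \subex*{\Regret^T_{\infoSet}\subex*{\DevSet^{\EXT}_{\Actions(\infoSet)}}}^+ + \sum_{\infoSet' \in \text{succ}(\infoSet)} \subex*{\like{\Regret}^T(\infoSet')}^+$,
noting that the right-hand side is non-negative and hence bounds the positive part of the left-hand side, so the induction propagates---or do what the paper does in \cref{lem:cfr-decomp}: keep the $\max_a$ in the one-step bound, and relax $\max_a \sum_{\infoSet' \text{ through } a}(\cdot)$ to a sum over all successors only \emph{after} substituting the induction hypothesis, whose bound $\sum_{\infoSet'' \succeq \infoSet'} \subex*{\Regret^T_{\infoSet''}\subex*{\DevSet^{\EXT}_{\Actions(\infoSet'')}}}^+$ is non-negative term by term. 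With either patch your argument goes through, and your reduction to root information sets (which is correct, and which the paper leaves implicit in its ``unroll from the start of the game'' remark) finishes the proof via
$\Regret^T(\DevSet^{\EXT}_{\PureStratSet_i}) = \sum_{\infoSet \text{ root}} \like{\Regret}^T(\infoSet) \le \sum_{\infoSet \text{ root}} \subex*{\like{\Regret}^T(\infoSet)}^+ \le \sum_{\infoSet \in \InfoSets_i} \subex*{\Regret^T_{\infoSet}\subex*{\DevSet^{\EXT}_{\Actions(\infoSet)}}}^+$.
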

\cref{thm:cfr} follows directly from an elementary decomposition relationship that we build on (see Appendix B.1 for a simple proof):
\begin{lemma}\cite[Equation 13, Lemma 5]{CFR-TR}
  \label{lem:cfr-decomp}
  The full regret,
$\Regret^T_{\infoSet}(\DevSet_{\PureStratSet_i}^{\EXT})$,
from information set $\infoSet \in \InfoSets_i$ that $\tuple{\strat^t_i}_{t = 1}^T$ suffers is upper bounded by
$\Regret^T_{\infoSet}(\DevSet_{\Actions(\infoSet)}^{\EXT})
+ \max_{a \in \Actions(\infoSet)}
    \sum_{\infoSet' \in \InfoSets_i(\infoSet, a)}
      \Regret^T_{\infoSet'}(\DevSet_{\PureStratSet_i}^{\EXT})$,
where $\InfoSets_i(\infoSet, a) \subset \InfoSets_i$ are the information sets that immediately follow after taking action $a$ in $\infoSet$.
 \end{lemma}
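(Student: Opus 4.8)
The plan is to prove the bound through a one-step recursive decomposition of counterfactual value, followed by two elementary maximization steps. Write $\regret^{1:T}_{\infoSet}(\dev) = \sum_{t = 1}^T \subex*{\cfv_{\infoSet}(\pureStrat^\dev_i; \strat^t) - \cfv(\infoSet; \strat^t)}$ for the cumulative counterfactual regret at $\infoSet$ against an external deviation $\dev \in \DevSet^{\EXT}_{\PureStratSet_i}$, where $\dev$ replaces player $i$'s play from $\infoSet$ onward by a fixed pure continuation $\pureStrat^\dev_i$ and $\cfv_{\infoSet}(\pureStrat^\dev_i; \strat^t)$ denotes the counterfactual value at $\infoSet$ under that continuation (recall that counterfactual value assumes $i$ plays to reach $\infoSet$, so play before $\infoSet$ is irrelevant). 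Then $\Regret^T_{\infoSet}(\DevSet^{\EXT}_{\PureStratSet_i}) = \max_{\dev} \regret^{1:T}_{\infoSet}(\dev)$, whereas restricting $\dev$ to change only the immediate action recovers $\Regret^T_{\infoSet}(\DevSet^{\EXT}_{\Actions(\infoSet)}) = \max_{a \in \Actions(\infoSet)} \regret^{1:T}_{\infoSet}(a)$.

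First I would establish the identity referenced as ``Equation~13'': for a pure continuation with $a = \pureStrat^\dev_i(\infoSet)$,
\begin{align*}
  \cfv_{\infoSet}(\pureStrat^\dev_i; \strat^t) = \cfv_{\infoSet}(a; \strat^t) + \sum_{\infoSet' \in \InfoSets_i(\infoSet, a)} \subex*{\cfv_{\infoSet'}(\pureStrat^\dev_i; \strat^t) - \cfv(\infoSet'; \strat^t)}.
\end{align*}
This follows by expanding the definition of $\cfv_{\infoSet}$ and partitioning the terminal histories reached from $\infoSet$ after $a$ into those that terminate before encountering any player-$i$ information set (these contribute identically whether $i$ subsequently follows $\strat^t$ or $\pureStrat^\dev_i$, and together yield $\cfv_{\infoSet}(a; \strat^t)$) and those passing through some immediate successor $\infoSet' \in \InfoSets_i(\infoSet, a)$. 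On the latter, the only discrepancy between following $\strat^t$ and following $\pureStrat^\dev_i$ from $\infoSet'$ onward is exactly the continuation regret $\cfv_{\infoSet'}(\pureStrat^\dev_i; \strat^t) - \cfv(\infoSet'; \strat^t)$; the reach-probability bookkeeping is consistent because $\infoSet'$ immediately follows $a$, so player $i$'s reach from $\infoSet$ to $\infoSet'$ is deterministic and the shared opponent factors $\reachProb(\cdot; \strat^t_{-i})$ appear identically on both sides.

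Subtracting $\cfv(\infoSet; \strat^t)$ and summing over $t$ converts this into the regret recursion $\regret^{1:T}_{\infoSet}(\dev) = \regret^{1:T}_{\infoSet}(a) + \sum_{\infoSet' \in \InfoSets_i(\infoSet, a)} \regret^{1:T}_{\infoSet'}(\dev)$. Next I would pass to maxima. For fixed $a$, perfect recall makes the subtrees rooted at distinct successors $\infoSet' \in \InfoSets_i(\infoSet, a)$ disjoint, so the continuation of $\pureStrat^\dev_i$ on each subtree can be chosen independently; hence $\max_{\dev :\, \pureStrat^\dev_i(\infoSet) = a} \sum_{\infoSet'} \regret^{1:T}_{\infoSet'}(\dev) = \sum_{\infoSet' \in \InfoSets_i(\infoSet, a)} \Regret^T_{\infoSet'}(\DevSet^{\EXT}_{\PureStratSet_i})$. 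Maximizing over $a$ and then applying $\max_a \subex*{X(a) + Y(a)} \le \max_a X(a) + \max_a Y(a)$ with $X(a) = \regret^{1:T}_{\infoSet}(a)$ gives
\begin{align*}
  \Regret^T_{\infoSet}(\DevSet^{\EXT}_{\PureStratSet_i}) \le \max_{a} \regret^{1:T}_{\infoSet}(a) + \max_{a \in \Actions(\infoSet)} \sum_{\infoSet' \in \InfoSets_i(\infoSet, a)} \Regret^T_{\infoSet'}(\DevSet^{\EXT}_{\PureStratSet_i}),
\end{align*}
and the first term is precisely $\Regret^T_{\infoSet}(\DevSet^{\EXT}_{\Actions(\infoSet)})$, which is the claimed bound.

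The main obstacle I expect is making the first identity airtight: correctly partitioning the terminal histories below $\infoSet a$ and verifying that the reach-probability weights in the definition of $\cfv_{\infoSet}$ decompose \emph{exactly} into the immediate term plus the child counterfactual values, with no double counting and the correct opponent factors. Once this additive structure and the perfect-recall disjointness of successor subtrees are in hand, the two maximization steps are routine.
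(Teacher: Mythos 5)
Your proposal is correct and follows essentially the same route as the paper's proof: the one-step decomposition of counterfactual value into an immediate term plus successor counterfactual values, the add-and-subtract of the on-policy successor values $\sum_{\infoSet'} \cfv(\infoSet'; \strat^t)$, and a final application of the subadditivity of the maximum. The only cosmetic differences are that you state the recursion per deviation before maximizing and make the perfect-recall independence of continuations across successor subtrees explicit, whereas the paper decomposes inside the maximum and leaves that independence implicit in its identification of the second term with $\sum_{\infoSet'} \Regret^T_{\infoSet'}(\DevSet^{\EXT}_{\PureStratSet_i})$.
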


\subsubsection{Counterfactual Deviations and Intermediate Counterfactual Regret:}
\cref{lem:cfr-decomp} provides a one-step recursive connection between full counterfactual regret and immediate counterfactual regret.
When we unroll this recursion completely from the start of the game, we arrive at \cref{thm:cfr}.
But what if instead we unroll this recursion an arbitrary number of steps?

Perfect recall ensures that there is a unique sequence of player $i$'s information sets and actions leading to
any target information set, $\infoSet^{\TARGET}$.
Thus, if this target is $n$ steps away from an initial information set, $\infoSet_0$, then we can refer to the sequence of intermediate information sets as $\tuple{\infoSet_j}_{j = 0}^{n - 1}$ and the sequence of actions required to follow this sequence as $\tuple{a_j}_{j = 0}^{n - 1}$.
Once at the target, we could consider using any action transformation $\dev^{\TARGET}$.
What we have just described is a strategy deviation that plays to reach a target information set and then applies an action transformation, but otherwise leaves the strategy unmodified.
Denoting this strategy deviation as $\dev$, we can write the counterfactual value it achieves in terms of the usual one-step counterfactual values:
\begin{align*}
  \cfv(\infoSet_0; \dev(\strat_i^t), \strat_{-i}^t)
    &= \cfv_{\infoSet^{\TARGET}}(\dev^{\TARGET}(\strat_i^t(\infoSet^{\TARGET})); \strat^t)\\
    &\quad+ \sum_{j = 0}^{n - 1} \underbrace{\cfv_{\infoSet_j}(a_j; \strat^t) - \cfv(\infoSet_{j + 1}; \strat^t)}_{\text{telescoping difference}}.
\end{align*}
The counterfactual regret with respect to this value is intermediate between an immediate counterfactual regret where $n = 1$ and full counterfactual regret where all terminal information sets are treated as targets.
With regret decomposition, we can bound this \emph{intermediate counterfactual regret} in terms of immediate counterfactual regrets:
\begin{theorem}
  \label{thm:int-cfr-decomp}
  Let $\dev$ be the deviation that plays to reach $\infoSet^{\TARGET}$ from initial information set $\infoSet_0$ and uses action deviation $\dev^{\TARGET} \in \DevSet^{\TARGET} \subseteq \DevSet^{\SWAP}_{\Actions(\infoSet^{\TARGET})}$ once there, but otherwise leaves its given strategy unmodified.
The instantaneous intermediate counterfactual regret with respect to $\dev$ is
$\cfv(\infoSet_0; \dev(\strat_i^t), \strat_{-i}^t)
  - \cfv(\infoSet_0; \strat^t)$
and the cumulative intermediate counterfactual regret is bounded as
\begin{align*}
&\sum_{t = 1}^T
  \cfv(\infoSet_0; \dev(\strat_i^t), \strat_{-i}^t)
  - \cfv(\infoSet_0; \strat^t)\\
  &\le \Regret^T_{\infoSet^{\TARGET}}(\DevSet^{\TARGET})
  + \sum_{\infoSet^0 \preceq \infoSet \prec \infoSet^{\TARGET}}
    \Regret^T_{\infoSet}(\DevSet^{\EXT}_{\Actions(\infoSet)}).
\end{align*}
 \end{theorem}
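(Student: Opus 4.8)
The plan is to avoid re-deriving any reach-probability algebra and instead work entirely from the counterfactual value decomposition stated just above the theorem. That identity already expresses $\cfv(\infoSet_0; \dev(\strat_i^t), \strat_{-i}^t)$ as the target term $\cfv_{\infoSet^{\TARGET}}(\dev^{\TARGET}(\strat_i^t(\infoSet^{\TARGET})); \strat^t)$ plus a telescoping sum over the intermediate sets $\tuple{\infoSet_j}_{j=0}^{n-1}$, so the whole task reduces to algebra on that expression followed by an application of the definitions of immediate regret. The goal is to show the instantaneous intermediate regret is a clean sum of one immediate counterfactual regret per information set on the path, plus one at the target.

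First I would rewrite the target term using the immediate regret at $\infoSet^{\TARGET}$. Since $\cfv(\infoSet^{\TARGET}; \strat^t) = \cfv_{\infoSet^{\TARGET}}(\strat_i^t(\infoSet^{\TARGET}); \strat^t)$ by definition, the immediate regret $\regret^t_{\infoSet^{\TARGET}}(\dev^{\TARGET}) = \cfv_{\infoSet^{\TARGET}}(\dev^{\TARGET}\strat_i^t(\infoSet^{\TARGET}); \strat^t) - \cfv(\infoSet^{\TARGET}; \strat^t)$, so the target term equals $\regret^t_{\infoSet^{\TARGET}}(\dev^{\TARGET}) + \cfv(\infoSet^{\TARGET}; \strat^t)$. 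Writing $\infoSet_n \as \infoSet^{\TARGET}$, this added copy of $\cfv(\infoSet_n; \strat^t)$ cancels the $j = n-1$ term $-\cfv(\infoSet_{j+1}; \strat^t)$ of the telescoping sum. After subtracting $\cfv(\infoSet_0; \strat^t)$ and shifting indices, the instantaneous intermediate regret collapses to
\begin{align*}
  \cfv(\infoSet_0; \dev(\strat_i^t), \strat_{-i}^t) - \cfv(\infoSet_0; \strat^t)
    = \regret^t_{\infoSet^{\TARGET}}(\dev^{\TARGET}) + \sum_{j=0}^{n-1} \subex*{\cfv_{\infoSet_j}(a_j; \strat^t) - \cfv(\infoSet_j; \strat^t)}.
\end{align*}
Each summand is exactly the immediate counterfactual regret at $\infoSet_j$ for the external transformation $\dev^{\EXT}_{a_j}$ that ignores its input and plays the fixed follow-action $a_j$: indeed $\cfv(\infoSet_j; \strat^t) = \cfv_{\infoSet_j}(\strat_i^t(\infoSet_j); \strat^t)$ and $\cfv_{\infoSet_j}(\dev^{\EXT}_{a_j}\strat_i^t(\infoSet_j); \strat^t) = \cfv_{\infoSet_j}(a_j; \strat^t)$, so $\dev^{\EXT}_{a_j} \in \DevSet^{\EXT}_{\Actions(\infoSet_j)}$.

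Finally I would sum over $t = 1, \dots, T$ and bound each fixed-deviation cumulative regret by its maximum. Because $\dev^{\TARGET}$ and each $\dev^{\EXT}_{a_j}$ are the same transformation on every round, $\sum_{t} \regret^t_{\infoSet^{\TARGET}}(\dev^{\TARGET}) = \regret^{1:T}_{\infoSet^{\TARGET}}(\dev^{\TARGET}) \le \Regret^T_{\infoSet^{\TARGET}}(\DevSet^{\TARGET})$ and likewise $\sum_{t} \regret^t_{\infoSet_j}(\dev^{\EXT}_{a_j}) \le \Regret^T_{\infoSet_j}(\DevSet^{\EXT}_{\Actions(\infoSet_j)})$, each bound being just ``a fixed element is at most the max.'' Since $\infoSet_0, \dots, \infoSet_{n-1}$ are precisely the information sets with $\infoSet_0 \preceq \infoSet \prec \infoSet^{\TARGET}$, summing these maxima yields the claimed bound. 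The only genuinely delicate point is the endpoint bookkeeping in the telescope — verifying that adding $\cfv(\infoSet^{\TARGET}; \strat^t)$ cancels the final telescoped term and that subtracting $\cfv(\infoSet_0; \strat^t)$ realigns the indices so the surviving term in summand $j$ is $\cfv(\infoSet_j; \strat^t)$ at $\infoSet_j$ itself. Everything after that is matching the definition of an action-level external transformation and replacing each fixed-deviation cumulative regret by the corresponding maximum.
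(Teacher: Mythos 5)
Your proof is correct and arrives at the paper's bound, but it is organized differently from the paper's own argument. The paper's appendix proof never writes down the exact telescoped identity you start from: it proves a one-step inequality at the target's predecessor $\infoSet_{n-1}$ --- adding and subtracting $\cfv(\infoSet^{\TARGET}; \strat^t)$, splitting off the value of ``histories that do not lead into $\infoSet^{\TARGET}$,'' then maximizing over actions at $\infoSet_{n-1}$ and over transformations at $\infoSet^{\TARGET}$ to get $\Regret^T_{\infoSet^{\TARGET}}(\DevSet^{\TARGET}) + \Regret^T_{\infoSet_{n-1}}(\DevSet^{\EXT}_{\Actions(\infoSet_{n-1})})$ --- and then recursively unrolls that relation up the chain of predecessors, so inequalities are interleaved with the recursion. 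You instead keep an exact per-round identity all the way down: the instantaneous intermediate regret equals $\regret^t_{\infoSet^{\TARGET}}(\dev^{\TARGET})$ plus one immediate regret per path information set, each with respect to the fixed constant transformation playing $a_j$, and you invoke ``a fixed element is at most the max'' exactly once, at the very end. Your organization is arguably cleaner and makes visible that the only slack in the theorem is the replacement of fixed-transformation cumulative regrets by maxima; the paper's recursive version buys self-containedness, since it derives everything directly from the immediate-plus-future value decomposition in \cref{eq:bellman}.

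The one caveat is that the telescoping identity you treat as given is asserted in the main text without proof, and its content is precisely the bookkeeping the paper's proof makes explicit: because $\dev$ re-correlates off the path, every successor $\infoSet' \in \InfoSets_i(\infoSet_j, a_j)$ other than $\infoSet_{j+1}$ contributes $\cfv(\infoSet'; \strat^t)$ unchanged, which is what turns $\cfv(\infoSet_j; \dev(\strat^t_i), \strat^t_{-i})$ into $\cfv_{\infoSet_j}(a_j; \strat^t) - \cfv(\infoSet_{j+1}; \strat^t) + \cfv(\infoSet_{j+1}; \dev(\strat^t_i), \strat^t_{-i})$. If you take that identity as part of the theorem's setup, as the paper's exposition presents it, your argument is complete; if it must be verified, a one-line induction along the path using \cref{eq:bellman} fills the gap, after which the rest of your argument goes through verbatim.
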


If the initial information set is one at the start of the game, we call $\dev$ a \emph{counterfactual deviation}.
We distinguish between two major variants---\emph{blind counterfactual deviations} that use external deviations at the target information set and \emph{informed counterfactual deviations} that use internal deviations there instead.
We visualize both in \cref{fig:dev-diagram} and now provide formal definitions.
\begin{definition}
  \label{def:cf-dev}
  A blind counterfactual deviation is defined by a pair, $\tuple{\infoSet^{\TARGET}, a^{\TARGET}}$, where $\infoSet^{\TARGET} \in \InfoSets_i$ is a target information set that the deviation plays to reach deterministically from the start of the game and $a^{\TARGET}$ is the action taken at $\infoSet^{\TARGET}$.
  The deviation follows recommendations at every other information set.
  Formally,
  \begin{align*}
    [\dev\pureStrat_i](\infoSet(h)) = \begin{dcases}
      a &\mbox{if } \exists h' \in \infoSet^{\TARGET}, a, \, ha \sqsubseteq h'a^{\TARGET} \\
      \pureStrat_i(\infoSet(h)) &\mbox{o.w.}\\
    \end{dcases}
  \end{align*}
\end{definition}
\begin{definition}
  \label{def:in-cf-dev}
  An informed counterfactual deviation is defined by a triple, $\tuple{\infoSet^{\TARGET}, a^{\TRIGGER}, a^{\TARGET}}$, where $\infoSet^{\TARGET} \in \InfoSets_i$ is a target information set that the deviation plays to reach deterministically from the start of the game.
  If $a^{\TRIGGER}$ is the recommendation at $\infoSet^{\TARGET}$, then $a^{\TARGET}$ is played, otherwise the recommendation is followed.
  The deviation follows recommendations at every other information set.
  Formally,
  \begin{equation*}
    [\dev\pureStrat_i](\infoSet(h)) = \begin{dcases}
        a &\mbox{if } \exists h' \in \infoSet^{\TARGET}, a, \, ha \sqsubseteq h'\\
        a^{\TARGET} &\mbox{if } h \in \infoSet^{\TARGET}, \, \pureStrat_i(\infoSet^{\TARGET}) = a^{\TRIGGER}\\
        \pureStrat_i(\infoSet(h)) &\mbox{o.w.}
      \end{dcases}
  \end{equation*}
\end{definition}

The following conclusions immediately follow from \cref{thm:int-cfr-decomp}, \cref{def:cf-dev}, and the definition of CFR:
\begin{corollary}
  \label{cor:cfr-cf-devs}
  CFR is no-regret/hindsight rational with respect to blind counterfactual deviations.
Its blind counterfactual deviation regret bound is the same as that for external deviations (see, \eg/, \cref{thm:cfr}).
 \end{corollary}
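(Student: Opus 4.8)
The plan is to recognize a blind counterfactual deviation as exactly the special case of the deviation analyzed in \cref{thm:int-cfr-decomp} whose initial information set $\infoSet_0$ lies at the start of the game and whose target transformation is the external (constant) map replacing whatever is recommended at $\infoSet^{\TARGET}$ with $a^{\TARGET}$; that is, $\DevSet^{\TARGET} = \DevSet^{\EXT}_{\Actions(\infoSet^{\TARGET})}$, for which $\Regret^T_{\infoSet^{\TARGET}}(\DevSet^{\TARGET}) = \Regret^T_{\infoSet^{\TARGET}}(\DevSet^{\EXT}_{\Actions(\infoSet^{\TARGET})})$. With this identification, \cref{def:cf-dev} tells us that $\dev$ follows the recommendations at every information set off the deterministic path to $\infoSet^{\TARGET}$, so the whole argument reduces to reading off the bound of \cref{thm:int-cfr-decomp} and comparing it with the external bound of \cref{thm:cfr}.

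First I would relate the intermediate counterfactual regret at the root to the ordinary regret $\regret^{1:T}(\dev)$ of \cref{eq:regret}. By perfect recall there is a unique root information set $\infoSet_0$ preceding $\infoSet^{\TARGET}$, and $\dev$ leaves play unaltered in the subtrees beneath every other root information set and along every line on which player $i$ never acts. I would therefore decompose $\utility_i(\strat)$ as the sum, over player $i$'s root information sets, of their counterfactual values $\cfv(\infoSet_0; \strat)$ plus the contribution of terminal histories along which player $i$ never acts. This decomposition is valid precisely because a root information set carries player $i$'s own reach probability of one, so that $\cfv(\infoSet_0; \strat)$ coincides with the genuine contribution of $\infoSet_0$ to $\utility_i(\strat)$. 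Taking the difference between $\dev(\strat_i^t)$ and $\strat_i^t$, every summand except the $\infoSet_0$ term cancels, yielding $\regret^{1:T}(\dev) = \sum_{t=1}^T \cfv(\infoSet_0; \dev(\strat_i^t), \strat_{-i}^t) - \cfv(\infoSet_0; \strat^t)$.

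Next I would substitute into \cref{thm:int-cfr-decomp}, which bounds this sum by $\Regret^T_{\infoSet^{\TARGET}}(\DevSet^{\EXT}_{\Actions(\infoSet^{\TARGET})}) + \sum_{\infoSet_0 \preceq \infoSet \prec \infoSet^{\TARGET}} \Regret^T_{\infoSet}(\DevSet^{\EXT}_{\Actions(\infoSet)})$, a sum of immediate external counterfactual regrets over the chain from $\infoSet_0$ up to and including $\infoSet^{\TARGET}$. Replacing each term by its positive part and then enlarging the index set from this chain to all of $\InfoSets_i$ only adds nonnegative quantities, so every blind counterfactual deviation, whatever its target $\infoSet^{\TARGET}$ and action $a^{\TARGET}$, has regret at most $\sum_{\infoSet \in \InfoSets_i} (\Regret^T_{\infoSet}(\DevSet^{\EXT}_{\Actions(\infoSet)}))^+$. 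This is identical to the external-deviation bound of \cref{thm:cfr}, which proves the second assertion; and since CFR drives every immediate counterfactual regret $\Regret^T_{\infoSet}(\DevSet^{\EXT}_{\Actions(\infoSet)})$ to be sublinear in $T$ through its per-information-set no-external-regret learners, the average maximum blind counterfactual regret vanishes, establishing hindsight rationality.

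The step I expect to demand the most care is the utility-to-counterfactual-value reduction: one must check that $\utility_i$ really does split over the root information sets and that only the $\infoSet_0$ term survives the difference. This rests on the two structural facts noted above—that player $i$'s reach to a root information set is one, so the counterfactual value there is the true utility contribution of that information set, and that perfect recall forces a single root ancestor of $\infoSet^{\TARGET}$. Once this identity is secured, the remainder is a direct appeal to \cref{thm:int-cfr-decomp} followed by the positive-part and superset-bounding manipulations, both of which are routine.
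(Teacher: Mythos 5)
Your proposal is correct and follows exactly the route the paper intends: the paper states that \cref{cor:cfr-cf-devs} ``immediately follows'' from \cref{thm:int-cfr-decomp}, \cref{def:cf-dev}, and the definition of CFR, and your argument is precisely that chain---instantiate \cref{thm:int-cfr-decomp} with $\infoSet_0$ at the root and $\DevSet^{\TARGET} = \DevSet^{\EXT}_{\Actions(\infoSet^{\TARGET})}$, identify the root intermediate counterfactual regret with $\regret^{1:T}(\dev)$, and relax to the $\sum_{\infoSet \in \InfoSets_i} \subex*{\Regret^T_{\infoSet}\subex*{\DevSet^{\EXT}_{\Actions(\infoSet)}}}^+$ bound of \cref{thm:cfr}. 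The details you fill in (the decomposition of $\utility_i$ over player $i$'s root information sets and the positive-part/superset bounding) are the routine steps the paper leaves implicit, and they are handled correctly.
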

\begin{corollary}
  \label{cor:cfr-in-in-cf-devs}
  CFR with no-internal-regret learners is no-regret/hindsight rational with respect to informed counterfactual deviations.
Its cumulative informed counterfactual deviation regret is upper bounded by
$\sum_{\infoSet \in \InfoSets_i}
    \subex*{\Regret^T_{\infoSet}\subex*{
      \DevSet^{\INT}_{\Actions(\infoSet)}
      \cup
      \DevSet^{\EXT}_{\Actions(\infoSet)}
      }
    }^+$.
 \end{corollary}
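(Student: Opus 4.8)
The plan is to read off an informed counterfactual deviation as exactly the deviation treated in \cref{thm:int-cfr-decomp}, specialized so that the initial information set $\infoSet_0$ lies at the start of the game and the transformation applied at the target is internal rather than a general swap, i.e.\ $\dev^{\TARGET} \in \DevSet^{\INT}_{\Actions(\infoSet^{\TARGET})} \subseteq \DevSet^{\SWAP}_{\Actions(\infoSet^{\TARGET})}$. By \cref{def:in-cf-dev}, such a $\dev$ follows recommendations at every information set off the unique predecessor chain of $\infoSet^{\TARGET}$, so among player $i$'s earliest information sets only the one $\infoSet_0$ on that chain is relevant; everywhere else the deviation and the recommendations coincide and contribute no regret. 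Writing $\DevSet$ for the set of all informed counterfactual deviations, the goal is then to aggregate the per-deviation bound of \cref{thm:int-cfr-decomp} over every choice of target and internal transformation.

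First I would establish the identity $\regret^{1:T}(\dev) = \sum_{t = 1}^T \subex*{\cfv(\infoSet_0; \dev(\strat_i^t), \strat_{-i}^t) - \cfv(\infoSet_0; \strat^t)}$, which links the full-game regret of \cref{eq:regret} to the root counterfactual regret that \cref{thm:int-cfr-decomp} bounds. The point is that at an initial information set player $i$ has reach probability one, so the counterfactual weighting $\reachProb(h; \strat_{-i})$ coincides with the true terminal reach probability, and perfect recall guarantees that every terminal affected by the deviation passes through $\infoSet_0$ exactly once. I expect this identification to be the main obstacle, since it requires care with the reach-probability bookkeeping and with the possibility of several distinct initial information sets, each of which contributes no regret unless it precedes $\infoSet^{\TARGET}$.

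Granting the identity, I would apply \cref{thm:int-cfr-decomp} with $\DevSet^{\TARGET} = \DevSet^{\INT}_{\Actions(\infoSet^{\TARGET})}$ and then maximize over all of $\DevSet$. For a fixed target, maximizing over the internal transformation $\dev^{\TARGET}$ turns the target term into $\Regret^T_{\infoSet^{\TARGET}}\subex*{\DevSet^{\INT}_{\Actions(\infoSet^{\TARGET})}}$, giving
\begin{align*}
  \Regret^T(\DevSet)
    \le \max_{\infoSet^{\TARGET} \in \InfoSets_i}
      \subblock*{
        \Regret^T_{\infoSet^{\TARGET}}\subex*{\DevSet^{\INT}_{\Actions(\infoSet^{\TARGET})}}
        + \sum_{\infoSet \prec \infoSet^{\TARGET}}
            \Regret^T_{\infoSet}\subex*{\DevSet^{\EXT}_{\Actions(\infoSet)}}
      }.
\end{align*}
I would then bound the internal target term and each external predecessor term from above by $\subex*{\Regret^T_{\infoSet}\subex*{\DevSet^{\INT}_{\Actions(\infoSet)} \cup \DevSet^{\EXT}_{\Actions(\infoSet)}}}^+$, using that maximum regret is monotone in the deviation set and that $x \le x^+$. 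Since every summand is nonnegative, dropping the chain restriction $\infoSet \prec \infoSet^{\TARGET}$ and summing over all of $\InfoSets_i$ only increases the bound, which yields exactly the claimed $\sum_{\infoSet \in \InfoSets_i} \subex*{\Regret^T_{\infoSet}\subex*{\DevSet^{\INT}_{\Actions(\infoSet)} \cup \DevSet^{\EXT}_{\Actions(\infoSet)}}}^+$ and removes the dependence on the particular target.

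Finally, for the hindsight-rationality claim I would note that $\DevSet^{\INT}_{\Actions(\infoSet)} \cup \DevSet^{\EXT}_{\Actions(\infoSet)} \subseteq \DevSet^{\SWAP}_{\Actions(\infoSet)}$ and that a no-internal-regret learner drives swap regret---and hence this union regret---sublinear, via the standard internal-to-swap reduction underlying the correspondence between internal transformations and CE noted earlier (swap regret is at most $\abs{\Actions(\infoSet)}$ times the internal regret). Consequently each term in the bound is $\smallo{T}$, so the cumulative informed counterfactual deviation regret is sublinear and CFR with no-internal-regret learners is hindsight rational with respect to informed counterfactual deviations.
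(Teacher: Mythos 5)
Your proposal is correct and follows essentially the same route as the paper, which derives this corollary directly from \cref{thm:int-cfr-decomp} (instantiated with an initial information set $\infoSet_0$ at the start of the game and $\DevSet^{\TARGET} = \DevSet^{\INT}_{\Actions(\infoSet^{\TARGET})}$), bounding each chain term by the positive-part union regret and summing over all of $\InfoSets_i$. The details you fill in---the identification of full-game regret with root counterfactual regret, and the internal-to-swap reduction that makes no-internal-regret learners suffice for the union $\DevSet^{\INT}_{\Actions(\infoSet)} \cup \DevSet^{\EXT}_{\Actions(\infoSet)}$---are exactly the steps the paper leaves implicit when it says the result ``immediately follows.''
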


The counterfactual deviations we have just described correspond to two new equilibrium concepts:
\begin{definition}
A recommendation distribution is a counterfactual coarse-correlated equilibria (CFCCE) if there are no beneficial blind counterfactual deviations.
\end{definition}
\begin{definition}
A recommendation distribution is a counterfactual correlated equilibria (CFCE) if there are no beneficial informed counterfactual deviations.
\end{definition}
See the CF row and column in \cref{tab:eq-class-rel} for a summary of how counterfactual equilibria relate to other equilibrium concepts.
Since counterfactual deviations can re-correlate, they achieve the same improved value in the extended battle-of-the-sexes game as action deviations.
Counterfactual deviations can also transform multiple actions in a sequence, which is not true of action deviations, so the In or Out game does not pose a problem either.

\subsection{Counterfactual Deviation Weaknesses}
\label{sec:cf-dev-properties}

Counterfactual deviations are limited in that they cannot correlate before deviating and they cannot deviate to multiple information sets simultaneously.

\subsubsection{Beneficial Blind Causal Deviation in a CFCE:}

Consider an extended version of matching pennies where player one privately decides whether their goal is to Match (\Match/) with other player or to Not Match (\NotMatch/).
If player one achieves their goal, they get $+1$ while the other player gets $-1$ and \viceversa/ otherwise.

The joint recommendations
$\set*{
  \tuple*{\mNotMatch, \; \mHeads \given \mMatch, \; \mTails \given \mNotMatch},
  \tuple*{\mHeads}
}$
give player one $+1$ while
$\set*{
  \tuple*{\mMatch, \; \mHeads \given \mMatch, \; \mTails \given \mNotMatch},
  \tuple*{\mTails}
}$
yields $-1$, for an average of zero.
Since this time the decision holding player one back is the second action in the sequence at the Match information set where they should always play \Tails/, there is no better counterfactual deviation.
But deviating to \Tails/ in the Match information set, while following the Match or Not Match recommendation is a valid causal or action deviation.
The resulting deviation strategy profile distribution is uniform over
$\subblock*{
  \set*{
    \tuple*{\mNotMatch, \; \mTails \given \mMatch, \; \mTails \given \mNotMatch},
    \tuple*{\mHeads}
  },
  \set*{
    \tuple*{\mMatch, \; \mTails \given \mMatch, \; \mTails \given \mNotMatch},
    \tuple*{\mTails}
  }
}$,
which achieves a value of $+1$.
See \cref{fig:mp-dev} for a visualization of this deviation.

\begin{figure}[t]
\centering
\begin{tikzpicture}[inner sep=0cm]
  \thinDeviationExampleMatrix{
    \node(rec1Label) {\textbf{rec 1:}};
    \node(rec1P2State) [state, right=0.5cm of rec1Label.east, anchor=west, font=\small] {2};
    \node(rec1P2H) [below left=0.4cm and 0.4cm of rec1P2State.south west] {};
    \node(rec1P2T) [below right=0.4cm and 0.4cm of rec1P2State.south east] {};
    \drawEdgeL{{\small H}}{rec1P2State}{rec1P2H}{follow};
    \drawEdgeR{{\small T}}{rec1P2State}{rec1P2T}{alt}; \&

    \node(rec2Label) {\textbf{rec 2:}};
    \node(rec2P2State) [state, right=0.5cm of rec2Label.east, anchor=west, font=\small] {2};
    \node(rec2P2H) [below left=0.4cm and 0.4cm of rec2P2State.south west] {};
    \node(rec2P2T) [below right=0.4cm and 0.4cm of rec2P2State.south east] {};
    \drawEdgeL{{\small H}}{rec2P2State}{rec2P2H}{alt};
    \drawEdgeR{{\small T}}{rec2P2State}{rec2P2T}{follow}; \\[-0.1cm]

    \node(causDev2-1) [state] {};
    \devRecTreeNodeCoords{causDev2-1};
    \deviationExampleStates{causDev2-1}{dev}{};
    \mpExampleTerminalsOne{causDev2-1}{}{}{}{draw=black};
    \devRecRootEdges{causDev2-1}{M}{alt}{$\neg\text{M}$}{follow};
    \devRecLEdges{causDev2-1}{H}{follow}{T}{dev};
    \devRecREdges{causDev2-1}{H}{alt}{T}{follow}; \&

    \node(causDev2-2) [state] {};
    \devRecTreeNodeCoords{causDev2-2};
    \deviationExampleStates{causDev2-2}{dev}{};
    \mpExampleTerminalsTwo{causDev2-2}{}{draw=black}{}{};
    \devRecRootEdges{causDev2-2}{M}{follow}{$\neg\text{M}$}{alt};
    \devRecLEdges{causDev2-2}{H}{follow}{T}{dev};
    \devRecREdges{causDev2-2}{H}{alt}{T}{follow}; \\
  };
\end{tikzpicture}
  \caption{A beneficial causal and action deviation for player one in a CFCE in extended matching pennies.
    Black lines denote recommendations, red lines denote deviations from unobserved recommendations, and grey lines mark actions that are not recommended or utilized by the deviation.}
  \label{fig:mp-dev}
\end{figure}
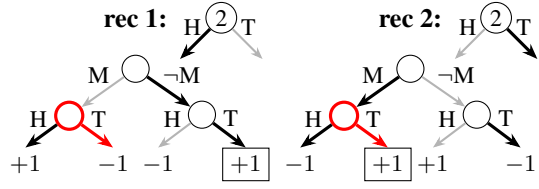
 
The game can be found in OpenSpiel~\cite{LanctotEtAl2019OpenSpiel} under the name {\tt extended\_mp.efg}.
See Appendix A.3 for an illustration of this game's extensive form, the CFCE recommendation distribution, and all of the possible deviations along with their values.

From this example, we can see why a causal or action deviation could outperform a counterfactual deviation in the extended Shapley's game.
Under the usual Shapley dynamics where players settle into a cycle that puts $1/6$ probability on Rock--Paper, Rock--Scissors, Paper--Rock, Paper--Scissors, Scissors--Rock, Scissors--Paper while avoiding Rock--Rock, Paper--Paper, and Scissors--Scissors, it is beneficial to predict Rock whenever you do not play Rock yourself.
However, counterfactual deviations cannot condition behavior at one information set according to play at a previous information set by definition, thereby preventing any counterfactual deviation from taking advantage of this fact.
This also explains why the performance of optimal causal and action deviations is the same in this game: the only causal deviations that are not counterfactual deviations in a two action game are in fact action deviations.

To be clear, it is not that a player considering a counterfactual deviation forgets that they are in an information set where they have chosen to play Rock.
It is instead that their appraisal of whether to predict Rock or not Rock is based on the average utility they receive for that prediction given that the other player chooses Rock, Paper, and Scissors equally often, which is what the player would observe if they were to always choose Rock.
Counterfactual deviations do not take into account the correlation information that may be present in previous decisions.
CFR fails to minimize regret for action or causal deviations in this case because it evaluates its actions in exactly the same way through its use of counterfactual values and regrets.

\correctionStart{Present MacQueen's counterexample.}
\subsubsection{Beneficial External Deviation in a CFCE:}

\def\leftAction/{L}
\def\rightAction/{R}

This example comes from \textcite{macQueen2022counterexampleSlackMessages}.
Player one first chooses between actions Left (\leftAction/) and Right (\rightAction/).
If they choose \leftAction/, the game ends and they receive $+1$.
If they choose \rightAction/, player two publicly chooses between \leftAction/ and \rightAction/, and player one ends up in one of two information sets with identical payoffs.
If player one's second action is \leftAction/, they receive $-2$, if they choose \rightAction/, they receive $+2$.

\begin{figure}[t]
  \centering
  \begin{tikzpicture}[inner sep=0cm]
    \tikzstyle{obsState} = [circle, draw=black, minimum size=0.4*2*\stateMinimumRadius, inner sep=0.2mm, fill=black]

    \tikzstyle{column 1} = [anchor=north]
    \tikzstyle{column 2} = [anchor=north]
    \matrix[column sep=0.05cm, ampersand replacement=\&]{
      \node(rec1Label) {\textbf{rec 1:}}; \&

      \node(rec2Label) {\textbf{rec 2:}}; \\

      \node(exDev-1) [state, dev] {};
      \devRecTreeInternalCoords{exDev-1}
      \node(exDev-1L) [util] at (exDev-1LCoord) {$+1$};
      \node(exDev-1R) [obsState] at (exDev-1RCoord) {};

      \coordinate(exDev-1RLCoord) at ($(exDev-1R.south west)+(-\cmToEmFactor*1.1em,-\cmToEmFactor*0.6em)$);
      \coordinate(exDev-1RRCoord) at ($(exDev-1R.south east)+(\cmToEmFactor*1.1em,-\cmToEmFactor*0.6em)$);
      \node(exDev-1RL) [state, dev] at (exDev-1RLCoord) {};
      \devRecTreeTerminalCoords{exDev-1RL}
      \node(exDev-1RLL) [util] at (exDev-1RLLCoord) {$-2$};
      \node(exDev-1RLR) [util, draw=black] at (exDev-1RLRCoord) {$+2$};

      \node(exDev-1RR) [state, dev] at (exDev-1RRCoord) {};
      \devRecTreeTerminalCoords{exDev-1RR}
      \node(exDev-1RRL) [util] at (exDev-1RRLCoord) {$-2$};
      \node(exDev-1RRR) [util] at (exDev-1RRRCoord) {$+2$};

      \drawEdgeL{\leftAction/}{exDev-1}{exDev-1L}{follow}
      \drawEdgeL{\rightAction/}{exDev-1}{exDev-1R}{dev}

      \devRecRootEdges{exDev-1R}{\leftAction/}{follow}{\rightAction/}{zeroProb}
      \devRecLEdges{exDev-1R}{\leftAction/}{follow}{\rightAction/}{dev}
      \devRecREdges{exDev-1R}{\leftAction/}{follow}{\rightAction/}{dev} \&

      \node(exDev-2) [state, dev] {};
      \devRecTreeInternalCoords{exDev-2}
      \node(exDev-2L) [util] at (exDev-2LCoord) {$+1$};
      \node(exDev-2R) [obsState] at (exDev-2RCoord) {};

      \coordinate(exDev-2RLCoord) at ($(exDev-2R.south west)+(-\cmToEmFactor*1.1em,-\cmToEmFactor*0.6em)$);
      \coordinate(exDev-2RRCoord) at ($(exDev-2R.south east)+(\cmToEmFactor*1.1em,-\cmToEmFactor*0.6em)$);
      \node(exDev-2RL) [state, dev] at (exDev-2RLCoord) {};
      \devRecTreeTerminalCoords{exDev-2RL}
      \node(exDev-2RLL) [util] at (exDev-2RLLCoord) {$-2$};
      \node(exDev-2RLR) [util] at (exDev-2RLRCoord) {$+2$};

      \node(exDev-2RR) [state, dev] at (exDev-2RRCoord) {};
      \devRecTreeTerminalCoords{exDev-2RR}
      \node(exDev-2RRL) [util] at (exDev-2RRLCoord) {$-2$};
      \node(exDev-2RRR) [util, draw=black] at (exDev-2RRRCoord) {$+2$};

      \drawEdgeL{\leftAction/}{exDev-2}{exDev-2L}{follow}
      \drawEdgeL{\rightAction/}{exDev-2}{exDev-2R}{dev}

      \devRecRootEdges{exDev-2R}{\leftAction/}{zeroProb}{\rightAction/}{follow}
      \devRecLEdges{exDev-2R}{\leftAction/}{follow}{\rightAction/}{dev}
      \devRecREdges{exDev-2R}{\leftAction/}{follow}{\rightAction/}{dev} \\
    };
  \end{tikzpicture}
    \caption{A beneficial external deviation for player one in a CFCE in MacQueen's counterexample.
      Black lines denote recommendations, red lines denote deviations from unobserved recommendations, and grey lines mark actions that are not recommended or used by the deviation.}
    \label{fig:macqueenCounterexample}
\end{figure}
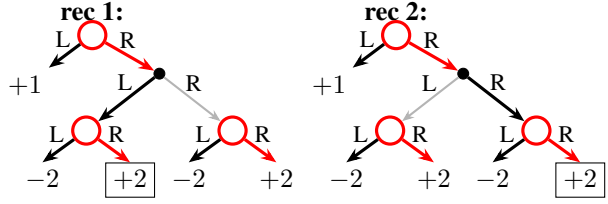

The CFCE recommendations are for player one to always choose \leftAction/ and for player two to switch between \leftAction/ and \rightAction/:
$
  \subblock*{
\set*{
        \tuple{
          \text{\leftAction/}, \; \text{\leftAction/} \given \text{\rightAction/ \leftAction/}, \; \text{\leftAction/} \given \text{\rightAction/ \rightAction/}
        },
        \tuple{\text{\leftAction/}}
      },
      \set*{
        \tuple{
          \text{\leftAction/}, \; \text{\leftAction/} \given \text{\rightAction/ \leftAction/}, \; \text{\leftAction/} \given \text{\rightAction/ \rightAction/}
        },
        \tuple{\text{\rightAction/}}
      }
}
$.
Player one achieves a value of $+1$ averaged across these two strategy profiles since they always end the game before the other player can act.
We know that these recommendations are a CFCE because the counterfactual deviations either leave the recommendations unmodified or they play the \rightAction/ action in the root information set and then play \rightAction/ in exactly \emph{one} of the two successor information sets.
Making any such modification ensures that player one achieves $+2$ under one recommendation and $-2$ under the other, resulting in an average value of $0$.
According to \cref{def:cf-dev,def:in-cf-dev}, counterfactual deviations can only modify strategies along a single path, so there is no counterfactual deviation that plays \rightAction/ in \emph{both} of the successor information sets.

However, the external deviation that always plays \rightAction/ achieves a value of $+2$ averaged across these two strategy profiles.
Thus, we have a beneficial external deviation in a CFCE.
The recommendations and the beneficial external deviation are visualized in \cref{fig:macqueenCounterexample}.
\correctionEnd

\section{Observable Sequential Rationality}

\correctionStart{Wording change.}
Since we already know from \textcite{cfr} that CFR's self-play equilibrium class is in the set of CCEs, CFR must be more than hindsight rational for the blind counterfactual deviations.
So far, we have examined properties of CFR that emerge from regret decomposition, but we have only used bounds on regrets that begin at the start of the game.
Regret decomposition, however, applies at every subtree within an information set forest in a way that resembles sequential rationality~\parencite{KrepsWilson82}.
We introduce \emph{observable sequential rationality} and \emph{observable sequential equilibrium} to define CFR's refinement of CCE and CFCCE.
\correctionEnd

Sequential rationality is based on examining the reward that follows from a particular information set $\infoSet$,
\begin{align*}
  \utility_i(\strat \given \xi_{\infoSet}) \as \E_{h \sim \xi_{\infoSet}}\subblock*{
      \sum_{\substack{
        z \sqsupseteq h\\
        z \in \TerminalHistories
      }}
        \reachProb(h, z; \strat) \utility_i(z)},
\end{align*}
where histories are distributed according to a belief distribution,
$\xi_{\infoSet} : \infoSet \to [0, 1]$.
An assessment, that is, a system of beliefs and strategy pair,
$(\xi, \strat_i)$,
is sequentially rational if
$\utility_i(\dev(\strat_i), \strat_{-i}) \given \xi_{\infoSet}) - \utility_i(\strat \given \xi_{\infoSet}) \le 0$
for each information set $\infoSet \in \InfoSets_i$
and external deviation $\dev$.

Different types of equilibria have been proposed that reuse the optimality constraint described by sequential rationality but construct systems of beliefs in different ways.
Typically, beliefs must be consistent, which means that each belief distribution is defined as
\begin{align*}
  \xi_{\infoSet} : h \to \dfrac{\reachProb(h; \strat_{-i})}{\sum_{h' \in \infoSet(h)} \reachProb(h'; \strat_{-i})},
\end{align*}
as derived by applying Bayes' rule according to $\strat_{-i}$.\footnote{Note that $\strat_i$ cannot skew the history distribution at player $i$'s information set because of perfect recall.}
However, if $\infoSet$ is unreachable, \ie/, the probability of reaching each history in $\infoSet$ is zero, then $\xi_{\infoSet}$ is undefined and it is unclear what beliefs should be held at $\infoSet$.
The method of constructing $\xi_{\infoSet}$ for unreachable positions differentiates various types of sequentially-refined equilibria.
Instead, we reuse the notion of sequential rationality under consistent beliefs without constructing beliefs at unreachable information sets at all.

A natural extension to the mediated execution setting is to define a sequentially rational recommendation distribution,
$\recDist \in \simplex^{\abs{\PureStratSet}}$,
as one where the benefit of a deviation according to a system of beliefs $\xi^{\pureStrat_{-i}}$ consistent with mediator-sampled strategies
$\pureStrat_{-i} \in \PureStratSet_{-i}$
is non-positive,
\begin{align}
  \E_{\pureProfile \sim \recDist}\subblock*{
    \utility_i(\dev(\pureStrat_i), \pureStrat_{-i}) \given \xi_{\infoSet}) - \utility_i(\pureStrat \given \xi_{\infoSet})
  } \le 0,
  \label{eq:seq-rational}
\end{align}
at every information set.
Notice that at every reachable information set, \cref{eq:seq-rational} differs from full counterfactual regret only by the Bayes normalization constant~\parencite{srinivasan2018actor}, since counterfactual values are weighted by the probability that the other players play to each history.
At unreachable information sets, the counterfactual value function is always zero so a sequential rationality-like constraint based on counterfactual values is weaker only in that all potential behavior is acceptable at unreachable positions.
However, since an unreachable position corresponds to a situation that was never experienced, the difference to a hindsight analysis is entirely inconsequential.
We call this condition \emph{observable sequential rationality} because it is indistinguishable from sequential rationality under Bayesian beliefs according to all outcomes that could be observed while playing with players who always follow their recommendations.
\begin{definition}
  \label{def:sub-seq-rationality}
  A recommendation distribution,
  $\recDist \in \simplex^{\abs{\PureStratSet}}$,
  is observably sequentially rational with respect to a set of deviations,
  $\DevSet \subseteq \DevSet^{\SWAP}_{\PureStratSet_i}$,
  on player $i$'s strategies, if and only if the maximum benefit for every deviation,
  $\dev \in \DevSet$,
  \correctionStart{Generalize original definition to account for reach-probability weights so that the definition is clear for internal and external deviations.}
  according to the deviation reach-probability-weighted counterfactual value function at every information set,
  $\infoSet \in \InfoSets_i$,
  is non-positive, \ie/,
  \begin{align*}
    \max_{\dev \in \DevSet}
      \E_{\pureProfile \sim \recDist}\subblock*{
        \reachProb(h; \dev(\pureStrat_i))
          \subex*{
            \cfv(\infoSet; \dev(\pureStrat_i), \pureStrat_{-i})
            - \cfv(\infoSet; \pureProfile)
          }
      }
      \le 0,
  \end{align*}
  where $h$ is an arbitrary history in $\infoSet$.
  \correctionEnd
\end{definition}

\correctionStart{Add theorem of how observable sequential rationality elevates the CF deviations.}
If the recommendation distribution $\recDist$ is constructed from $T$ mixed strategy profiles and each deviation is external or counterfactual, then the observable sequential rationality gap for player $i$ at information set $\infoSet$ is the positive part of their maximum average full regret from $\infoSet$, \ie/,
\begin{align}
  \max_{\dev \in \DevSet}
    \E_{\pureProfile \sim \recDist}\subblock*{
      \reachProb(h; \dev(\pureStrat_i))
        \subex*{
          \cfv(\infoSet; \dev(\pureStrat_i), \pureStrat_{-i})
          - \cfv(\infoSet; \pureProfile)
        }
    }
      = \max_{\dev \in \DevSet}
        \dfrac{1}{T} \subex*{\regret_{\infoSet}^{1:T}(\dev)}^+,
\end{align}
where $\subex*{\cdot}^+ = \max \set{\cdot, 0}$.
This equality holds because, since $\dev$ is external or counterfactual, either $\reachProb(h; \dev(\pureStrat_i)) = 1$ ($\dev$ plays to $\infoSet$) or the difference
$\subex*{
  \cfv(\infoSet; \dev(\pureStrat_i), \pureStrat_{-i})
  - \cfv(\infoSet; \pureProfile)
} = 0$
due to $\dev$'s re-correlation at or before $\infoSet$.

Using \textcite{cfr}'s regret decomposition, we can show that the strength of the blind counterfactual deviations is elevated to subsume the external deviations when combined with observable sequential rationality.
\begin{theorem}
  \label{thm:cfDevElevationWithOsr}
  If the average full regret for player $i$ at each information set $\infoSet$ for the sequence of $T$ mixed strategy profiles,
$\tuple{ \strat^t }_{t = 1}^T$,
with respect to each blind counterfactual deviation
$\dev$
is
$d_{\infoSet}(\dev)f(T) \ge 0$,
where
$d_{\infoSet}(\dev)$
is the number of external action transformations
$\dev$
applies from $\infoSet$ to the end of the game, then the observable sequential rationality gap for $i$ with respect to the blind counterfactual and external deviations is no more than
$\abs{\InfoSets_i} f(T)$. \end{theorem}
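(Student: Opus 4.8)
The plan is to reduce the observable sequential rationality gap at each information set to a normalized full regret, and then to bound that full regret by unrolling the regret decomposition of \cref{lem:cfr-decomp}. The displayed identity preceding the theorem already does most of the first step: since every deviation in question is external or blind counterfactual, the gap at a fixed $\infoSet$ equals $\max_{\dev} \frac{1}{T}\subex*{\regret^{1:T}_{\infoSet}(\dev)}^+$, the positive part of the maximum average full regret from $\infoSet$. So it suffices to show that this quantity is at most $\abs{\InfoSets_i} f(T)$ for both deviation families, uniformly over $\infoSet \in \InfoSets_i$; the theorem's ``gap for $i$'' is then the maximum of these per-information-set gaps.

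For a blind counterfactual deviation $\dev$, the hypothesis directly supplies the average full regret from $\infoSet$ as $d_{\infoSet}(\dev) f(T) \ge 0$. Because such a deviation overrides recommendations only along the single path from $\infoSet$ to its target and follows recommendations thereafter, $d_{\infoSet}(\dev)$ is at most the number of player-$i$ information sets on that path, hence at most $\abs{\InfoSets_i}$. Thus the blind counterfactual contribution at $\infoSet$ is bounded by $\abs{\InfoSets_i} f(T)$.

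The external case is the crux. Here I would unroll \cref{lem:cfr-decomp} starting from $\infoSet$: each application rewrites a full external regret as an immediate external regret plus, for the maximizing action, the full external regrets at the immediately succeeding information sets. Iterating until every branch terminates expresses $\Regret^T_{\infoSet}(\DevSet^{\EXT}_{\PureStratSet_i})$ as a sum of immediate external regrets $\Regret^T_{\infoSet'}(\DevSet^{\EXT}_{\Actions(\infoSet')})$ over a subtree of player $i$'s information sets rooted at $\infoSet$, which by perfect recall contains at most $\abs{\InfoSets_i}$ distinct information sets. The key observation is that each such immediate external regret is exactly the full regret from $\infoSet'$ of a blind counterfactual deviation targeting $\infoSet'$ with a single action transformation there, because \cref{thm:int-cfr-decomp} with $\infoSet_0 = \infoSet^{\TARGET} = \infoSet'$ leaves an empty telescoping sum; by the hypothesis (with $d_{\infoSet'} = 1$) its normalized value is at most $f(T)$. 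Summing over the subtree gives $\frac{1}{T}\Regret^T_{\infoSet}(\DevSet^{\EXT}_{\PureStratSet_i}) \le \abs{\InfoSets_i} f(T)$.

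Combining the two cases, the normalized maximum full regret from every $\infoSet$ is at most $\abs{\InfoSets_i} f(T)$, so the positive part and the maximum over $\infoSet \in \InfoSets_i$ yield the claimed bound. The hardest part will be the bookkeeping in the external case: the decomposition sums over a whole subtree rather than a single path, since an external deviation commits to a full strategy and $\InfoSets_i(\infoSet, a)$ may contain several successors, so I must argue carefully that the subtree visits each information set at most once and therefore contributes at most $\abs{\InfoSets_i}$ summands, and that the identification of each immediate external regret with a $d_{\infoSet'} = 1$ blind counterfactual deviation is an exact equality rather than merely an inequality.
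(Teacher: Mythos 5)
Your proposal is correct and follows essentially the same route as the paper's proof: both rest on unrolling \cref{lem:cfr-decomp} over player $i$'s information-set tree, bounding each resulting immediate external regret by $f(T)$ by identifying it with a single-transformation ($d_{\infoSet'}(\dev) = 1$) blind counterfactual deviation covered by the hypothesis, and counting that at most $\abs{\InfoSets_i}$ such terms arise. The paper merely packages the unrolling as an induction on the height of the information-set tree (treating external and blind counterfactual deviations jointly), while you unroll directly into a subtree sum and dispatch the blind counterfactual case separately via $d_{\infoSet}(\dev) \le \abs{\InfoSets_i}$; this is a presentational difference, not a different argument.
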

\correctionEnd

\correctionStart{Wording change.}
Two corollaries of \cref{thm:int-cfr-decomp,def:sub-seq-rationality,thm:cfDevElevationWithOsr} are:
\begin{corollary}
  \label{cor:os-cf-dev}
  CFR is observably sequentially hindsight rational with respect to blind counterfactual and external deviations.
  Its empirical play approaches exact rationality at the same rate as its average external regret vanishes (see, \eg/, \cref{thm:cfr}).
\end{corollary}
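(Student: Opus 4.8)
The plan is to assemble the corollary directly from CFR's defining property together with \cref{thm:int-cfr-decomp,thm:cfDevElevationWithOsr,def:sub-seq-rationality}; no new machinery is needed, only a verification that CFR supplies the hypotheses that each of these results requires. In particular, the substantive work has already been discharged by the elevation theorem, so the corollary is essentially a matter of instantiating its abstract rate $f(T)$ with the concrete rate CFR attains at each information set.

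First I would recall that CFR runs a no-external-regret learner at every information set $\infoSet \in \InfoSets_i$, so there is a common vanishing rate $f(T) \to 0$ (for instance $f(T) = \bigO{1/\sqrt{T}}$ with regret matching) that bounds the average positive immediate counterfactual regret uniformly, i.e.\ $\frac{1}{T}(\Regret^T_{\infoSet}(\DevSet^{\EXT}_{\Actions(\infoSet)}))^+ \le f(T)$ for all $\infoSet$. This is exactly the rate appearing in \cref{thm:cfr}, hence the rate at which the average external regret vanishes, which will later supply the ``same rate'' clause of the statement.

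Next I would invoke \cref{thm:int-cfr-decomp} to lift this per-information-set control to control of the full regret from any information set against any blind counterfactual deviation. For a blind counterfactual deviation $\dev$ whose target is reached from $\infoSet$, the cumulative intermediate counterfactual regret telescopes into a sum of immediate counterfactual regrets along the path to the target, so the average full regret from $\infoSet$ with respect to $\dev$ is at most $d_{\infoSet}(\dev) f(T)$, where $d_{\infoSet}(\dev)$ counts the external transformations $\dev$ applies from $\infoSet$ onward. This matches the hypothesis of \cref{thm:cfDevElevationWithOsr} with $f(T)$ taken as the uniform rate above, so that theorem yields that the observable sequential rationality gap for player $i$ against the blind counterfactual and external deviations is at most $\abs{\InfoSets_i} f(T)$. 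Since $\abs{\InfoSets_i} f(T) \to 0$, \cref{def:sub-seq-rationality} is satisfied in the limit, giving observable sequential hindsight rationality; and because the gap bound $\abs{\InfoSets_i} f(T)$ differs from the external-regret rate $f(T)$ only by the constant $\abs{\InfoSets_i}$, the empirical play approaches exact rationality at the same asymptotic rate as the average external regret vanishes, establishing the second sentence.

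I expect the only delicate point---rather than a genuine obstacle---to be the bookkeeping that matches CFR's uniform immediate-regret bound to the exact form $d_{\infoSet}(\dev) f(T) \ge 0$ demanded by \cref{thm:cfDevElevationWithOsr}: one must check that the positive-part operations in the decomposition preserve the additive, non-negative structure that the elevation theorem consumes, and that the gap controlled in \cref{def:sub-seq-rationality} is indeed the reach-probability-weighted quantity shown (in the displayed identity preceding \cref{thm:cfDevElevationWithOsr}) to equal $\max_{\dev} \frac{1}{T}(\regret^{1:T}_{\infoSet}(\dev))^+$ for external and counterfactual deviations, so that the two notions of ``gap'' coincide and the rate transfers cleanly.
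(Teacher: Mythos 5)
Your proposal is correct and follows exactly the route the paper intends: the paper states \cref{cor:os-cf-dev} as an immediate consequence of \cref{thm:int-cfr-decomp,def:sub-seq-rationality,thm:cfDevElevationWithOsr}, and your argument simply fleshes out that chain---CFR's per-information-set learners supply the uniform rate $f(T)$, \cref{thm:int-cfr-decomp} converts it into the $d_{\infoSet}(\dev)f(T)$ full-regret bound that is the hypothesis of \cref{thm:cfDevElevationWithOsr}, and that theorem (together with the displayed identity linking the reach-probability-weighted gap of \cref{def:sub-seq-rationality} to average full regret) yields the $\abs{\InfoSets_i}f(T)$ gap and the ``same rate'' claim. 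No gaps; the delicate bookkeeping points you flag are handled exactly as you describe.
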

\begin{corollary}
  \label{cor:os-cfcce}
  CFR in self-play converges to an observable sequential CFCCE at the same rate as its average external regret vanishes (see, \eg/, \cref{thm:cfr}).
\end{corollary}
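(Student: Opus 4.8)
The plan is to reduce the observable-sequential-rationality gap at every information set to counterfactual regret quantities and then control those quantities with the regret decomposition of \cref{lem:cfr-decomp}. The gap for player $i$ is the maximum over information sets $\infoSet \in \InfoSets_i$ of the per-information-set benefit in \cref{def:sub-seq-rationality}, taken over the union of the blind counterfactual and external deviations, so it suffices to bound that per-information-set benefit by $\abs{\InfoSets_i} f(T)$ for each $\infoSet$. By the identity established immediately before the theorem, for each external or counterfactual deviation $\dev$ this benefit equals $\frac{1}{T}\subex*{\regret^{1:T}_{\infoSet}(\dev)}^+$, the positive part of the average full counterfactual regret from $\infoSet$. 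Hence I only need to bound this average full regret from $\infoSet$ for the two families separately and then take the larger bound.

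For the blind counterfactual family the bound is immediate from the hypothesis: the average full regret from $\infoSet$ with respect to any blind counterfactual deviation $\dev$ is $d_{\infoSet}(\dev) f(T)$, and since such a deviation deterministically plays to a single target and so applies at most $\abs{\InfoSets_i}$ external action transformations in the subtree rooted at $\infoSet$, we have $d_{\infoSet}(\dev) \le \abs{\InfoSets_i}$ and a benefit of at most $\abs{\InfoSets_i} f(T)$.

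For the external family I would unroll \cref{lem:cfr-decomp} starting from $\infoSet$ rather than from the root, exactly as \cref{thm:cfr} unrolls it from the start of the game. This yields
\begin{align*}
  \frac{1}{T}\Regret^T_{\infoSet}(\DevSet^{\EXT}_{\PureStratSet_i})
    \le \sum_{\infoSet \preceq \infoSet'}
      \subex*{\tfrac{1}{T}\Regret^T_{\infoSet'}(\DevSet^{\EXT}_{\Actions(\infoSet')})}^+,
\end{align*}
where the sum runs over the at most $\abs{\InfoSets_i}$ information sets $\infoSet'$ in the subtree of $\infoSet$. The key observation is that each immediate external regret $\Regret^T_{\infoSet'}(\DevSet^{\EXT}_{\Actions(\infoSet')})$ is precisely the full regret from $\infoSet'$ of the blind counterfactual deviation that targets $\infoSet'$ with the best single action (this is \cref{thm:int-cfr-decomp} with $\infoSet_0 = \infoSet^{\TARGET} = \infoSet'$, whose intermediate sum is empty): such a deviation plays $a^{\TARGET}$ at $\infoSet'$ and follows recommendations afterward, so $\cfv(\infoSet'; \dev(\strat^t_i), \strat^t_{-i}) = \cfv_{\infoSet'}(a^{\TARGET}; \strat^t)$ and the telescoping continuation terms vanish. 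This deviation applies a single external action transformation in the subtree of $\infoSet'$, so $d_{\infoSet'}(\dev) = 1$ and the hypothesis gives average value $f(T) \ge 0$. Summing, the average full external regret from $\infoSet$ is at most $\abs{\InfoSets_i} f(T)$, so the external benefit at $\infoSet$ is as well. Combining the two families and maximizing over $\infoSet \in \InfoSets_i$ yields the claimed $\abs{\InfoSets_i} f(T)$ bound.

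I expect the main obstacle to be the bookkeeping that identifies immediate external regret at $\infoSet'$ with the full (intermediate counterfactual) regret of a one-action blind counterfactual deviation targeting $\infoSet'$—verifying that re-correlation after the target collapses the continuation value to $\cfv_{\infoSet'}(a^{\TARGET}; \strat^t)$ so that the hypothesis applies with $d_{\infoSet'}(\dev)=1$—together with the reach-probability weighting of \cref{def:sub-seq-rationality}. For the latter I would argue that an external deviation not counterfactually reaching $\infoSet$ carries weight $\reachProb(h; \dev(\pureStrat_i)) = 0$ and contributes nothing at $\infoSet$, so that the deviations relevant at $\infoSet$ are exactly those captured by the full external regret from $\infoSet$ used in the decomposition, making the decomposition bound the correct quantity to invoke.
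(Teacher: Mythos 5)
Your argument is, in substance, a proof of \cref{thm:cfDevElevationWithOsr}, not of \cref{cor:os-cfcce}. The technical core---reducing each per-information-set benefit to full counterfactual regret via the reach-probability identity, bounding the blind counterfactual family by $d_{\infoSet}(\dev)f(T) \le \abs{\InfoSets_i}f(T)$, and bounding the external family by unrolling \cref{lem:cfr-decomp} over the subtree while identifying each immediate external regret $\Regret^T_{\infoSet'}(\DevSet^{\EXT}_{\Actions(\infoSet')})$ with the full regret of a one-action ($d_{\infoSet'}(\dev) = 1$) blind counterfactual deviation targeting $\infoSet'$---is sound and is essentially the paper's appendix proof of that theorem; the paper merely packages the unrolling as an induction on information-set height, which is cosmetic.

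The gap is that the corollary is an \emph{unconditional} statement about CFR, while your proof leans throughout on ``the hypothesis''---that the average full regret from each $\infoSet$ with respect to each blind counterfactual deviation is $d_{\infoSet}(\dev)f(T)$---which is the premise of \cref{thm:cfDevElevationWithOsr} and is not given here; it must be established for CFR. That is exactly what \cref{thm:int-cfr-decomp} supplies: CFR runs a no-external-regret learner on counterfactual values at each information set, so each immediate counterfactual regret is sublinear, and \cref{thm:int-cfr-decomp} bounds the full regret of any blind counterfactual deviation from $\infoSet$ by the sum of the $d_{\infoSet}(\dev)$ immediate regrets along its path to the target, yielding the hypothesis with $f(T)$ equal to the maximum average positive immediate regret---the quantity that vanishes at the \cref{thm:cfr} rate, which is where ``at the same rate as its average external regret vanishes'' comes from. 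You also stop at the gap bound $\abs{\InfoSets_i}f(T)$ without the short closing step: in self-play the empirical distribution of play is the recommendation distribution, the benefit at each information set under it equals the average regret (the identity you already invoke), so the vanishing gap for every player is precisely convergence to an observable sequential CFCCE in the sense of \cref{def:sub-seq-rationality}. Finally, note that a CFCCE is defined by blind counterfactual deviations alone, so your external-family argument, while correct, is the content of \cref{cor:os-cf-dev} rather than strictly necessary for this corollary.
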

\correctionEnd
An analogous result applies to CFR with no-internal-regret learners, informed counterfactual deviations, and observable sequential CFCE.
\cref{cor:os-cf-dev,cor:os-cfcce} provide the most thorough characterization of CFR's behavior (both for a single player and in self-play) to date and both apply to multi-player, general sum games.

To show how observable sequential rationality differs from non-sequential rationality in practice, we provide an example in Appendix A.4 where a CE is not an observable sequential CFCCE.
See the OS-CF row and column in \cref{tab:eq-class-rel} for a summary of how observable sequential counterfactual equilibria relate to the other equilibrium concepts.

\section{CFR for Action Deviations}

We showed that CFR does not behave according to an EFCCE or AFCCE, but could we modify CFR to do so?
Modifying CFR to address causal deviations requires recently developed non-trivial modifications~\parencite{celli2020noregret}, but we can easily modify CFR for action deviations by weighting counterfactual regrets by reach probabilities.

Since by definition an action deviation applies an action transformation at a single trigger information set, the benefit of such a deviation is simply the average reach-probability-weighted immediate regret, no decomposition required.
Therefore, employing no-regret learners on the reach-probability-weighted counterfactual value functions in each information set ensures that the average reach-probability-weighted immediate regret becomes non-positive.
\begin{theorem}
  \label{thm:cfr-a-devs}
  The CFR-like algorithm that trains a no-regret learner at each information set on the reach-probability-weighted counterfactual value functions,
  $\hat{\cfv}_{\infoSet}(\cdot; \strat^t) : a \to \reachProb(\infoSet; \strat^t_i) \cfv_{\infoSet}(a; \strat^t)$,
  is no-regret/hindsight rational with respect to action deviations.
  If the learners minimize external regret, then the algorithm minimizes blind action deviation regret, and if the learners minimize internal regret, then the algorithm minimizes informed action deviation regret.
  At all times, the algorithm's action deviation regret cannot be more than the maximum regret suffered by any single learner.
\end{theorem}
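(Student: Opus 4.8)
The plan is to exploit the fact that an action deviation changes only a single immediate strategy---the one at its trigger information set---and follows the recommendation everywhere else, so that its per-round benefit localizes exactly to a reach-probability-weighted immediate regret at that one information set. Once this localization is established, the theorem follows by transferring the no-regret guarantee of the learner at the trigger information set directly to the deviation, with no cross-information-set decomposition of the sort needed in \cref{thm:int-cfr-decomp}.

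First I would fix an action deviation $\dev$ with trigger information set $\infoSet^{\TRIGGER} \in \InfoSets_i$ and associated action transformation $\dev^{\TARGET}$ on $\simplex^{\abs{\Actions(\infoSet^{\TRIGGER})}}$, where, via realization equivalence, $\dev^{\TARGET}$ is the external transformation (substitute a fixed action) for a blind deviation and the internal transformation (reroute the mass on $a^{\TRIGGER}$ to $a^{\TARGET}$) for an informed deviation. The crucial step is the localization identity
\begin{align*}
  \utility_i(\dev(\strat_i^t), \strat_{-i}^t) - \utility_i(\strat^t)
    = \reachProb(\infoSet^{\TRIGGER}; \strat_i^t)
      \subex*{
        \cfv_{\infoSet^{\TRIGGER}}(\dev^{\TARGET}\strat_i^t(\infoSet^{\TRIGGER}); \strat^t)
        - \cfv(\infoSet^{\TRIGGER}; \strat^t)
      }.
\end{align*}
I would prove this by expanding both utilities as sums over terminal reach probabilities: the reach of every terminal history that does not pass through $\infoSet^{\TRIGGER}$ is unchanged, while for those that do, player $i$'s reach to $\infoSet^{\TRIGGER}$ is untouched (earlier actions are followed) and the continuation is played exactly according to $\strat^t$ (re-correlation after the single transformed action), so only the immediate-action probabilities at $\infoSet^{\TRIGGER}$ differ. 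Collecting these terms and using perfect recall to pull out the common factor $\reachProb(\infoSet^{\TRIGGER}; \strat_i^t)$ yields precisely a difference of one-step counterfactual values. Since $\hat{\cfv}_{\infoSet^{\TRIGGER}}(\cdot; \strat^t) = \reachProb(\infoSet^{\TRIGGER}; \strat_i^t) \cfv_{\infoSet^{\TRIGGER}}(\cdot; \strat^t)$, the right-hand side is exactly the instantaneous regret of the learner at $\infoSet^{\TRIGGER}$ on the reach-probability-weighted value function against $\dev^{\TARGET}$.

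Summing over $t = 1, \dots, T$ then gives $\regret^{1:T}(\dev)$ equal, term by term, to the cumulative regret the single learner at $\infoSet^{\TRIGGER}$ suffers against $\dev^{\TARGET}$. The three claims now follow immediately: a learner minimizing external regret drives every blind action deviation regret to zero (its $\dev^{\TARGET}$ ranges over $\DevSet^{\EXT}_{\Actions(\infoSet^{\TRIGGER})}$); a learner minimizing internal regret drives every informed action deviation regret to zero (its $\dev^{\TARGET} \in \DevSet^{\INT}_{\Actions(\infoSet^{\TRIGGER})}$); and because each action deviation regret is identically some single learner's regret against a single transformation, the maximum over all action deviations is bounded by the maximum regret suffered by any one learner.

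The main obstacle is establishing the localization identity cleanly rather than the final bookkeeping. I expect the care to lie in two places: confirming that the behavioral-strategy image of the pure-strategy action deviation really is an external (resp.\ internal) transformation of the immediate strategy at $\infoSet^{\TRIGGER}$, and verifying that the correct weight is the player-only reach $\reachProb(\infoSet^{\TRIGGER}; \strat_i^t)$---not the joint reach---because the opponents' contribution is already folded into $\cfv$. For the informed case I would additionally check that the conditional trigger (fire only when $a^{\TRIGGER}$ is recommended) coincides exactly with the internal transformation that fixes every other action, so that the learner's internal regret is indeed the quantity being bounded.
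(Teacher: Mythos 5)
Your proposal is correct and follows the same route as the paper: the paper's justification (given in the prose immediately preceding the theorem, with no separate appendix proof) is exactly that an action deviation transforms play at a single trigger information set, so its benefit localizes to the reach-probability-weighted immediate counterfactual regret there, which the learner at that information set minimizes directly with no decomposition across information sets. Your localization identity, its proof via terminal-history expansion and perfect recall, and the identification of blind/informed deviations with external/internal transformations of the immediate strategy simply make explicit the argument the paper leaves informal.
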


\textcite{celli2020noregret}'s algorithm is also no-regret with respect to informed action deviations because it uses no-internal-regret learners trained on reach-probability-weighted immediate counterfactual values.
\cref{thm:cfr-a-devs} gives an algorithm that is weaker in that it does not minimize causal, counterfactual, or external regret, but it does not require action sampling, nor does it require multiple learners at every information set.
It remains to investigate this trade-off experimentally.

PGPI and \emph{actor-critic policy iteration} (\emph{ACPI}) describe the optimization problems underlying standard policy gradient and actor-critic methods~\parencite{srinivasan2018actor}.
(Euclidean) projected PGPI and ACPI using tabular representations are equivalent to \emph{state-local generalized infinitesimal gradient ascent} (\emph{GIGA(s)})~\parencite[Definition 2]{rpg_arxiv} and are no-external-regret with respect to reach-probability-weighted counterfactual value functions at each information set~\parencite[Lemma 4]{rpg_arxiv}.
These algorithms are also therefore hindsight rational with respect to blind action deviations.
Thus, \cref{thm:cfr-a-devs} establishes a new connection between CFR and reinforcement learning algorithms via action deviations, PGPI, and ACPI.

\section{Conclusions}

Complicated general-sum, multi-agent games, perhaps unsurprisingly, resist static solutions.
Learning online and analyzing agent behavior in hindsight is a dynamic alternative.
This hindsight rationality view returns equilibria to a descriptive role, as originally introduced within the field of game theory, rather than the prescriptive role it often holds in the field of artificial intelligence.
Agents prescriptively aim to reduce regret for deviations in hindsight and thereby learn to behave rationally over time, while equilibria describe the consequences of interactions between such agents.

The mediated equilibrium and deviation landscapes of EFGs are particularly rich because there is substantial space for deviation types that have intermediate power and computational requirements between external and internal deviations.
To develop these EFG deviation and equilibrium types, we re-examined causal and action deviations along with their corresponding equilibrium concepts, extensive-form and agent-form equilibria.
We showed how action deviations can outperform causal deviations, which dispels a common misunderstanding that an EFCE is always an AFCE.

We showed that CFR's empirical play does not converge toward an EFCCE or AFCCE in self-play and thus CFR is not hindsight rational for causal or action deviations.
Instead, we defined blind and informed counterfactual deviations, and observable sequential rationality to more precisely characterize CFR's behavior.
CFR is observably sequentially hindsight rational with respect to blind counterfactual deviations
and CFR with internal learners has the same property with respect to informed counterfactual deviations.
In self-play, these algorithms converge toward observable sequential CFCCEs or CFCEs, respectively.

\cref{tab:eq-class-rel} summarizes all of the equilibrium relationships investigated in this work.

One avenue for future work is to modify CFR to simultaneously handle each type of deviation and show the practical benefits of such an algorithm.
Additionally, investigating observable sequential rationality with non-counterfactual deviations is another avenue for interesting future study opened up by our work.

\section{Acknowledgments}
Dustin Morrill, Michael Bowling, and James Wright are supported by the Alberta Machine Intelligence Institute (Amii) and NSERC.
Michael Bowling and James Wright hold Canada CIFAR AI Chairs at Amii.
Amy Greenwald is supported in part by NSF Award CMMI-1761546.
Thanks to Bernhard von Stengel for answering our questions about the EFCE concept.
Thanks to anonymous reviewers and Julien Perolat for constructive comments and suggestions.
\correctionStart{Add acknowledgement.}
Thanks to Revan MacQueen for an example of a CFCE with a beneficial external deviation.
\correctionEnd
 
\bibliography{references}

\begin{thebibliography}{37}
\providecommand{\natexlab}[1]{#1}
\providecommand{\url}[1]{\texttt{#1}}
\providecommand{\urlprefix}{URL }
\expandafter\ifx\csname urlstyle\endcsname\relax
  \providecommand{\doi}[1]{doi:\discretionary{}{}{}#1}\else
  \providecommand{\doi}{doi:\discretionary{}{}{}\begingroup
  \urlstyle{rm}\Url}\fi

\bibitem[{Aumann(1974)}]{aumann1974ce}
Aumann, R.~J. 1974.
\newblock Subjectivity and correlation in randomized strategies.
\newblock \emph{Journal of Mathematical Economics} 1(1): 67--96.

\bibitem[{Brown and Sandholm(2018)}]{brown2018superhuman}
Brown, N.; and Sandholm, T. 2018.
\newblock Superhuman {AI} for Heads-Up No-Limit Poker: {L}ibratus Beats Top
  Professionals.
\newblock \emph{Science} 359(6374): 418--424.

\bibitem[{Brown and Sandholm(2019)}]{brown2019superhuman}
Brown, N.; and Sandholm, T. 2019.
\newblock Superhuman {AI} for Multiplayer Poker.
\newblock \emph{Science} 365(6456): 885--890.

\bibitem[{Burch, Moravcik, and Schmid(2019)}]{burch2019revisiting}
Burch, N.; Moravcik, M.; and Schmid, M. 2019.
\newblock Revisiting CFR+ and alternating updates.
\newblock \emph{Journal of Artificial Intelligence Research} 64: 429--443.

\bibitem[{Celli et~al.(2019)Celli, Marchesi, Bianchi, and
  Gatti}]{celli2019learning}
Celli, A.; Marchesi, A.; Bianchi, T.; and Gatti, N. 2019.
\newblock Learning to correlate in multi-player general-sum sequential games.
\newblock In \emph{Advances in Neural Information Processing Systems},
  13076--13086.

\bibitem[{Celli et~al.(2020)Celli, Marchesi, Farina, and
  Gatti}]{celli2020noregret}
Celli, A.; Marchesi, A.; Farina, G.; and Gatti, N. 2020.
\newblock No-regret learning dynamics for extensive-form correlated
  equilibrium.
\newblock \emph{Advances in Neural Information Processing Systems} 33.

\bibitem[{Dud\'{i}k and Gordon(2009)}]{Dudik09causal-dev}
Dud\'{i}k, M.; and Gordon, G.~J. 2009.
\newblock A Sampling-Based Approach to Computing Equilibria in Succinct
  Extensive-Form Games.
\newblock In \emph{Proceedings of the 25th Conference on Uncertainty in
  Artificial Intelligence ({UAI}-2009)}.

\bibitem[{Farina, Bianchi, and Sandholm(2020)}]{farina2020efcce}
Farina, G.; Bianchi, T.; and Sandholm, T. 2020.
\newblock Coarse Correlation in Extensive-Form Games.
\newblock In \emph{{Thirty-Fourth AAAI Conference on Artificial Intelligence,
  February 7-12, 2020, New York, New York, USA}}.

\bibitem[{Forges and von Stengel(2002)}]{forges2002efce}
Forges, F.; and von Stengel, B. 2002.
\newblock {Computionally Efficient Coordination in Games Trees}.
\newblock THEMA Working Papers 2002-05, THEMA (TH\'{e}orie Economique,
  Mod\'{e}lisation et Applications), Universit\'{e} de Cergy-Pontoise.

\bibitem[{Foster and Vohra(1999)}]{foster1999int-regret}
Foster, D.~P.; and Vohra, R. 1999.
\newblock Regret in the on-line decision problem.
\newblock \emph{Games and Economic Behavior} 29(1-2): 7--35.

\bibitem[{Foster and Vohra(1997)}]{foster1997calibrated}
Foster, D.~P.; and Vohra, R.~V. 1997.
\newblock Calibrated learning and correlated equilibrium.
\newblock \emph{Games and Economic Behavior} 21(1-2): 40.

\bibitem[{Greenwald, Jafari, and Marks(2003)}]{greenwald2003general}
Greenwald, A.; Jafari, A.; and Marks, C. 2003.
\newblock A general class of no-regret learning algorithms and game-theoretic
  equilibria.
\newblock In \emph{Proceedings of the 2003 Computational Learning Theory
  Conference}, 1--11.

\bibitem[{Hannan(1957)}]{hannan1957approximation}
Hannan, J. 1957.
\newblock Approximation to Bayes risk in repeated play.
\newblock \emph{Contributions to the Theory of Games} 3: 97--139.

\bibitem[{Hart and Mas-Colell(2000)}]{Hart00}
Hart, S.; and Mas-Colell, A. 2000.
\newblock A Simple Adaptive Procedure Leading to Correlated Equilibrium.
\newblock \emph{Econometrica} 68(5): 1127--1150.

\bibitem[{Kreps and Wilson(1982)}]{KrepsWilson82}
Kreps, D.~M.; and Wilson, R. 1982.
\newblock Sequential equilibria.
\newblock \emph{Econometrica} 50(4): 863--894.

\bibitem[{Kuhn(1953)}]{Kuhn53}
Kuhn, H.~W. 1953.
\newblock Extensive Games and the Problem of Information.
\newblock \emph{Contributions to the Theory of Games} 2: 193--216.

\bibitem[{Lanctot et~al.(2019)Lanctot, Lockhart, Lespiau, Zambaldi, Upadhyay,
  P\'{e}rolat, Srinivasan, Timbers, Tuyls, Omidshafiei, Hennes, Morrill,
  Muller, Ewalds, Faulkner, Kram\'{a}r, Vylder, Saeta, Bradbury, Ding,
  Borgeaud, Lai, Schrittwieser, Anthony, Hughes, Danihelka, and
  Ryan-Davis}]{LanctotEtAl2019OpenSpiel}
Lanctot, M.; Lockhart, E.; Lespiau, J.-B.; Zambaldi, V.; Upadhyay, S.;
  P\'{e}rolat, J.; Srinivasan, S.; Timbers, F.; Tuyls, K.; Omidshafiei, S.;
  Hennes, D.; Morrill, D.; Muller, P.; Ewalds, T.; Faulkner, R.; Kram\'{a}r,
  J.; Vylder, B.~D.; Saeta, B.; Bradbury, J.; Ding, D.; Borgeaud, S.; Lai, M.;
  Schrittwieser, J.; Anthony, T.; Hughes, E.; Danihelka, I.; and Ryan-Davis, J.
  2019.
\newblock {OpenSpiel}: A Framework for Reinforcement Learning in Games.
\newblock \emph{CoRR} abs/1908.09453.
\newblock \urlprefix\url{http://arxiv.org/abs/1908.09453}.

\bibitem[{{MacQueen}(2022)}]{macQueen2022counterexampleSlackMessages}
{MacQueen}, R. 2022.
\newblock Personal communication.

\bibitem[{Morav{\v{c}}{\'{i}}k et~al.(2017)Morav{\v{c}}{\'{i}}k, Schmid, Burch,
  Lis{\`y}, Morrill, Bard, Davis, Waugh, Johanson, and
  Bowling}]{moravvcik2017deepstack}
Morav{\v{c}}{\'{i}}k, M.; Schmid, M.; Burch, N.; Lis{\`y}, V.; Morrill, D.;
  Bard, N.; Davis, T.; Waugh, K.; Johanson, M.; and Bowling, M. 2017.
\newblock Deepstack: Expert-Level Artificial Intelligence in Heads-Up No-Limit
  Poker.
\newblock \emph{Science} 356(6337): 508--513.

\bibitem[{Morrill et~al.(2021)Morrill, D'Orazio, Sarfati, Lanctot, Wright,
  Greenwald, and Bowling}]{hsr2020}
Morrill, D.; D'Orazio, R.; Sarfati, R.; Lanctot, M.; Wright, J.~R.; Greenwald,
  A.; and Bowling, M. 2021.
\newblock {Hindsight and Sequential Rationality of Correlated Play}.
\newblock In \emph{{Thirty-Fifth AAAI Conference on Artificial Intelligence,
  February 2-9, 2021, virtual}}.

\bibitem[{Morrill et~al.(2022)Morrill, D'Orazio, Sarfati, Lanctot, Wright,
  Greenwald, and Bowling}]{hsr2020arxivCorrections}
Morrill, D.; D'Orazio, R.; Sarfati, R.; Lanctot, M.; Wright, J.~R.; Greenwald,
  A.; and Bowling, M. 2022.
\newblock {Hindsight and Sequential Rationality of Correlated Play:
  Corrections}.
\newblock \emph{CoRR} abs/2012.05874v6.

\bibitem[{Moulin and Vial(1978)}]{moulin1978cce}
Moulin, H.; and Vial, J.-P. 1978.
\newblock Strategically zero-sum games: the class of games whose completely
  mixed equilibria cannot be improved upon.
\newblock \emph{International Journal of Game Theory} 7(3-4): 201--221.

\bibitem[{Nash(1951)}]{Nash1951}
Nash, J. 1951.
\newblock Non-Cooperative Games.
\newblock \emph{The Annals of Mathematics} 54(2): 286--295.

\bibitem[{Selten(1974)}]{selten1974af-and-trembling-hand-eq}
Selten, R. 1974.
\newblock Reexamination of the Perfectness Concept for Equilibrium Points in
  Extensive Games.
\newblock \emph{Economics} .

\bibitem[{Shapley(1964)}]{Shapley}
Shapley, L. 1964.
\newblock Some Topics in Two-Person Games.
\newblock In \emph{Advances in Game Theory}. Princeton University Press.

\bibitem[{Silver et~al.(2016)Silver, Huang, Maddison, Guez, Sifre, van~den
  Driessche, Schrittwieser, Antonoglou, Panneershelvam, Lanctot, Dieleman,
  Grewe, Nham, Kalchbrenner, Sutskever, Lillicrap, Leach, Kavukcuoglu, Graepel,
  and Hassabis}]{Silver16Go}
Silver, D.; Huang, A.; Maddison, C.~J.; Guez, A.; Sifre, L.; van~den Driessche,
  G.; Schrittwieser, J.; Antonoglou, I.; Panneershelvam, V.; Lanctot, M.;
  Dieleman, S.; Grewe, D.; Nham, J.; Kalchbrenner, N.; Sutskever, I.;
  Lillicrap, T.; Leach, M.; Kavukcuoglu, K.; Graepel, T.; and Hassabis, D.
  2016.
\newblock Mastering the Game of {G}o with Deep Neural Networks and Tree Search.
\newblock \emph{Nature} 529: 484–--489.

\bibitem[{Silver et~al.(2018)Silver, Hubert, Schrittwieser, Antonoglou, Lai,
  Guez, Lanctot, Sifre, Kumaran, Graepel, Lillicrap, Simonyan, and
  Hassabis}]{Silver18AlphaZero}
Silver, D.; Hubert, T.; Schrittwieser, J.; Antonoglou, I.; Lai, M.; Guez, A.;
  Lanctot, M.; Sifre, L.; Kumaran, D.; Graepel, T.; Lillicrap, T.; Simonyan,
  K.; and Hassabis, D. 2018.
\newblock A general reinforcement learning algorithm that masters chess, shogi,
  and {G}o through self-play.
\newblock \emph{Science} 632(6419): 1140--1144.

\bibitem[{Silver et~al.(2017{\natexlab{a}})Silver, Hubert, Schrittwieser,
  Antonoglou, Lai, Guez, Lanctot, Sifre, Kumaran, Graepel, Lillicrap, Simonyan,
  and Hassabis}]{Silver17AChess}
Silver, D.; Hubert, T.; Schrittwieser, J.; Antonoglou, I.; Lai, M.; Guez, A.;
  Lanctot, M.; Sifre, L.; Kumaran, D.; Graepel, T.; Lillicrap, T.~P.; Simonyan,
  K.; and Hassabis, D. 2017{\natexlab{a}}.
\newblock Mastering Chess and Shogi by Self-Play with a General Reinforcement
  Learning Algorithm.
\newblock \emph{CoRR} abs/1712.01815.

\bibitem[{Silver et~al.(2017{\natexlab{b}})Silver, Schrittwieser, Simonyan,
  Antonoglou, Huang, Guez, Hubert, Baker, Lai, Bolton, Chen, Lillicrap, Hui,
  Sifre, van~den Driessche, Graepel, and Hassabis}]{Silver17AGZ}
Silver, D.; Schrittwieser, J.; Simonyan, K.; Antonoglou, I.; Huang, A.; Guez,
  A.; Hubert, T.; Baker, L.; Lai, M.; Bolton, A.; Chen, Y.; Lillicrap, T.; Hui,
  F.; Sifre, L.; van~den Driessche, G.; Graepel, T.; and Hassabis, D.
  2017{\natexlab{b}}.
\newblock Mastering the game of Go without human knowledge.
\newblock \emph{Nature} 530: 354--359.

\bibitem[{Srinivasan et~al.(2018{\natexlab{a}})Srinivasan, Lanctot, Zambaldi,
  P{\'e}rolat, Tuyls, Munos, and Bowling}]{srinivasan2018actor}
Srinivasan, S.; Lanctot, M.; Zambaldi, V.; P{\'e}rolat, J.; Tuyls, K.; Munos,
  R.; and Bowling, M. 2018{\natexlab{a}}.
\newblock Actor-Critic Policy Optimization in Partially Observable Multiagent
  Environments.
\newblock In \emph{Advances in Neural Information Processing Systems}.

\bibitem[{Srinivasan et~al.(2018{\natexlab{b}})Srinivasan, Lanctot, Zambaldi,
  P\'{e}rolat, Tuyls, Munos, and Bowling}]{rpg_arxiv}
Srinivasan, S.; Lanctot, M.; Zambaldi, V.; P\'{e}rolat, J.; Tuyls, K.; Munos,
  R.; and Bowling, M. 2018{\natexlab{b}}.
\newblock Actor-Critic Policy Optimization in Partially Observable Multiagent
  Environments.
\newblock \emph{CoRR} abs/1810.09026.

\bibitem[{Sutton et~al.(2000)Sutton, McAllester, Singh, and
  Mansour}]{Sutton00PolicyGradient}
Sutton, R.~S.; McAllester, D.; Singh, S.; and Mansour, Y. 2000.
\newblock Policy Gradient Methods for Reinforcement Learning with Function
  Approximation.
\newblock In \emph{Advances in Neural Information Processing Systems 12},
  1057--1063. {MIT} Press.

\bibitem[{Vinyals et~al.(2019)Vinyals, Babuschkin, Czarnecki, Mathieu, Dudzik,
  Chung, Choi, Powell, Ewalds, Georgiev et~al.}]{vinyals2019grandmaster}
Vinyals, O.; Babuschkin, I.; Czarnecki, W.~M.; Mathieu, M.; Dudzik, A.; Chung,
  J.; Choi, D.~H.; Powell, R.; Ewalds, T.; Georgiev, P.; et~al. 2019.
\newblock Grandmaster level in StarCraft II using multi-agent reinforcement
  learning.
\newblock \emph{Nature} 575(7782): 350--354.

\bibitem[{{von Stengel} and Forges(2008)}]{von2008efce-complexity}
{von Stengel}, B.; and Forges, F. 2008.
\newblock Extensive-form correlated equilibrium: Definition and computational
  complexity.
\newblock \emph{Mathematics of Operations Research} 33(4): 1002--1022.

\bibitem[{Williams(1992)}]{Williams1992reinforce}
Williams, R.~J. 1992.
\newblock Simple statistical gradient-following algorithms for connectionist
  reinforcement learning.
\newblock \emph{Machine Learning} 8(3): 229--256.
\newblock ISSN 1573-0565.
\newblock \doi{10.1007/BF00992696}.
\newblock \urlprefix\url{https://doi.org/10.1007/BF00992696}.

\bibitem[{Zinkevich et~al.(2007{\natexlab{a}})Zinkevich, Johanson, Bowling, and
  Piccione}]{CFR-TR}
Zinkevich, M.; Johanson, M.; Bowling, M.; and Piccione, C. 2007{\natexlab{a}}.
\newblock Regret Minimization in Games with Incomplete Information.
\newblock Technical Report TR07-14, University of Alberta.

\bibitem[{Zinkevich et~al.(2007{\natexlab{b}})Zinkevich, Johanson, Bowling, and
  Piccione}]{cfr}
Zinkevich, M.; Johanson, M.; Bowling, M.~H.; and Piccione, C.
  2007{\natexlab{b}}.
\newblock Regret Minimization in Games with Incomplete Information.
\newblock In \emph{Advances in Neural Information Processing Systems},
  1729--1736.

\end{thebibliography}

\appendix
\renewcommand\thefigure{\thesection.\arabic{figure}}
\setcounter{figure}{0}
\renewcommand\thetable{\thesection.\arabic{table}}
\setcounter{table}{0}
\section{Game Examples}

\subsection{Extended Battle-of-the-Sexes}
\cref{fig:ext-bots} is an illustration of the extensive-form tree of the extended battle-of-the-sexes game introduced in \cref{sec:ext-bots}.
The recommendations
$\set*{
  \tuple*{\mNotUpgrade, \; \mEventY \given \mUpgrade, \; \mEventY \given \mNotUpgrade},
  \tuple*{\mEventY}
}$
and
$\set*{
  \tuple*{\mNotUpgrade, \; \mEventX \given \mUpgrade, \; \mEventX \given \mNotUpgrade},
  \tuple*{\mEventX}
}$
are visualized from player one's perspective at the top of \cref{fig:bots-strat-devs,fig:bots-caus-devs,fig:bots-cf-devs}.
\cref{fig:bots-strat-devs} shows all of the external deviations.
\cref{fig:bots-caus-devs} shows all of the blind causal deviations that are not also external deviations.
\cref{fig:bots-cf-devs} shows all of the blind counterfactual deviations that are not also external deviations.
The complete set of blind action deviations is composed of the deviations in \cref{fig:bots-caus-devs} plus counterfactual deviations 1 and 4 in \cref{fig:bots-cf-devs}.

To see that these recommendations are an EFCE, notice that conditioning a deviation on the recommendation at the root cannot improve the value of any causal deviation since the first recommended action is the same under both recommendations.
And since the recommendations after \NotUpgrade/ cannot be improved, informed causal deviations cannot improve the value by triggering on a recommendation in the \NotUpgrade/ information set.

Note that blind counterfactual deviation 1, which is also a blind action deviation, achieves a value of $+2.5 > +1.5$, and this is the only beneficial deviation.
Therefore, this recommendation distribution is an EFCE with a beneficial blind counterfactual and action deviation, proving it is not a CFCCE or an AFCCE.

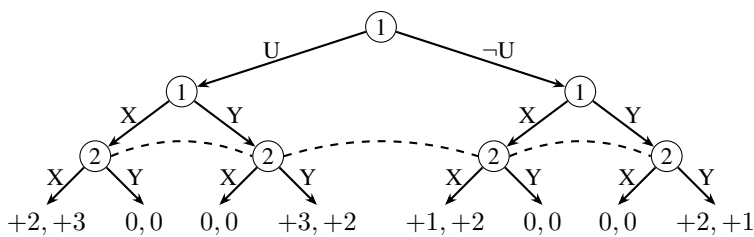
\begin{figure}[htbp]
\centering
\begin{tikzpicture}[inner sep=0cm]
  \tikzstyle{state} = [circle, draw=black, minimum size=0.3cm, inner sep=0.05cm]
  \tikzstyle{arrow} = [thick, >=Stealth, -{>[scale=0.7]}]

  \node(root) [state] {\small 1};
  \node(u) [state, below left=\shortAYLength and 2.5cm of root.south west, anchor=north] {\small 1};
  \draw[arrow] (root) -- node[anchor=south east] {\small \Upgrade/} (u);

  \node(nu) [state, below right=\shortAYLength and 2.5cm of root.south east, anchor=north] {\small 1};
  \draw[arrow] (root) -- node[anchor=south west] {\small \NotUpgrade/} (nu);

  \node(uX) [state, below left=\shortAYLength and 1cm of u.south west, anchor=north] {\small 2};
  \draw[arrow] (u) -- node[anchor=south east] {\small \EventX/} (uX);

  \node(uY) [state, below right=\shortAYLength and 1cm of u.south east, anchor=north] {\small 2};
  \draw[arrow] (u) -- node[anchor=south west] {\small \EventY/} (uY);

  \node(nuX) [state, below left=\shortAYLength and 1cm of nu.south west, anchor=north] {\small 2};
  \draw[arrow] (nu) -- node[anchor=south east] {\small \EventX/} (nuX);

  \node(nuY) [state, below right=\shortAYLength and 1cm of nu.south east, anchor=north] {\small 2};
  \draw[arrow] (nu) -- node[anchor=south west] {\small \EventY/} (nuY);

  \draw[thick, dashed] (uX.east) parabola[bend pos=0.5] bend +(0, 0.2cm) (uY.west);
  \draw[thick, dashed] (uY.east) parabola[bend pos=0.5] bend +(0, 0.2cm) (nuX.west);
  \draw[thick, dashed] (nuX.east) parabola[bend pos=0.5] bend +(0, 0.2cm) (nuY.west);

  \node(uXX) [util, below left=\shortAYLength and 0.5cm of uX.south west, anchor=north] {$+2,+3$};
  \draw[arrow] (uX) -- node[anchor=south east] {\small \EventX/} (uXX.north);

  \node(uYX) [util, below left=\shortAYLength and 0.5cm of uY.south west, anchor=north] {$0,0$};
  \draw[arrow] (uY) -- node[anchor=south east] {\small \EventX/} (uYX.north);

  \node(uXY) [util, below right=\shortAYLength and 0.5cm of uX.south east, anchor=north] {$0,0$};
  \draw[arrow] (uX) -- node[anchor=south west] {\small \EventY/} (uXY.north);

  \node(uYY) [util, below right=\shortAYLength and 0.5cm of uY.south east, anchor=north] {$+3,+2$};
  \draw[arrow] (uY) -- node[anchor=south west] {\small \EventY/} (uYY.north);

  \node(nuXX) [util, below left=\shortAYLength and 0.5cm of nuX.south west, anchor=north] {$+1,+2$};
  \draw[arrow] (nuX) -- node[anchor=south east] {\small \EventX/} (nuXX.north);

  \node(nuYX) [util, below left=\shortAYLength and 0.5cm of nuY.south west, anchor=north] {$0,0$};
  \draw[arrow] (nuY) -- node[anchor=south east] {\small \EventX/} (nuYX.north);

  \node(nuXY) [util, below right=\shortAYLength and 0.5cm of nuX.south east, anchor=north] {$0,0$};
  \draw[arrow] (nuX) -- node[anchor=south west] {\small \EventY/} (nuXY.north);

  \node(nuYY) [util, below right=\shortAYLength and 0.5cm of nuY.south east, anchor=north] {$+2,+1$};
  \draw[arrow] (nuY) -- node[anchor=south west] {\small \EventY/} (nuYY.north);
\end{tikzpicture}
\caption{The extensive-form of an extended battle-of-the-sexes game where the first player privately chooses whether or not to upgrade seating for no extra charge before selecting an event simultaneously with their partner.
\InfoSetDesc/}
\label{fig:ext-bots}
\end{figure}

\begin{figure}[htbp]
\centering
\begin{tikzpicture}[inner sep=0cm]
  \deviationExampleMatrix{
    \botsExampleHeader; \\[-0.3cm]

    \botsExampleFollowTrees; \\

\node(stratDev1-Label) [behaveLabel] {external\\deviation 1}; \&
    \node(stratDev1-1) [state, dev] {};
    \devRecTreeNodeCoords{stratDev1-1};
    \deviationExampleStates{stratDev1-1}{dev}{dev};
    \botsExampleTerminalsOne{stratDev1-1}{draw=black}{}{}{};
    \devRecRootEdges{stratDev1-1}{\Upgrade/}{dev}{\NotUpgrade/}{alt};
    \devRecLEdges{stratDev1-1}{\EventX/}{dev}{\EventY/}{alt};
    \devRecREdges{stratDev1-1}{\EventX/}{dev}{\EventY/}{alt}; \&

    \node(stratDev1-2) [state, dev] {};
    \devRecTreeNodeCoords{stratDev1-2};
    \deviationExampleStates{stratDev1-2}{dev}{dev};
    \botsExampleTerminalsTwo{stratDev1-2}{draw=black}{}{}{};
    \devRecRootEdges{stratDev1-2}{\Upgrade/}{dev}{\NotUpgrade/}{alt};
    \devRecLEdges{stratDev1-2}{\EventX/}{dev}{\EventY/}{alt};
    \devRecREdges{stratDev1-2}{\EventX/}{dev}{\EventY/}{alt}; \&
    \node(stratDev1-Value) {$+1$}; \\

\node(stratDev2-Label) [behaveLabel] {external\\deviation 2}; \&
    \node(stratDev2-1) [state, dev] {};
    \devRecTreeNodeCoords{stratDev2-1};
    \deviationExampleStates{stratDev2-1}{dev}{dev};
    \botsExampleTerminalsOne{stratDev2-1}{draw=black}{}{}{};
    \devRecRootEdges{stratDev2-1}{\Upgrade/}{dev}{\NotUpgrade/}{alt};
    \devRecLEdges{stratDev2-1}{\EventX/}{dev}{\EventY/}{alt};
    \devRecREdges{stratDev2-1}{\EventX/}{alt}{\EventY/}{dev}; \&

    \node(stratDev2-2) [state] {};
    \devRecTreeNodeCoords{stratDev2-2};
    \deviationExampleStates{stratDev2-2}{dev}{dev};
    \botsExampleTerminalsTwo{stratDev2-2}{draw=black}{}{}{};
    \devRecRootEdges{stratDev2-2}{\Upgrade/}{dev}{\NotUpgrade/}{alt};
    \devRecLEdges{stratDev2-2}{\EventX/}{dev}{\EventY/}{alt};
    \devRecREdges{stratDev2-2}{\EventX/}{alt}{\EventY/}{dev}; \&
    \node(stratDev2-Value) {$+1$}; \\

\node(stratDev3-Label) [behaveLabel] {external\\deviation 3}; \&
    \node(stratDev3-1) [state, dev] {};
    \devRecTreeNodeCoords{stratDev3-1};
    \deviationExampleStates{stratDev3-1}{dev}{dev};
    \botsExampleTerminalsOne{stratDev3-1}{}{draw=black}{}{};
    \devRecRootEdges{stratDev3-1}{\Upgrade/}{dev}{\NotUpgrade/}{alt};
    \devRecLEdges{stratDev3-1}{\EventX/}{alt}{\EventY/}{dev};
    \devRecREdges{stratDev3-1}{\EventX/}{dev}{\EventY/}{alt}; \&

    \node(stratDev3-2) [state, dev] {};
    \devRecTreeNodeCoords{stratDev3-2};
    \deviationExampleStates{stratDev3-2}{dev}{dev};
    \botsExampleTerminalsTwo{stratDev3-2}{}{draw=black}{}{};
    \devRecRootEdges{stratDev3-2}{\Upgrade/}{dev}{\NotUpgrade/}{alt};
    \devRecLEdges{stratDev3-2}{\EventX/}{alt}{\EventY/}{dev};
    \devRecREdges{stratDev3-2}{\EventX/}{dev}{\EventY/}{alt}; \&
    \node(stratDev3-Value) {$+1.5$}; \\

\node(stratDev4-Label) [behaveLabel] {external\\deviation 4}; \&
    \node(stratDev4-1) [state, dev] {};
    \devRecTreeNodeCoords{stratDev4-1};
    \deviationExampleStates{stratDev4-1}{dev}{dev};
    \botsExampleTerminalsOne{stratDev4-1}{}{draw=black}{}{};
    \devRecRootEdges{stratDev4-1}{\Upgrade/}{dev}{\NotUpgrade/}{alt};
    \devRecLEdges{stratDev4-1}{\EventX/}{alt}{\EventY/}{dev};
    \devRecREdges{stratDev4-1}{\EventX/}{alt}{\EventY/}{dev}; \&

    \node(stratDev4-2) [state, dev] {};
    \devRecTreeNodeCoords{stratDev4-2};
    \deviationExampleStates{stratDev4-2}{dev}{dev};
    \botsExampleTerminalsTwo{stratDev4-2}{}{draw=black}{}{};
    \devRecRootEdges{stratDev4-2}{\Upgrade/}{dev}{\NotUpgrade/}{alt};
    \devRecLEdges{stratDev4-2}{\EventX/}{alt}{\EventY/}{dev};
    \devRecREdges{stratDev4-2}{\EventX/}{alt}{\EventY/}{dev}; \&
    \node(stratDev4-Value) {$+1.5$}; \\

\node(stratDev5-Label) [behaveLabel] {external\\deviation 5}; \&
    \node(stratDev5-1) [state, dev] {};
    \devRecTreeNodeCoords{stratDev5-1};
    \deviationExampleStates{stratDev5-1}{dev}{dev};
    \botsExampleTerminalsOne{stratDev5-1}{}{}{draw=black}{};
    \devRecRootEdges{stratDev5-1}{\Upgrade/}{alt}{\NotUpgrade/}{dev};
    \devRecLEdges{stratDev5-1}{\EventX/}{dev}{\EventY/}{alt};
    \devRecREdges{stratDev5-1}{\EventX/}{dev}{\EventY/}{alt}; \&

    \node(stratDev5-2) [state, dev] {};
    \devRecTreeNodeCoords{stratDev5-2};
    \deviationExampleStates{stratDev5-2}{dev}{dev};
    \botsExampleTerminalsTwo{stratDev5-2}{}{}{draw=black}{};
    \devRecRootEdges{stratDev5-2}{\Upgrade/}{alt}{\NotUpgrade/}{dev};
    \devRecLEdges{stratDev5-2}{\EventX/}{dev}{\EventY/}{alt};
    \devRecREdges{stratDev5-2}{\EventX/}{dev}{\EventY/}{alt}; \&
    \node(stratDev5-Value) {$+0.5$}; \\

\node(stratDev6-Label) [behaveLabel] {external\\deviation 6}; \&
    \node(stratDev6-1) [state, dev] {};
    \devRecTreeNodeCoords{stratDev6-1};
    \deviationExampleStates{stratDev6-1}{dev}{dev};
    \botsExampleTerminalsOne{stratDev6-1}{}{}{draw=black}{};
    \devRecRootEdges{stratDev6-1}{\Upgrade/}{alt}{\NotUpgrade/}{dev};
    \devRecLEdges{stratDev6-1}{\EventX/}{alt}{\EventY/}{dev};
    \devRecREdges{stratDev6-1}{\EventX/}{dev}{\EventY/}{alt}; \&

    \node(stratDev6-2) [state, dev] {};
    \devRecTreeNodeCoords{stratDev6-2};
    \deviationExampleStates{stratDev6-2}{dev}{dev};
    \botsExampleTerminalsTwo{stratDev6-2}{}{}{draw=black}{};
    \devRecRootEdges{stratDev6-2}{\Upgrade/}{alt}{\NotUpgrade/}{dev};
    \devRecLEdges{stratDev6-2}{\EventX/}{alt}{\EventY/}{dev};
    \devRecREdges{stratDev6-2}{\EventX/}{dev}{\EventY/}{alt}; \&
    \node(stratDev6-Value) {$+0.5$}; \\

\node(stratDev7-Label) [behaveLabel] {external\\deviation 7}; \&
    \node(stratDev7-1) [state, dev] {};
    \devRecTreeNodeCoords{stratDev7-1};
    \deviationExampleStates{stratDev7-1}{dev}{dev};
    \botsExampleTerminalsOne{stratDev7-1}{}{}{}{draw=black};
    \devRecRootEdges{stratDev7-1}{\Upgrade/}{alt}{\NotUpgrade/}{dev};
    \devRecLEdges{stratDev7-1}{\EventX/}{dev}{\EventY/}{alt};
    \devRecREdges{stratDev7-1}{\EventX/}{alt}{\EventY/}{dev}; \&

    \node(stratDev7-2) [state, dev] {};
    \devRecTreeNodeCoords{stratDev7-2};
    \deviationExampleStates{stratDev7-2}{dev}{dev};
    \botsExampleTerminalsTwo{stratDev7-2}{}{}{}{draw=black};
    \devRecRootEdges{stratDev7-2}{\Upgrade/}{alt}{\NotUpgrade/}{dev};
    \devRecLEdges{stratDev7-2}{\EventX/}{dev}{\EventY/}{alt};
    \devRecREdges{stratDev7-2}{\EventX/}{alt}{\EventY/}{dev}; \&
    \node(stratDev7-Value) {$+1$}; \\

\node(stratDev8-Label) [behaveLabel] {external\\deviation 8}; \&
    \node(stratDev8-1) [state, dev] {};
    \devRecTreeNodeCoords{stratDev8-1};
    \deviationExampleStates{stratDev8-1}{dev}{dev};
    \botsExampleTerminalsOne{stratDev8-1}{}{}{}{draw=black};
    \devRecRootEdges{stratDev8-1}{\Upgrade/}{alt}{\NotUpgrade/}{dev};
    \devRecLEdges{stratDev8-1}{\EventX/}{alt}{\EventY/}{dev};
    \devRecREdges{stratDev8-1}{\EventX/}{alt}{\EventY/}{dev}; \&

    \node(stratDev8-2) [state, dev] {};
    \devRecTreeNodeCoords{stratDev8-2};
    \deviationExampleStates{stratDev8-2}{dev}{dev};
    \botsExampleTerminalsTwo{stratDev8-2}{}{}{}{draw=black};
    \devRecRootEdges{stratDev8-2}{\Upgrade/}{alt}{\NotUpgrade/}{dev};
    \devRecLEdges{stratDev8-2}{\EventX/}{alt}{\EventY/}{dev};
    \devRecREdges{stratDev8-2}{\EventX/}{alt}{\EventY/}{dev}; \&
    \node(stratDev8-Value) {$+1$}; \\
  };
  \rowSepLine{stratDev1};
  \rowSepLine[color=grey]{stratDev2};
  \rowSepLine[color=grey]{stratDev3};
  \rowSepLine[color=grey]{stratDev4};
  \rowSepLine[color=grey]{stratDev5};
  \rowSepLine[color=grey]{stratDev6};
  \rowSepLine[color=grey]{stratDev7};
  \rowSepLine[color=grey]{stratDev8};
\end{tikzpicture}
\caption{
  External deviations in the extended battle-of-the-sexes example from player one's perspective.
  \PlayerTwoDesc/
  \ArrowDescriptions/
  \EvColDesc/
}
\label{fig:bots-strat-devs}
\end{figure}

\begin{figure}[htbp]
\centering
\begin{tikzpicture}[inner sep=0cm]
  \deviationExampleMatrix{
    \botsExampleHeader; \\[-0.3cm]

    \botsExampleFollowTrees; \\

\node(causDev1-Label) [behaveLabel] {causal\\deviation 1}; \&
    \node(causDev1-1) [state] {};
    \devRecTreeNodeCoords{causDev1-1};
    \deviationExampleStates{causDev1-1}{dev}{};
    \botsExampleTerminalsOne{causDev1-1}{}{}{}{draw=black};
    \devRecRootEdges{causDev1-1}{\Upgrade/}{alt}{\NotUpgrade/}{follow};
    \devRecLEdges{causDev1-1}{\EventX/}{dev}{\EventY/}{alt};
    \devRecREdges{causDev1-1}{\EventX/}{alt}{\EventY/}{follow}; \&

    \node(causDev1-2) [state] {};
    \devRecTreeNodeCoords{causDev1-2};
    \deviationExampleStates{causDev1-2}{dev}{};
    \botsExampleTerminalsTwo{causDev1-2}{}{}{draw=black}{};
    \devRecRootEdges{causDev1-2}{\Upgrade/}{alt}{\NotUpgrade/}{follow};
    \devRecLEdges{causDev1-2}{\EventX/}{dev}{\EventY/}{alt};
    \devRecREdges{causDev1-2}{\EventX/}{follow}{\EventY/}{alt}; \&
    \node(causDev1-Value) {$+1.5$}; \\

\node(causDev2-Label) [behaveLabel] {causal\\deviation 2}; \&
    \node(causDev2-1) [state] {};
    \devRecTreeNodeCoords{causDev2-1};
    \deviationExampleStates{causDev2-1}{dev}{};
    \botsExampleTerminalsOne{causDev2-1}{}{}{}{draw=black};
    \devRecRootEdges{causDev2-1}{\Upgrade/}{alt}{\NotUpgrade/}{follow};
    \devRecLEdges{causDev2-1}{\EventX/}{alt}{\EventY/}{dev};
    \devRecREdges{causDev2-1}{\EventX/}{alt}{\EventY/}{follow}; \&

    \node(causDev2-2) [state] {};
    \devRecTreeNodeCoords{causDev2-2};
    \deviationExampleStates{causDev2-2}{dev}{};
    \botsExampleTerminalsTwo{causDev2-2}{}{}{draw=black}{};
    \devRecRootEdges{causDev2-2}{\Upgrade/}{alt}{\NotUpgrade/}{follow};
    \devRecLEdges{causDev2-2}{\EventX/}{alt}{\EventY/}{dev};
    \devRecREdges{causDev2-2}{\EventX/}{follow}{\EventY/}{alt}; \&
    \node(causDev2-Value) {$+1.5$}; \\

\node(causDev3-Label) [behaveLabel] {causal\\deviation 3}; \&
    \node(causDev3-1) [state] {};
    \devRecTreeNodeCoords{causDev3-1};
    \deviationExampleStates{causDev3-1}{}{dev};
    \botsExampleTerminalsOne{causDev3-1}{}{}{draw=black}{};
    \devRecRootEdges{causDev3-1}{\Upgrade/}{alt}{\NotUpgrade/}{follow};
    \devRecLEdges{causDev3-1}{\EventX/}{alt}{\EventY/}{follow};
    \devRecREdges{causDev3-1}{\EventX/}{dev}{\EventY/}{alt}; \&

    \node(causDev3-2) [state] {};
    \devRecTreeNodeCoords{causDev3-2};
    \deviationExampleStates{causDev3-2}{}{dev};
    \botsExampleTerminalsTwo{causDev3-2}{}{}{draw=black}{};
    \devRecRootEdges{causDev3-2}{\Upgrade/}{alt}{\NotUpgrade/}{follow};
    \devRecLEdges{causDev3-2}{\EventX/}{follow}{\EventY/}{alt};
    \devRecREdges{causDev3-2}{\EventX/}{dev}{\EventY/}{alt}; \&
    \node(causDev3-Value) {$+0.5$}; \\

\node(causDev4-Label) [behaveLabel] {causal\\deviation 4}; \&
    \node(causDev4-1) [state] {};
    \devRecTreeNodeCoords{causDev4-1};
    \deviationExampleStates{causDev4-1}{}{dev};
    \botsExampleTerminalsOne{causDev4-1}{}{}{}{draw=black};
    \devRecRootEdges{causDev4-1}{\Upgrade/}{alt}{\NotUpgrade/}{follow};
    \devRecLEdges{causDev4-1}{\EventX/}{alt}{\EventY/}{follow};
    \devRecREdges{causDev4-1}{\EventX/}{alt}{\EventY/}{dev}; \&

    \node(causDev4-2) [state] {};
    \devRecTreeNodeCoords{causDev4-2};
    \deviationExampleStates{causDev4-2}{}{dev};
    \botsExampleTerminalsTwo{causDev4-2}{}{}{}{draw=black};
    \devRecRootEdges{causDev4-2}{\Upgrade/}{alt}{\NotUpgrade/}{follow};
    \devRecLEdges{causDev4-2}{\EventX/}{follow}{\EventY/}{alt};
    \devRecREdges{causDev4-2}{\EventX/}{alt}{\EventY/}{dev}; \&
    \node(causDev4-Value) {$+1$}; \\
  };
  \rowSepLine{causDev1};
  \rowSepLine[color=grey]{causDev2};
  \rowSepLine[color=grey]{causDev3};
  \rowSepLine[color=grey]{causDev4};
\end{tikzpicture}
\caption{
  Blind causal deviations in the extended battle-of-the-sexes example from player one's perspective.
  The full set of causal deviations includes all of the external deviations, but these are duplicates of \cref{fig:bots-strat-devs} so they are omitted.
  \PlayerTwoDesc/
  \ArrowDescriptions/
  \EvColDesc/
}
\label{fig:bots-caus-devs}
\end{figure}

\begin{figure}[htbp]
\centering
\begin{tikzpicture}[inner sep=0cm]
  \deviationExampleMatrix{
    \botsExampleHeader; \\[-0.3cm]

    \botsExampleFollowTrees; \\

\node(cfDev1-Label) [behaveLabel] {counterfactual\\deviation 1}; \&
    \node(cfDev1-1) [state, dev] {};
    \devRecTreeNodeCoords{cfDev1-1};
    \deviationExampleStates{cfDev1-1}{}{};
    \botsExampleTerminalsOne{cfDev1-1}{}{draw=black}{}{};
    \devRecRootEdges{cfDev1-1}{\Upgrade/}{dev}{\NotUpgrade/}{alt};
    \devRecLEdges{cfDev1-1}{\EventX/}{alt}{\EventY/}{follow};
    \devRecREdges{cfDev1-1}{\EventX/}{alt}{\EventY/}{follow}; \&

    \node(cfDev1-2) [state, dev] {};
    \devRecTreeNodeCoords{cfDev1-2};
    \deviationExampleStates{cfDev1-2}{}{};
    \botsExampleTerminalsTwo{cfDev1-2}{draw=black}{}{}{};
    \devRecRootEdges{cfDev1-2}{\Upgrade/}{dev}{\NotUpgrade/}{alt};
    \devRecLEdges{cfDev1-2}{\EventX/}{follow}{\EventY/}{alt};
    \devRecREdges{cfDev1-2}{\EventX/}{follow}{\EventY/}{alt}; \&
    \node(cfDev1-Value) {$+2.5$}; \\

\node(cfDev2-Label) [behaveLabel] {counterfactual\\deviation 2}; \&
    \node(cfDev2-1) [state, dev] {};
    \devRecTreeNodeCoords{cfDev2-1};
    \deviationExampleStates{cfDev2-1}{dev}{};
    \botsExampleTerminalsOne{cfDev2-1}{draw=black}{}{}{};
    \devRecRootEdges{cfDev2-1}{\Upgrade/}{dev}{\NotUpgrade/}{alt};
    \devRecLEdges{cfDev2-1}{\EventX/}{dev}{\EventY/}{alt};
    \devRecREdges{cfDev2-1}{\EventX/}{alt}{\EventY/}{follow}; \&

    \node(cfDev2-2) [state, dev] {};
    \devRecTreeNodeCoords{cfDev2-2};
    \deviationExampleStates{cfDev2-2}{dev}{};
    \botsExampleTerminalsTwo{cfDev2-2}{draw=black}{}{}{};
    \devRecRootEdges{cfDev2-2}{\Upgrade/}{dev}{\NotUpgrade/}{alt};
    \devRecLEdges{cfDev2-2}{\EventX/}{dev}{\EventY/}{alt};
    \devRecREdges{cfDev2-2}{\EventX/}{follow}{\EventY/}{alt}; \&
    \node(cfDev2-Value) {$+1$}; \\

\node(cfDev3-Label) [behaveLabel] {counterfactual\\deviation 3}; \&
    \node(cfDev3-1) [state, dev] {};
    \devRecTreeNodeCoords{cfDev3-1};
    \deviationExampleStates{cfDev3-1}{dev}{};
    \botsExampleTerminalsOne{cfDev3-1}{}{draw=black}{}{};
    \devRecRootEdges{cfDev3-1}{\Upgrade/}{dev}{\NotUpgrade/}{alt};
    \devRecLEdges{cfDev3-1}{\EventX/}{alt}{\EventY/}{dev};
    \devRecREdges{cfDev3-1}{\EventX/}{alt}{\EventY/}{follow}; \&

    \node(cfDev3-2) [state, dev] {};
    \devRecTreeNodeCoords{cfDev3-2};
    \deviationExampleStates{cfDev3-2}{dev}{};
    \botsExampleTerminalsTwo{cfDev3-2}{}{draw=black}{}{};
    \devRecRootEdges{cfDev3-2}{\Upgrade/}{dev}{\NotUpgrade/}{alt};
    \devRecLEdges{cfDev3-2}{\EventX/}{alt}{\EventY/}{dev};
    \devRecREdges{cfDev3-2}{\EventX/}{follow}{\EventY/}{alt}; \&
    \node(cfDev3-Value) {$+1.5$}; \\

\node(cfDev4-Label) [behaveLabel] {counterfactual\\deviation 4}; \&
    \node(cfDev4-1) [state, dev] {};
    \devRecTreeNodeCoords{cfDev4-1};
    \deviationExampleStates{cfDev4-1}{}{};
    \botsExampleTerminalsOne{cfDev4-1}{}{}{}{draw=black};
    \devRecRootEdges{cfDev4-1}{\Upgrade/}{alt}{\NotUpgrade/}{dev};
    \devRecLEdges{cfDev4-1}{\EventX/}{alt}{\EventY/}{follow};
    \devRecREdges{cfDev4-1}{\EventX/}{alt}{\EventY/}{follow}; \&

    \node(cfDev4-2) [state, dev] {};
    \devRecTreeNodeCoords{cfDev4-2};
    \deviationExampleStates{cfDev4-2}{}{};
    \botsExampleTerminalsTwo{cfDev4-2}{}{}{draw=black}{};
    \devRecRootEdges{cfDev4-2}{\Upgrade/}{alt}{\NotUpgrade/}{dev};
    \devRecLEdges{cfDev4-2}{\EventX/}{follow}{\EventY/}{alt};
    \devRecREdges{cfDev4-2}{\EventX/}{follow}{\EventY/}{alt}; \&
    \node(cfDev4-Value) {$+1.5$}; \\

\node(cfDev5-Label) [behaveLabel] {counterfactual\\deviation 5}; \&
    \node(cfDev5-1) [state, dev] {};
    \devRecTreeNodeCoords{cfDev5-1};
    \deviationExampleStates{cfDev5-1}{}{dev};
    \botsExampleTerminalsOne{cfDev5-1}{}{}{draw=black}{};
    \devRecRootEdges{cfDev5-1}{\Upgrade/}{alt}{\NotUpgrade/}{dev};
    \devRecLEdges{cfDev5-1}{\EventX/}{alt}{\EventY/}{follow};
    \devRecREdges{cfDev5-1}{\EventX/}{dev}{\EventY/}{alt}; \&

    \node(cfDev5-2) [state, dev] {};
    \devRecTreeNodeCoords{cfDev5-2};
    \deviationExampleStates{cfDev5-2}{}{dev};
    \botsExampleTerminalsTwo{cfDev5-2}{}{}{draw=black}{};
    \devRecRootEdges{cfDev5-2}{\Upgrade/}{alt}{\NotUpgrade/}{dev};
    \devRecLEdges{cfDev5-2}{\EventX/}{follow}{\EventY/}{alt};
    \devRecREdges{cfDev5-2}{\EventX/}{dev}{\EventY/}{alt}; \&
    \node(cfDev5-Value) {$+0.5$}; \\

\node(cfDev6-Label) [behaveLabel] {counterfactual\\deviation 6}; \&
    \node(cfDev6-1) [state, dev] {};
    \devRecTreeNodeCoords{cfDev6-1};
    \deviationExampleStates{cfDev6-1}{}{dev};
    \botsExampleTerminalsOne{cfDev6-1}{}{}{}{draw=black};
    \devRecRootEdges{cfDev6-1}{\Upgrade/}{alt}{\NotUpgrade/}{dev};
    \devRecLEdges{cfDev6-1}{\EventX/}{alt}{\EventY/}{follow};
    \devRecREdges{cfDev6-1}{\EventX/}{alt}{\EventY/}{dev}; \&

    \node(cfDev6-2) [state, dev] {};
    \devRecTreeNodeCoords{cfDev6-2};
    \deviationExampleStates{cfDev6-2}{}{dev};
    \botsExampleTerminalsTwo{cfDev6-2}{}{}{}{draw=black};
    \devRecRootEdges{cfDev6-2}{\Upgrade/}{alt}{\NotUpgrade/}{dev};
    \devRecLEdges{cfDev6-2}{\EventX/}{follow}{\EventY/}{alt};
    \devRecREdges{cfDev6-2}{\EventX/}{alt}{\EventY/}{dev}; \&
    \node(cfDev6-Value) {$+1$}; \\
  };

  \rowSepLine{cfDev1};
  \rowSepLine[color=grey]{cfDev2};
  \rowSepLine[color=grey]{cfDev3};
  \rowSepLine[color=grey]{cfDev4};
  \rowSepLine[color=grey]{cfDev5};
  \rowSepLine[color=grey]{cfDev6};
\end{tikzpicture}
\caption{
  Blind counterfactual deviations in the extended battle-of-the-sexes example from player one's perspective.
  \PlayerTwoDesc/
  \ArrowDescriptions/
  \EvColDesc/
}
\label{fig:bots-cf-devs}
\end{figure}
 
\subsection{Extended Shapley's Game}
\cref{fig:ext-shapley} is a visualization of our extended Shapley's game.
\cref{fig:shapley-medium-big-alt} shows the EFCCE and EFCE gaps of CFR's empirical distribution across iterations in this game with bonuses of 0.3 and 30.

\begin{figure}[htbp]
\centering
\begin{tikzpicture}[inner sep=0cm]
  \tikzstyle{state} = [circle, draw=black, minimum size=0.3cm, inner sep=0.05cm]
  \tikzstyle{arrow} = [thick, >=Stealth, -{>[scale=0.7]}]

  \node(root) [state] {\small 1};
  \node(r) [state, below left=0.7cm and 4.1cm of root.south west, anchor=north] {\small 1};
  \draw[arrow] (root) -- node[anchor=south east, yshift=0.04cm] {\small \Rock/} (r);

  \node(p) [state, below=0.7cm of root.south, anchor=north] {\small 1};
  \draw[arrow] (root) -- node[anchor=east, xshift=-0.04cm] {\small \Paper/} (p);

  \node(s) [state, below right=0.7cm and 4.1cm of root.south east, anchor=north] {\small 1};
  \draw[arrow] (root) -- node[anchor=south west, yshift=0.04cm] {\small \Scissors/} (s);

\node(rr?) [state, below left=0.7cm and 0.75cm of r.south west, anchor=north] {\small 2};
  \draw[arrow] (r) -- node[anchor=south east] {\small \PredRock/} (rr?);
  \node(rNr?) [state, below right=0.7cm and 0.75cm of r.south east, anchor=north] {\small 2};
  \draw[arrow] (r) -- node[anchor=south west] {\small \PredNotRock/} (rNr?);
  \draw[thick, dashed] (rr?.east) parabola[bend pos=0.5] bend +(0, 0.2cm) (rNr?.west);

  \node(pr?) [state, below left=0.7cm and 0.75cm of p.south west, anchor=north] {\small 2};
  \draw[arrow] (p) -- node[anchor=south east] {\small \PredRock/} (pr?);
  \node(pNr?) [state, below right=0.7cm and 0.75cm of p.south east, anchor=north] {\small 2};
  \draw[arrow] (p) -- node[anchor=south west] {\small \PredNotRock/} (pNr?);
  \draw[thick, dashed] (rNr?.east) parabola[bend pos=0.5] bend +(0, 0.2cm) (pr?.west);
  \draw[thick, dashed] (pr?.east) parabola[bend pos=0.5] bend +(0, 0.2cm) (pNr?.west);

  \node(sr?) [state, below left=0.7cm and 0.75cm of s.south west, anchor=north] {\small 2};
  \draw[arrow] (s) -- node[anchor=south east] {\small \PredRock/} (sr?);
  \node(sNr?) [state, below right=0.7cm and 0.75cm of s.south east, anchor=north] {\small 2};
  \draw[arrow] (s) -- node[anchor=south west] {\small \PredNotRock/} (sNr?);
  \draw[thick, dashed] (pNr?.east) parabola[bend pos=0.5] bend +(0, 0.2cm) (sr?.west);
  \draw[thick, dashed] (sr?.east) parabola[bend pos=0.5] bend +(0, 0.2cm) (sNr?.west);

\node(rr?r) [util, below left=0.4cm and 0.8cm of rr?.south west, anchor=north] {\small $-1+b$, $-1$};
  \draw[arrow] (rr?) -- node[anchor=south east] {\small \Rock/} (rr?r);
  \node(rr?p) [util, below=2cm of rr?.south, anchor=north] {\small $-1$, $+1$};
  \draw[arrow] (rr?) -- node[anchor=east, xshift=-0.04cm] {\small \Paper/} (rr?p);
  \node(rr?s) [util, below right=3cm and 1cm of rr?.south east, anchor=north] {\small $+1$, $-1$};
  \draw[arrow] (rr?) -- node[anchor=south west] {\small \Scissors/} (rr?s);

  \node(rNr?r) [util, below left=0.4cm and 0.8cm of rNr?.south west, anchor=north] {\small $-1$, $-1$};
  \draw[arrow] (rNr?) -- node[anchor=south east] {\small \Rock/} (rNr?r);
  \node(rNr?p) [util, below=2cm of rNr?.south, anchor=north] {\small $-1 + \frac{b}{3}$, $+1$};
  \draw[arrow] (rNr?) -- node[anchor=east, xshift=-0.04cm] {\small \Paper/} (rNr?p);
  \node(rNr?s) [util, below right=3cm and 1cm of rNr?.south east, anchor=north] {\small $+1 + \frac{b}{3}$, $-1$};
  \draw[arrow] (rNr?) -- node[anchor=south west] {\small \Scissors/} (rNr?s);

  \node(pr?r) [util, below left=0.4cm and 0.8cm of pr?.south west, anchor=north] {\small $+1 + b$, $-1$};
  \draw[arrow] (pr?) -- node[anchor=south east] {\small \Rock/} (pr?r);
  \node(pr?p) [util, below=2cm of pr?.south, anchor=north] {\small $-1$, $-1$};
  \draw[arrow] (pr?) -- node[anchor=east, xshift=-0.04cm] {\small \Paper/} (pr?p);
  \node(pr?s) [util, below right=3cm and 1cm of pr?.south east, anchor=north] {\small $-1$, $+1$};
  \draw[arrow] (pr?) -- node[anchor=south west] {\small \Scissors/} (pr?s);

  \node(pNr?r) [util, below left=0.4cm and 0.8cm of pNr?.south west, anchor=north] {\small $+1$, $-1$};
  \draw[arrow] (pNr?) -- node[anchor=south east] {\small \Rock/} (pNr?r);
  \node(pNr?p) [util, below=2cm of pNr?.south, anchor=north] {\small $-1 + \frac{b}{3}$, $-1$};
  \draw[arrow] (pNr?) -- node[anchor=east, xshift=-0.04cm] {\small \Paper/} (pNr?p);
  \node(pNr?s) [util, below right=3cm and 1cm of pNr?.south east, anchor=north] {\small $-1 + \frac{b}{3}$, $+1$};
  \draw[arrow] (pNr?) -- node[anchor=south west] {\small \Scissors/} (pNr?s);

  \node(sr?r) [util, below left=0.4cm and 0.8cm of sr?.south west, anchor=north] {\small $-1 + b$, $+1$};
  \draw[arrow] (sr?) -- node[anchor=south east] {\small \Rock/} (sr?r);
  \node(sr?p) [util, below=2cm of sr?.south, anchor=north] {\small $+1$, $-1$};
  \draw[arrow] (sr?) -- node[anchor=east, xshift=-0.04cm] {\small \Paper/} (sr?p);
  \node(sr?s) [util, below right=3cm and 1cm of sr?.south east, anchor=north] {\small $-1$, $-1$};
  \draw[arrow] (sr?) -- node[anchor=south west] {\small \Scissors/} (sr?s);

  \node(sNr?r) [util, below left=0.4cm and 0.8cm of sNr?.south west, anchor=north] {\small $-1$, $+1$};
  \draw[arrow] (sNr?) -- node[anchor=south east] {\small \Rock/} (sNr?r);
  \node(sNr?p) [util, below=2cm of sNr?.south, anchor=north] {\small $+1 + \frac{b}{3}$, $-1$};
  \draw[arrow] (sNr?) -- node[anchor=east, xshift=-0.04cm] {\small \Paper/} (sNr?p);
  \node(sNr?s) [util, below right=3cm and 1cm of sNr?.south east, anchor=north] {\small $-1 + \frac{b}{3}$, $-1$};
  \draw[arrow] (sNr?) -- node[anchor=south west] {\small \Scissors/} (sNr?s);
\end{tikzpicture}
\caption{An extended Shapley's game where the first player privately predicts whether or not their opponent will play Rock, denoted \PredRock/ and \PredNotRock/ respectively. Rock, Paper, and Scissors are denoted by \Rock/, \Paper/, and \Scissors/, respectively.
\InfoSetDesc/}
\label{fig:ext-shapley}
\end{figure}

\begin{figure}[htbp]
\centering
  \includegraphics[width=0.5\linewidth]{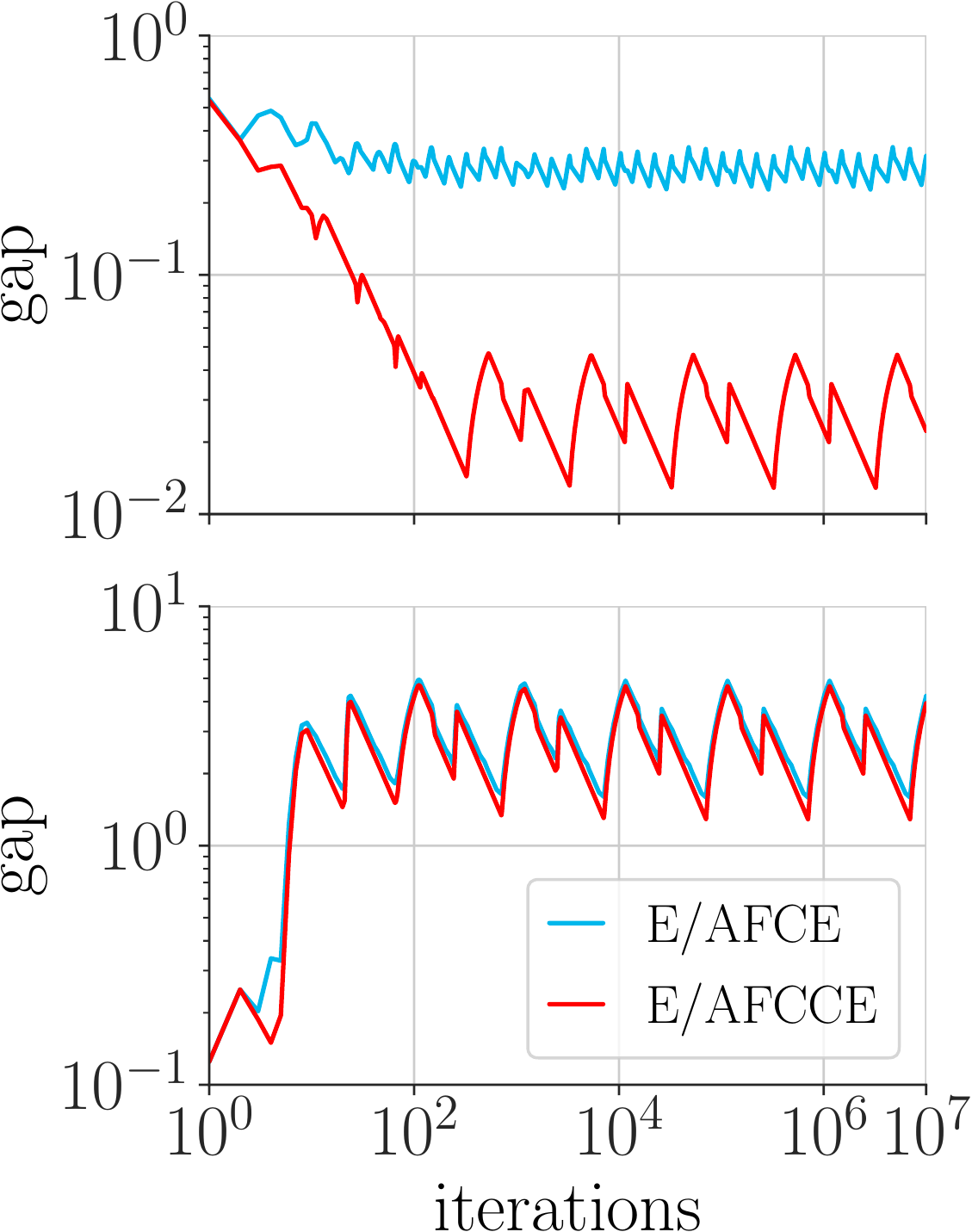}
  \caption{The gap between alternating-update~\parencite{burch2019revisiting} CFR's self-play empirical distribution and an extensive-form or agent-form (C)CE (E/AF(C)CE) in the extended Shapley's game with a medium bonus ($b=0.3$) (top) and a big bonus ($b=30$) (bottom).}
  \label{fig:shapley-medium-big-alt}
\end{figure}
 
\subsection{Extended Matching Pennies}
\cref{fig:ext-matching-pennies} is an illustration of the extensive-form tree of the extended matching pennies game introduced in \cref{sec:cf-dev-properties}.
The recommendations
$\set*{
  \tuple*{\mNotMatch, \; \mHeads \given \mMatch, \; \mTails \given \mNotMatch},
  \tuple*{\mHeads}
}$
and
$\set*{
  \tuple*{\mMatch, \; \mHeads \given \mMatch, \; \mTails \given \mNotMatch},
  \tuple*{\mTails}
}$
are visualized from player one's perspective at the top of \cref{fig:mp-strat-devs,fig:mp-caus-devs,fig:mp-cf-devs}..
\cref{fig:mp-strat-devs} shows all of the external deviations.
\cref{fig:mp-caus-devs} shows all of the blind causal deviations that are not also external deviations.
\cref{fig:mp-cf-devs} shows all of the blind counterfactual deviations that are not also external deviations.
The complete set of blind action deviations is composed of the deviations in \cref{fig:mp-caus-devs} plus counterfactual deviations 1 and 4 in \cref{fig:mp-cf-devs}.

To see that this is a CFCE, note that conditioning on the recommendations in the Match or Not Match information sets cannot improve a counterfactual deviation's value since the recommended actions are the same under both recommendations.
Informed counterfactual deviations must re-correlate after making a conditional deviation, but changing the action at the root only cannot change the player's value, so there is no beneficial informed counterfactual deviation that triggers on a root recommendation.

Note that blind causal deviations 2, which is also an action deviation, achieves a value of $+1 > 0$, and it is the only beneficial deviation.
Therefore, this recommendation distribution is a CFCE with a beneficial blind causal or action deviation, ensuring that it is not an EFCCE or an AFCCE.

\begin{figure}[htbp]
\centering
\begin{tikzpicture}[inner sep=0cm]
  \tikzstyle{state} = [circle, draw=black, minimum size=0.3cm, inner sep=0.05cm]
  \tikzstyle{arrow} = [thick, >=Stealth, -{>[scale=0.7]}]

  \node(root) [state] {\small 1};
  \node(m) [state, below left=\shortAYLength and 2.5cm of root.south west, anchor=north] {\small 1};
  \draw[arrow] (root) -- node[anchor=south east] {\small \Match/} (m);

  \node(nm) [state, below right=\shortAYLength and 2.5cm of root.south east, anchor=north] {\small 1};
  \draw[arrow] (root) -- node[anchor=south west] {\small \NotMatch/} (nm);

  \node(mH) [state, below left=\shortAYLength and 1cm of m.south west, anchor=north] {\small 2};
  \draw[arrow] (m) -- node[anchor=south east] {\small \Heads/} (mH);

  \node(mT) [state, below right=\shortAYLength and 1cm of m.south east, anchor=north] {\small 2};
  \draw[arrow] (m) -- node[anchor=south west] {\small \Tails/} (mT);

  \node(nmH) [state, below left=\shortAYLength and 1cm of nm.south west, anchor=north] {\small 2};
  \draw[arrow] (nm) -- node[anchor=south east] {\small \Heads/} (nmH);

  \node(nmT) [state, below right=\shortAYLength and 1cm of nm.south east, anchor=north] {\small 2};
  \draw[arrow] (nm) -- node[anchor=south west] {\small \Tails/} (nmT);

  \draw[thick, dashed] (mH.east) parabola[bend pos=0.5] bend +(0, 0.2cm) (mT.west);
  \draw[thick, dashed] (mT.east) parabola[bend pos=0.5] bend +(0, 0.2cm) (nmH.west);
  \draw[thick, dashed] (nmH.east) parabola[bend pos=0.5] bend +(0, 0.2cm) (nmT.west);

  \node(mHH) [util, below left=\shortAYLength and 0.5cm of mH.south west, anchor=north] {$+1$};
  \draw[arrow] (mH) -- node[anchor=south east] {\small \Heads/} (mHH);

  \node(mTH) [util, below left=\shortAYLength and 0.5cm of mT.south west, anchor=north] {$-1$};
  \draw[arrow] (mT) -- node[anchor=south east] {\small \Heads/} (mTH);

  \node(mHT) [util, below right=\shortAYLength and 0.5cm of mH.south east, anchor=north] {$-1$};
  \draw[arrow] (mH) -- node[anchor=south west] {\small \Tails/} (mHT);

  \node(mTT) [util, below right=\shortAYLength and 0.5cm of mT.south east, anchor=north] {$+1$};
  \draw[arrow] (mT) -- node[anchor=south west] {\small \Tails/} (mTT);

  \node(nmHH) [util, below left=\shortAYLength and 0.5cm of nmH.south west, anchor=north] {$-1$};
  \draw[arrow] (nmH) -- node[anchor=south east] {\small \Heads/} (nmHH);

  \node(nmTH) [util, below left=\shortAYLength and 0.5cm of nmT.south west, anchor=north] {$+1$};
  \draw[arrow] (nmT) -- node[anchor=south east] {\small \Heads/} (nmTH);

  \node(nmHT) [util, below right=\shortAYLength and 0.5cm of nmH.south east, anchor=north] {$+1$};
  \draw[arrow] (nmH) -- node[anchor=south west] {\small \Tails/} (nmHT);

  \node(nmTT) [util, below right=\shortAYLength and 0.5cm of nmT.south east, anchor=north] {$-1$};
  \draw[arrow] (nmT) -- node[anchor=south west] {\small \Tails/} (nmTT);
\end{tikzpicture}
\caption{An extended matching pennies game where the first player privately chooses whether their goal is to match or not match. The game is zero-sum so only the payoffs for player one are shown.
\InfoSetDesc/}
\label{fig:ext-matching-pennies}
\end{figure}

\begin{figure}[htbp]
\centering
\begin{tikzpicture}[inner sep=0cm]
  \deviationExampleMatrix{
    \mpExampleHeader; \\[-0.3cm]

    \mpExampleFollowTrees; \\

\node(stratDev1-Label) [behaveLabel] {external\\deviation 1}; \&
    \node(stratDev1-1) [state, dev] {};
    \devRecTreeNodeCoords{stratDev1-1};
    \deviationExampleStates{stratDev1-1}{dev}{dev};
    \mpExampleTerminalsOne{stratDev1-1}{draw=black}{}{}{};
    \devRecRootEdges{stratDev1-1}{M}{dev}{$\neg\text{M}$}{alt};
    \devRecLEdges{stratDev1-1}{H}{dev}{T}{alt};
    \devRecREdges{stratDev1-1}{H}{dev}{T}{alt}; \&

    \node(stratDev1-2) [state, dev] {};
    \devRecTreeNodeCoords{stratDev1-2};
    \deviationExampleStates{stratDev1-2}{dev}{dev};
    \mpExampleTerminalsTwo{stratDev1-2}{draw=black}{}{}{};
    \devRecRootEdges{stratDev1-2}{M}{dev}{$\neg\text{M}$}{alt};
    \devRecLEdges{stratDev1-2}{H}{dev}{T}{alt};
    \devRecREdges{stratDev1-2}{H}{dev}{T}{alt}; \&
    \node(stratDev1-Value) {0}; \\

\node(stratDev2-Label) [behaveLabel] {external\\deviation 2}; \&
    \node(stratDev2-1) [state, dev] {};
    \devRecTreeNodeCoords{stratDev2-1};
    \deviationExampleStates{stratDev2-1}{dev}{dev};
    \mpExampleTerminalsOne{stratDev2-1}{draw=black}{}{}{};
    \devRecRootEdges{stratDev2-1}{M}{dev}{$\neg\text{M}$}{alt};
    \devRecLEdges{stratDev2-1}{H}{dev}{T}{alt};
    \devRecREdges{stratDev2-1}{H}{alt}{T}{dev}; \&

    \node(stratDev2-2) [state] {};
    \devRecTreeNodeCoords{stratDev2-2};
    \deviationExampleStates{stratDev2-2}{dev}{dev};
    \mpExampleTerminalsTwo{stratDev2-2}{draw=black}{}{}{};
    \devRecRootEdges{stratDev2-2}{M}{dev}{$\neg\text{M}$}{alt};
    \devRecLEdges{stratDev2-2}{H}{dev}{T}{alt};
    \devRecREdges{stratDev2-2}{H}{alt}{T}{dev}; \&
    \node(stratDev2-Value) {0}; \\

\node(stratDev3-Label) [behaveLabel] {external\\deviation 3}; \&
    \node(stratDev3-1) [state, dev] {};
    \devRecTreeNodeCoords{stratDev3-1};
    \deviationExampleStates{stratDev3-1}{dev}{dev};
    \mpExampleTerminalsOne{stratDev3-1}{}{draw=black}{}{};
    \devRecRootEdges{stratDev3-1}{M}{dev}{$\neg\text{M}$}{alt};
    \devRecLEdges{stratDev3-1}{H}{alt}{T}{dev};
    \devRecREdges{stratDev3-1}{H}{dev}{T}{alt}; \&

    \node(stratDev3-2) [state, dev] {};
    \devRecTreeNodeCoords{stratDev3-2};
    \deviationExampleStates{stratDev3-2}{dev}{dev};
    \mpExampleTerminalsTwo{stratDev3-2}{}{draw=black}{}{};
    \devRecRootEdges{stratDev3-2}{M}{dev}{$\neg\text{M}$}{alt};
    \devRecLEdges{stratDev3-2}{H}{alt}{T}{dev};
    \devRecREdges{stratDev3-2}{H}{dev}{T}{alt}; \&
    \node(stratDev3-Value) {0}; \\

\node(stratDev4-Label) [behaveLabel] {external\\deviation 4}; \&
    \node(stratDev4-1) [state, dev] {};
    \devRecTreeNodeCoords{stratDev4-1};
    \deviationExampleStates{stratDev4-1}{dev}{dev};
    \mpExampleTerminalsOne{stratDev4-1}{}{draw=black}{}{};
    \devRecRootEdges{stratDev4-1}{M}{dev}{$\neg\text{M}$}{alt};
    \devRecLEdges{stratDev4-1}{H}{alt}{T}{dev};
    \devRecREdges{stratDev4-1}{H}{alt}{T}{dev}; \&

    \node(stratDev4-2) [state, dev] {};
    \devRecTreeNodeCoords{stratDev4-2};
    \deviationExampleStates{stratDev4-2}{dev}{dev};
    \mpExampleTerminalsTwo{stratDev4-2}{}{draw=black}{}{};
    \devRecRootEdges{stratDev4-2}{M}{dev}{$\neg\text{M}$}{alt};
    \devRecLEdges{stratDev4-2}{H}{alt}{T}{dev};
    \devRecREdges{stratDev4-2}{H}{alt}{T}{dev}; \&
    \node(stratDev4-Value) {0}; \\

\node(stratDev5-Label) [behaveLabel] {external\\deviation 5}; \&
    \node(stratDev5-1) [state, dev] {};
    \devRecTreeNodeCoords{stratDev5-1};
    \deviationExampleStates{stratDev5-1}{dev}{dev};
    \mpExampleTerminalsOne{stratDev5-1}{}{}{draw=black}{};
    \devRecRootEdges{stratDev5-1}{M}{alt}{$\neg\text{M}$}{dev};
    \devRecLEdges{stratDev5-1}{H}{dev}{T}{alt};
    \devRecREdges{stratDev5-1}{H}{dev}{T}{alt}; \&

    \node(stratDev5-2) [state, dev] {};
    \devRecTreeNodeCoords{stratDev5-2};
    \deviationExampleStates{stratDev5-2}{dev}{dev};
    \mpExampleTerminalsTwo{stratDev5-2}{}{}{draw=black}{};
    \devRecRootEdges{stratDev5-2}{M}{alt}{$\neg\text{M}$}{dev};
    \devRecLEdges{stratDev5-2}{H}{dev}{T}{alt};
    \devRecREdges{stratDev5-2}{H}{dev}{T}{alt}; \&
    \node(stratDev5-Value) {0}; \\

\node(stratDev6-Label) [behaveLabel] {external\\deviation 6}; \&
    \node(stratDev6-1) [state, dev] {};
    \devRecTreeNodeCoords{stratDev6-1};
    \deviationExampleStates{stratDev6-1}{dev}{dev};
    \mpExampleTerminalsOne{stratDev6-1}{}{}{draw=black}{};
    \devRecRootEdges{stratDev6-1}{M}{alt}{$\neg\text{M}$}{dev};
    \devRecLEdges{stratDev6-1}{H}{alt}{T}{dev};
    \devRecREdges{stratDev6-1}{H}{dev}{T}{alt}; \&

    \node(stratDev6-2) [state, dev] {};
    \devRecTreeNodeCoords{stratDev6-2};
    \deviationExampleStates{stratDev6-2}{dev}{dev};
    \mpExampleTerminalsTwo{stratDev6-2}{}{}{draw=black}{};
    \devRecRootEdges{stratDev6-2}{M}{alt}{$\neg\text{M}$}{dev};
    \devRecLEdges{stratDev6-2}{H}{alt}{T}{dev};
    \devRecREdges{stratDev6-2}{H}{dev}{T}{alt}; \&
    \node(stratDev6-Value) {0}; \\

\node(stratDev7-Label) [behaveLabel] {external\\deviation 7}; \&
    \node(stratDev7-1) [state, dev] {};
    \devRecTreeNodeCoords{stratDev7-1};
    \deviationExampleStates{stratDev7-1}{dev}{dev};
    \mpExampleTerminalsOne{stratDev7-1}{}{}{}{draw=black};
    \devRecRootEdges{stratDev7-1}{M}{alt}{$\neg\text{M}$}{dev};
    \devRecLEdges{stratDev7-1}{H}{dev}{T}{alt};
    \devRecREdges{stratDev7-1}{H}{alt}{T}{dev}; \&

    \node(stratDev7-2) [state, dev] {};
    \devRecTreeNodeCoords{stratDev7-2};
    \deviationExampleStates{stratDev7-2}{dev}{dev};
    \mpExampleTerminalsTwo{stratDev7-2}{}{}{}{draw=black};
    \devRecRootEdges{stratDev7-2}{M}{alt}{$\neg\text{M}$}{dev};
    \devRecLEdges{stratDev7-2}{H}{dev}{T}{alt};
    \devRecREdges{stratDev7-2}{H}{alt}{T}{dev}; \&
    \node(stratDev7-Value) {0}; \\

\node(stratDev8-Label) [behaveLabel] {external\\deviation 8}; \&
    \node(stratDev8-1) [state, dev] {};
    \devRecTreeNodeCoords{stratDev8-1};
    \deviationExampleStates{stratDev8-1}{dev}{dev};
    \mpExampleTerminalsOne{stratDev8-1}{}{}{}{draw=black};
    \devRecRootEdges{stratDev8-1}{M}{alt}{$\neg\text{M}$}{dev};
    \devRecLEdges{stratDev8-1}{H}{alt}{T}{dev};
    \devRecREdges{stratDev8-1}{H}{alt}{T}{dev}; \&

    \node(stratDev8-2) [state, dev] {};
    \devRecTreeNodeCoords{stratDev8-2};
    \deviationExampleStates{stratDev8-2}{dev}{dev};
    \mpExampleTerminalsTwo{stratDev8-2}{}{}{}{draw=black};
    \devRecRootEdges{stratDev8-2}{M}{alt}{$\neg\text{M}$}{dev};
    \devRecLEdges{stratDev8-2}{H}{alt}{T}{dev};
    \devRecREdges{stratDev8-2}{H}{alt}{T}{dev}; \&
    \node(stratDev8-Value) {0}; \\
  };
  \rowSepLine{stratDev1};
  \rowSepLine[color=grey]{stratDev2};
  \rowSepLine[color=grey]{stratDev3};
  \rowSepLine[color=grey]{stratDev4};
  \rowSepLine[color=grey]{stratDev5};
  \rowSepLine[color=grey]{stratDev6};
  \rowSepLine[color=grey]{stratDev7};
  \rowSepLine[color=grey]{stratDev8};
\end{tikzpicture}
\caption{
  External deviations in the extended matching pennies example from player one's perspective.
  \PlayerTwoDesc/
  \ArrowDescriptions/
  \EvColDesc/
}
\label{fig:mp-strat-devs}
\end{figure}

\begin{figure}[htbp]
\centering
\begin{tikzpicture}[inner sep=0cm]
  \deviationExampleMatrix{
    \mpExampleHeader; \\[-0.3cm]

    \mpExampleFollowTrees; \\

\node(causDev1-Label) [behaveLabel] {causal\\deviation 1}; \&
    \node(causDev1-1) [state] {};
    \devRecTreeNodeCoords{causDev1-1};
    \deviationExampleStates{causDev1-1}{dev}{};
    \mpExampleTerminalsOne{causDev1-1}{}{}{}{draw=black};
    \devRecRootEdges{causDev1-1}{M}{alt}{$\neg\text{M}$}{follow};
    \devRecLEdges{causDev1-1}{H}{dev}{T}{alt};
    \devRecREdges{causDev1-1}{H}{alt}{T}{follow}; \&

    \node(causDev1-2) [state] {};
    \devRecTreeNodeCoords{causDev1-2};
    \deviationExampleStates{causDev1-2}{dev}{};
    \mpExampleTerminalsTwo{causDev1-2}{draw=black}{}{}{};
    \devRecRootEdges{causDev1-2}{M}{follow}{$\neg\text{M}$}{alt};
    \devRecLEdges{causDev1-2}{H}{dev}{T}{alt};
    \devRecREdges{causDev1-2}{H}{alt}{T}{follow}; \&
    \node(causDev1-Value) {0}; \\

\node(causDev2-Label) [behaveLabel] {causal\\deviation 2}; \&
    \node(causDev2-1) [state] {};
    \devRecTreeNodeCoords{causDev2-1};
    \deviationExampleStates{causDev2-1}{dev}{};
    \mpExampleTerminalsOne{causDev2-1}{}{}{}{draw=black};
    \devRecRootEdges{causDev2-1}{M}{alt}{$\neg\text{M}$}{follow};
    \devRecLEdges{causDev2-1}{H}{alt}{T}{dev};
    \devRecREdges{causDev2-1}{H}{alt}{T}{follow}; \&

    \node(causDev2-2) [state] {};
    \devRecTreeNodeCoords{causDev2-2};
    \deviationExampleStates{causDev2-2}{dev}{};
    \mpExampleTerminalsTwo{causDev2-2}{}{draw=black}{}{};
    \devRecRootEdges{causDev2-2}{M}{follow}{$\neg\text{M}$}{alt};
    \devRecLEdges{causDev2-2}{H}{alt}{T}{dev};
    \devRecREdges{causDev2-2}{H}{alt}{T}{follow}; \&
    \node(causDev2-Value) {$+1$}; \\

\node(causDev3-Label) [behaveLabel] {causal\\deviation 3}; \&
    \node(causDev3-1) [state] {};
    \devRecTreeNodeCoords{causDev3-1};
    \deviationExampleStates{causDev3-1}{}{dev};
    \mpExampleTerminalsOne{causDev3-1}{}{}{draw=black}{};
    \devRecRootEdges{causDev3-1}{M}{alt}{$\neg\text{M}$}{follow};
    \devRecLEdges{causDev3-1}{H}{follow}{T}{alt};
    \devRecREdges{causDev3-1}{H}{dev}{T}{alt}; \&

    \node(causDev3-2) [state] {};
    \devRecTreeNodeCoords{causDev3-2};
    \deviationExampleStates{causDev3-2}{}{dev};
    \mpExampleTerminalsTwo{causDev3-2}{draw=black}{}{}{};
    \devRecRootEdges{causDev3-2}{M}{follow}{$\neg\text{M}$}{alt};
    \devRecLEdges{causDev3-2}{H}{follow}{T}{alt};
    \devRecREdges{causDev3-2}{H}{dev}{T}{alt}; \&
    \node(causDev3-Value) {$-1$}; \\

\node(causDev4-Label) [behaveLabel] {causal\\deviation 4}; \&
    \node(causDev4-1) [state] {};
    \devRecTreeNodeCoords{causDev4-1};
    \deviationExampleStates{causDev4-1}{}{dev};
    \mpExampleTerminalsOne{causDev4-1}{}{}{}{draw=black};
    \devRecRootEdges{causDev4-1}{M}{alt}{$\neg\text{M}$}{follow};
    \devRecLEdges{causDev4-1}{H}{follow}{T}{alt};
    \devRecREdges{causDev4-1}{H}{alt}{T}{dev}; \&

    \node(causDev4-2) [state] {};
    \devRecTreeNodeCoords{causDev4-2};
    \deviationExampleStates{causDev4-2}{}{dev};
    \mpExampleTerminalsTwo{causDev4-2}{draw=black}{}{}{};
    \devRecRootEdges{causDev4-2}{M}{follow}{$\neg\text{M}$}{alt};
    \devRecLEdges{causDev4-2}{H}{follow}{T}{alt};
    \devRecREdges{causDev4-2}{H}{alt}{T}{dev}; \&
    \node(causDev4-Value) {0}; \\
  };
  \rowSepLine{causDev1};
  \rowSepLine[color=grey]{causDev2};
  \rowSepLine[color=grey]{causDev3};
  \rowSepLine[color=grey]{causDev4};
\end{tikzpicture}
\caption{
  Blind causal deviations in the extended matching pennies example from player one's perspective.
  The full set of causal deviations includes all of the external deviations, but these are duplicates of \cref{fig:mp-strat-devs} so they are omitted.
  \PlayerTwoDesc/
  \ArrowDescriptions/
  \EvColDesc/
}
\label{fig:mp-caus-devs}
\end{figure}

\begin{figure}[htbp]
\centering
\begin{tikzpicture}[inner sep=0cm]
  \deviationExampleMatrix{
    \mpExampleHeader; \\[-0.3cm]

    \mpExampleFollowTrees; \\

\node(cfDev1-Label) [behaveLabel] {counterfactual\\deviation 1}; \&
    \node(cfDev1-1) [state, dev] {};
    \devRecTreeNodeCoords{cfDev1-1};
    \deviationExampleStates{cfDev1-1}{}{};
    \mpExampleTerminalsOne{cfDev1-1}{draw=black}{}{}{};
    \devRecRootEdges{cfDev1-1}{M}{dev}{$\neg\text{M}$}{alt};
    \devRecLEdges{cfDev1-1}{H}{follow}{T}{alt};
    \devRecREdges{cfDev1-1}{H}{alt}{T}{follow}; \&

    \node(cfDev1-2) [state, dev] {};
    \devRecTreeNodeCoords{cfDev1-2};
    \deviationExampleStates{cfDev1-2}{}{};
    \mpExampleTerminalsTwo{cfDev1-2}{draw=black}{}{}{};
    \devRecRootEdges{cfDev1-2}{M}{dev}{$\neg\text{M}$}{alt};
    \devRecLEdges{cfDev1-2}{H}{follow}{T}{alt};
    \devRecREdges{cfDev1-2}{H}{alt}{T}{follow}; \&
    \node(cfDev1-Value) {0}; \\

\node(cfDev2-Label) [behaveLabel] {counterfactual\\deviation 2}; \&
    \node(cfDev2-1) [state, dev] {};
    \devRecTreeNodeCoords{cfDev2-1};
    \deviationExampleStates{cfDev2-1}{dev}{};
    \mpExampleTerminalsOne{cfDev2-1}{draw=black}{}{}{};
    \devRecRootEdges{cfDev2-1}{M}{dev}{$\neg\text{M}$}{alt};
    \devRecLEdges{cfDev2-1}{H}{dev}{T}{alt};
    \devRecREdges{cfDev2-1}{H}{alt}{T}{follow}; \&

    \node(cfDev2-2) [state, dev] {};
    \devRecTreeNodeCoords{cfDev2-2};
    \deviationExampleStates{cfDev2-2}{dev}{};
    \mpExampleTerminalsTwo{cfDev2-2}{draw=black}{}{}{};
    \devRecRootEdges{cfDev2-2}{M}{dev}{$\neg\text{M}$}{alt};
    \devRecLEdges{cfDev2-2}{H}{dev}{T}{alt};
    \devRecREdges{cfDev2-2}{H}{alt}{T}{follow}; \&
    \node(cfDev2-Value) {0}; \\

\node(cfDev3-Label) [behaveLabel] {counterfactual\\deviation 3}; \&
    \node(cfDev3-1) [state, dev] {};
    \devRecTreeNodeCoords{cfDev3-1};
    \deviationExampleStates{cfDev3-1}{dev}{};
    \mpExampleTerminalsOne{cfDev3-1}{}{draw=black}{}{};
    \devRecRootEdges{cfDev3-1}{M}{dev}{$\neg\text{M}$}{alt};
    \devRecLEdges{cfDev3-1}{H}{alt}{T}{dev};
    \devRecREdges{cfDev3-1}{H}{alt}{T}{follow}; \&

    \node(cfDev3-2) [state, dev] {};
    \devRecTreeNodeCoords{cfDev3-2};
    \deviationExampleStates{cfDev3-2}{dev}{};
    \mpExampleTerminalsTwo{cfDev3-2}{}{draw=black}{}{};
    \devRecRootEdges{cfDev3-2}{M}{dev}{$\neg\text{M}$}{alt};
    \devRecLEdges{cfDev3-2}{H}{alt}{T}{dev};
    \devRecREdges{cfDev3-2}{H}{alt}{T}{follow}; \&
    \node(cfDev3-Value) {0}; \\

\node(cfDev4-Label) [behaveLabel] {counterfactual\\deviation 4}; \&
    \node(cfDev4-1) [state, dev] {};
    \devRecTreeNodeCoords{cfDev4-1};
    \deviationExampleStates{cfDev4-1}{}{};
    \mpExampleTerminalsOne{cfDev4-1}{}{}{}{draw=black};
    \devRecRootEdges{cfDev4-1}{M}{alt}{$\neg\text{M}$}{dev};
    \devRecLEdges{cfDev4-1}{H}{follow}{T}{alt};
    \devRecREdges{cfDev4-1}{H}{alt}{T}{follow}; \&

    \node(cfDev4-2) [state, dev] {};
    \devRecTreeNodeCoords{cfDev4-2};
    \deviationExampleStates{cfDev4-2}{}{};
    \mpExampleTerminalsTwo{cfDev4-2}{}{}{}{draw=black};
    \devRecRootEdges{cfDev4-2}{M}{alt}{$\neg\text{M}$}{dev};
    \devRecLEdges{cfDev4-2}{H}{follow}{T}{alt};
    \devRecREdges{cfDev4-2}{H}{alt}{T}{follow}; \&
    \node(cfDev4-Value) {0}; \\

\node(cfDev5-Label) [behaveLabel] {counterfactual\\deviation 5}; \&
    \node(cfDev5-1) [state, dev] {};
    \devRecTreeNodeCoords{cfDev5-1};
    \deviationExampleStates{cfDev5-1}{}{dev};
    \mpExampleTerminalsOne{cfDev5-1}{}{}{draw=black}{};
    \devRecRootEdges{cfDev5-1}{M}{alt}{$\neg\text{M}$}{dev};
    \devRecLEdges{cfDev5-1}{H}{follow}{T}{alt};
    \devRecREdges{cfDev5-1}{H}{dev}{T}{alt}; \&

    \node(cfDev5-2) [state, dev] {};
    \devRecTreeNodeCoords{cfDev5-2};
    \deviationExampleStates{cfDev5-2}{}{dev};
    \mpExampleTerminalsTwo{cfDev5-2}{}{}{draw=black}{};
    \devRecRootEdges{cfDev5-2}{M}{alt}{$\neg\text{M}$}{dev};
    \devRecLEdges{cfDev5-2}{H}{follow}{T}{alt};
    \devRecREdges{cfDev5-2}{H}{dev}{T}{alt}; \&
    \node(cfDev5-Value) {0}; \\

\node(cfDev6-Label) [behaveLabel] {counterfactual\\deviation 6}; \&
    \node(cfDev6-1) [state, dev] {};
    \devRecTreeNodeCoords{cfDev6-1};
    \deviationExampleStates{cfDev6-1}{}{dev};
    \mpExampleTerminalsOne{cfDev6-1}{}{}{}{draw=black};
    \devRecRootEdges{cfDev6-1}{M}{alt}{$\neg\text{M}$}{dev};
    \devRecLEdges{cfDev6-1}{H}{follow}{T}{alt};
    \devRecREdges{cfDev6-1}{H}{alt}{T}{dev}; \&

    \node(cfDev6-2) [state, dev] {};
    \devRecTreeNodeCoords{cfDev6-2};
    \deviationExampleStates{cfDev6-2}{}{dev};
    \mpExampleTerminalsTwo{cfDev6-2}{}{}{}{draw=black};
    \devRecRootEdges{cfDev6-2}{M}{alt}{$\neg\text{M}$}{dev};
    \devRecLEdges{cfDev6-2}{H}{follow}{T}{alt};
    \devRecREdges{cfDev6-2}{H}{alt}{T}{dev}; \&
    \node(cfDev6-Value) {0}; \\
  };

  \rowSepLine{cfDev1};
  \rowSepLine[color=grey]{cfDev2};
  \rowSepLine[color=grey]{cfDev3};
  \rowSepLine[color=grey]{cfDev4};
  \rowSepLine[color=grey]{cfDev5};
  \rowSepLine[color=grey]{cfDev6};
\end{tikzpicture}
\caption{
  Blind counterfactual deviations in the extended matching pennies example from player one's perspective.
  \PlayerTwoDesc/
  \ArrowDescriptions/
  \EvColDesc/
}
\label{fig:mp-cf-devs}
\end{figure}
 
\subsection{Observable Sequential Rationality}
We can further extend the extended battle-of-the-sexes example from \cref{sec:ext-bots} to construct a CE that is is not an observable sequential CFCCE, \ie/, a CE where there is a beneficial blind counterfactual deviation in terms of counterfactual value.
We add a private action at the start of the game where player one decides whether or not to play (\Play/ or \NotPlay/) the extended battle-of-the-sexes game.
If they opt-out of the game, they immediately receive $+3$, which matches the value they would receive for attending their preferred event with their partner.
The value for player two does not matter, but for the sake of completeness assume they receive $+3$ as well regardless of their decision.

The recommendations
\begin{itemize}
  \item $\set*{
    \tuple*{\mNotPlay, \; \mNotUpgrade \given \mPlay, \; \mEventY \given \mUpgrade, \; \mEventY \given \mNotUpgrade},
    \tuple*{\mEventY}
  }$ and
  \item $\set*{
    \tuple*{\mNotPlay, \; \mNotUpgrade \given \mPlay, \; \mEventX \given \mUpgrade, \; \mEventX \given \mNotUpgrade},
    \tuple*{\mEventX}
  }$
\end{itemize}
give player one an average value of $+3$.
There is no beneficial swap or internal deviation since there is no other way to achieve an average value of $+3$ in the extended battle-of-the-sexes game.
Therefore, the recommendations are a CE.
However, deviating to \Play/\Upgrade/ and then re-correlating with player two by choosing the event suggested by the recommendations improves player one's value at \Play/ to $+2.5 > +1.5$.
Thus, there is a beneficial counterfactual deviation at \Play/, which reveals that the recommendations are not an observable sequential CFCCE.
The recommendations and this deviation are visualized from player one's perspective in \cref{fig:seq-bots}.

\begin{figure}[htbp]
\centering
\begin{tikzpicture}[inner sep=0cm]
  \deviationExampleMatrix{
    \botsExampleHeader{}; \\

    \node(follow-Label) [behaveLabel] {always\\follow}; \&

    \node(follow-1) [state, xshift=0.7cm] {};
    \devRecTreeInternalCoords{follow-1};
    \node(follow-1L) [state] at (follow-1LCoord) {};
    \node(follow-1R) [util, draw=black] at (follow-1RCoord) {$+3$};

    \devRecTreeInternalCoords{follow-1L};
    \node(follow-1LL) [state] at (follow-1LLCoord) {};
    \node(follow-1LR) [state] at (follow-1LRCoord) {};

    \devRecTreeTerminalCoords{follow-1LL};
    \node(follow-1LLL) [util] at (follow-1LLLCoord) {$+2$};
    \node(follow-1LLR) [util] at (follow-1LLRCoord) {0};

    \devRecTreeTerminalCoords{follow-1LR};
    \node(follow-1LRL) [util] at (follow-1LRLCoord) {$+1$};
    \node(follow-1LRR) [util] at (follow-1LRRCoord) {0};

    \devRecInternalEdges{follow-1}{\Play/}{alt}{\NotPlay/}{follow};
    \devRecInternalEdges{follow-1L}{\Upgrade/}{alt}{\NotUpgrade/}{follow};
    \devRecTerminalEdges{follow-1LL}{\EventX/}{follow}{\EventY/}{alt};
    \devRecTerminalEdges{follow-1LR}{\EventX/}{follow}{\EventY/}{alt}; \&

    \node(follow-2) [state, xshift=0.7cm] {};
    \devRecTreeInternalCoords{follow-2};
    \node(follow-2L) [state] at (follow-2LCoord) {};
    \node(follow-2R) [util, draw=black] at (follow-2RCoord) {$+3$};

    \devRecTreeInternalCoords{follow-2L};
    \node(follow-2LL) [state] at (follow-2LLCoord) {};
    \node(follow-2LR) [state] at (follow-2LRCoord) {};

    \devRecTreeTerminalCoords{follow-2LL};
    \node(follow-2LLL) [util] at (follow-2LLLCoord) {0};
    \node(follow-2LLR) [util] at (follow-2LLRCoord) {$+3$};

    \devRecTreeTerminalCoords{follow-2LR};
    \node(follow-2LRL) [util] at (follow-2LRLCoord) {0};
    \node(follow-2LRR) [util] at (follow-2LRRCoord) {$+2$};

    \devRecInternalEdges{follow-2}{\Play/}{alt}{\NotPlay/}{follow};
    \devRecInternalEdges{follow-2L}{\Upgrade/}{alt}{\NotUpgrade/}{follow};
    \devRecTerminalEdges{follow-2LL}{\EventX/}{alt}{\EventY/}{follow};
    \devRecTerminalEdges{follow-2LR}{\EventX/}{alt}{\EventY/}{follow}; \&

    \node(follow-Value) {$+3$};
    \node(follow-Value2) [below=0.3cm of follow-Value.south west, anchor=north west]
      {$\bs{+1.5 \given \text{\Play/}}$}; \\

    \node(cfDev-Label) [behaveLabel] {counterfactual\\deviation}; \&

    \node(cfDev-1) [state, dev, xshift=0.7cm] {};
    \devRecTreeInternalCoords{cfDev-1};
    \node(cfDev-1L) [state, dev] at (cfDev-1LCoord) {};
    \node(cfDev-1R) [util] at (cfDev-1RCoord) {$+3$};

    \devRecTreeInternalCoords{cfDev-1L};
    \node(cfDev-1LL) [state] at (cfDev-1LLCoord) {};
    \node(cfDev-1LR) [state] at (cfDev-1LRCoord) {};

    \devRecTreeTerminalCoords{cfDev-1LL};
    \node(cfDev-1LLL) [util, draw=black] at (cfDev-1LLLCoord) {$+2$};
    \node(cfDev-1LLR) [util] at (cfDev-1LLRCoord) {0};

    \devRecTreeTerminalCoords{cfDev-1LR};
    \node(cfDev-1LRL) [util] at (cfDev-1LRLCoord) {$+1$};
    \node(cfDev-1LRR) [util] at (cfDev-1LRRCoord) {0};

    \devRecInternalEdges{cfDev-1}{\Play/}{dev}{\NotPlay/}{alt};
    \devRecInternalEdges{cfDev-1L}{\Upgrade/}{dev}{\NotUpgrade/}{alt};
    \devRecTerminalEdges{cfDev-1LL}{\EventX/}{follow}{\EventY/}{alt};
    \devRecTerminalEdges{cfDev-1LR}{\EventX/}{follow}{\EventY/}{alt}; \&

    \node(cfDev-2) [state, dev, xshift=0.7cm] {};
    \devRecTreeInternalCoords{cfDev-2};
    \node(cfDev-2L) [state, dev] at (cfDev-2LCoord) {};
    \node(cfDev-2R) [util] at (cfDev-2RCoord) {$+3$};

    \devRecTreeInternalCoords{cfDev-2L};
    \node(cfDev-2LL) [state] at (cfDev-2LLCoord) {};
    \node(cfDev-2LR) [state] at (cfDev-2LRCoord) {};

    \devRecTreeTerminalCoords{cfDev-2LL};
    \node(cfDev-2LLL) [util] at (cfDev-2LLLCoord) {0};
    \node(cfDev-2LLR) [util, draw=black] at (cfDev-2LLRCoord) {$+3$};

    \devRecTreeTerminalCoords{cfDev-2LR};
    \node(cfDev-2LRL) [util] at (cfDev-2LRLCoord) {0};
    \node(cfDev-2LRR) [util] at (cfDev-2LRRCoord) {$+2$};

    \devRecInternalEdges{cfDev-2}{\Play/}{dev}{\NotPlay/}{alt};
    \devRecInternalEdges{cfDev-2L}{\Upgrade/}{dev}{\NotUpgrade/}{alt};
    \devRecTerminalEdges{cfDev-2LL}{\EventX/}{alt}{\EventY/}{follow};
    \devRecTerminalEdges{cfDev-2LR}{\EventX/}{alt}{\EventY/}{follow}; \&

    \node(cfDev-Value) {$+2.5$};
    \node(cfDev-Value2) [below=0.3cm of cfDev-Value.south west, anchor=north west]
      {$\bs{+2.5 \given \text{\Play/}}$}; \\
  };
  \rowSepLine{cfDev};
\end{tikzpicture}
\caption{
  Player one's perspective of a CFCE based on extended battle-of-the-sexes (see \cref{sec:ext-bots}) that is not an observably sequential CFCE, and the blind counterfactual deviation that has a positive sequential incentive.
  \PlayerTwoDesc/
  \ArrowDescriptions/
  The ``EV'' column shows the expected utility of each deviation under a uniform distribution over the two recommendations, as well as the counterfactual value of the \Play/ information set.
}
\label{fig:seq-bots}
\end{figure}
 
\section{Technical Results}
\subsection{Counterfactual Deviations}
Since the probability that player $i$ plays to each history in an information set must be the same (due to perfect recall), let $\reachProb(\infoSet; \strat_i) = \reachProb(h; \strat_i)$ for all $h \in \infoSet$.

To define the benefit of a counterfactual deviation, it helps to first break counterfactual values into immediate and recursive components.
Let the information sets immediately following $\infoSet$ and action $a$ be
\begin{align}
  \InfoSets_i(\infoSet, a) = \set*{
    \infoSet' \in \InfoSets_i
    \, \Bigg| \,
    \begin{aligned}
      &\forall h' \in \infoSet', \,
      \exists h \in \infoSet, \,
      h' \sqsupseteq ha, \,\\
      &\nexists h'' \in \Histories_i, ha \sqsubseteq h'' \sqsubset h'
    \end{aligned}
  }.
\end{align}
Let the histories that terminate without further input from player $i$ after taking action $a$ in $\infoSet$ be
\begin{align}
  \TerminalHistories_i(\infoSet, a) = \set*{
    z \in \TerminalHistories
    \, \Bigg| \,
    \begin{aligned}
      &\exists h \in \infoSet, \,
      z \sqsupseteq ha, \,\\
      &\nexists h' \in \Histories_i, ha \sqsubseteq h' \sqsubset z
    \end{aligned}
  }.
\end{align}
Now we can decompose counterfactual values:
\begin{align}
    \cfv_{\infoSet}(a; \strat)
      &= \sum_{z \in \TerminalHistories_i(\infoSet, a)}
          \reachProb(z; \strat_{-i}) \utility(z)
        + \sum_{\infoSet' \in \InfoSets_i(\infoSet, a)}
          \cfv(\infoSet'; \strat).
  \shortintertext{Finally, if we define $\termValue(\infoSet, a; \strat_{-i}) = \sum_{z \in \TerminalHistories_i(\infoSet, a)} \reachProb(z; \strat_{-i}) \utility(z)$, we arrive at the simple form,
  }
    &= \underbrace{\termValue(\infoSet, a; \strat_{-i})}_{\text{expected immediate value}}
      + \underbrace{\sum_{\infoSet' \in \InfoSets_i(\infoSet, a)}
        \cfv(\infoSet'; \strat)}_{\text{expected future value}}.
  \label{eq:bellman}
\end{align}

\setcounter{lemma}{0}
\begin{lemma}\cite[Equation 13, Lemma 5]{CFR-TR}
  \label{lem:cfr-decomp}
  The full regret,
$\Regret^T_{\infoSet}(\DevSet_{\PureStratSet_i}^{\EXT})$,
from information set $\infoSet \in \InfoSets_i$ that $\tuple{\strat^t_i}_{t = 1}^T$ suffers is upper bounded by
$\Regret^T_{\infoSet}(\DevSet_{\Actions(\infoSet)}^{\EXT})
+ \max_{a \in \Actions(\infoSet)}
    \sum_{\infoSet' \in \InfoSets_i(\infoSet, a)}
      \Regret^T_{\infoSet'}(\DevSet_{\PureStratSet_i}^{\EXT})$,
where $\InfoSets_i(\infoSet, a) \subset \InfoSets_i$ are the information sets that immediately follow after taking action $a$ in $\infoSet$.
 \begin{proof}
  \begin{align*}
    \Regret^T_{\infoSet}(\DevSet_{\PureStratSet_i}^{\EXT})
      &=
        \max_{\strat'_i \in \PureStratSet_i}
          \sum_{t = 1}^T
            \cfv(\infoSet; \strat'_i, \strat^t_{-i})
            - \cfv(\infoSet; \strat^t).\\
    \shortintertext{Decompose into immediate and future value:}
    &=
      \max_{a \in \Actions(\infoSet)}
        \sum_{t = 1}^T
          \termValue(\infoSet, a; \strat^t_{-i})
          - \cfv(\infoSet; \strat^t)
      + \max_{\strat'_i \in \PureStratSet_i}
        \sum_{t = 1}^T
          \sum_{\infoSet' \in \InfoSets_i(\infoSet, a)}
            \cfv\subex{\infoSet'; \strat'_i, \strat^t_{-i}}.
  \shortintertext{Add and subtract
    $\sum_{t = 1}^T
      \sum_{\infoSet' \in \InfoSets_i(\infoSet, a)}
        \cfv\subex{\infoSet'; \strat^t}$:}
    &=
      \max_{a \in \Actions(\infoSet)}
        \sum_{t = 1}^T
          \underbrace{
            \termValue(\infoSet, a; \strat^t_{-i})
            + \sum_{\infoSet' \in \InfoSets_i(\infoSet, a)}
              \cfv\subex{\infoSet'; \strat^t}}_{
                \text{Value from $a$ assuming $\strat^t_i$ is used thereafter.}}
          - \cfv(\infoSet; \strat^t)
        \\&\quad+ \underbrace{
          \max_{\strat'_i \in \PureStratSet_i}
            \sum_{\infoSet' \in \InfoSets_i(\infoSet, a)}
              \sum_{t = 1}^T
                \cfv\subex{\infoSet'; \strat'_i, \strat^t_{-i}}
                - \cfv\subex{\infoSet'; \strat^t}}_{
                  \text{Suboptimality after $a$, }
                  \sum_{\infoSet' \in \InfoSets_i(\infoSet, a)}
                    \Regret^T_{\infoSet'}(\DevSet_{\PureStratSet_i}^{\EXT}).}.\\
    \shortintertext{Apply the max to both terms in the sum independently:}
    &\le
      \Regret^T_{\infoSet}(\DevSet_{\Actions(\infoSet)}^{\EXT})
      + \max_{a \in \Actions(\infoSet)}
          \sum_{\infoSet' \in \InfoSets_i(\infoSet, a)}
            \Regret^T_{\infoSet'}(\DevSet_{\PureStratSet_i}^{\EXT}).
  \end{align*}
\end{proof}
\end{lemma}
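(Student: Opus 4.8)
The plan is to expand the full regret $\Regret^T_{\infoSet}(\DevSet^{\EXT}_{\PureStratSet_i})$ directly from its definition and peel off a single layer of the game tree using the recursive structure of counterfactual values. First I would write $\Regret^T_{\infoSet}(\DevSet^{\EXT}_{\PureStratSet_i}) = \max_{\strat'_i \in \PureStratSet_i} \sum_{t=1}^T [\cfv(\infoSet; \strat'_i, \strat^t_{-i}) - \cfv(\infoSet; \strat^t)]$, since an external deviation over pure strategies simply replaces player $i$'s entire strategy by a fixed $\strat'_i$, and the full regret from $\infoSet$ measures the best such fixed replacement in hindsight.

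The key structural fact I would establish next is the Bellman-style identity $\cfv_{\infoSet}(a; \strat) = \termValue(\infoSet, a; \strat_{-i}) + \sum_{\infoSet' \in \InfoSets_i(\infoSet, a)} \cfv(\infoSet'; \strat)$, separating the immediate expected value of committing to action $a$ at $\infoSet$ from the counterfactual values of the information sets that immediately follow $a$. Because a pure strategy $\strat'_i$ chooses a single action $a = \strat'_i(\infoSet)$ at $\infoSet$, and because perfect recall makes the subtrees rooted at distinct $\infoSet' \in \InfoSets_i(\infoSet, a)$ disjoint in player $i$'s future information sets, the outer maximization factors: for each candidate action $a$ the continuation strategy can be optimized independently in each successor subtree. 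This lets me rewrite the cumulative regret as $\max_{a \in \Actions(\infoSet)} [\sum_t (\termValue(\infoSet, a; \strat^t_{-i}) - \cfv(\infoSet; \strat^t)) + \max_{\strat'_i} \sum_t \sum_{\infoSet' \in \InfoSets_i(\infoSet, a)} \cfv(\infoSet'; \strat'_i, \strat^t_{-i})]$, where the inner max ranges over continuation strategies.

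The algebraic heart of the argument is to add and subtract $\sum_t \sum_{\infoSet' \in \InfoSets_i(\infoSet, a)} \cfv(\infoSet'; \strat^t)$ inside the bracket. One group then collapses, via the Bellman identity, to $\sum_t (\cfv_{\infoSet}(a; \strat^t) - \cfv(\infoSet; \strat^t))$, which is exactly the instantaneous immediate regret for action $a$; the other group becomes $\max_{\strat'_i} \sum_t \sum_{\infoSet'} [\cfv(\infoSet'; \strat'_i, \strat^t_{-i}) - \cfv(\infoSet'; \strat^t)]$, which (again by subtree independence) equals $\sum_{\infoSet' \in \InfoSets_i(\infoSet, a)} \Regret^T_{\infoSet'}(\DevSet^{\EXT}_{\PureStratSet_i})$.

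The final step is handling the two maximizations, and this is where the bound becomes an inequality. I would apply sub-additivity of the maximum, $\max_a (A(a) + B(a)) \le \max_a A(a) + \max_a B(a)$, to the immediate term $A(a)$ and the future term $B(a)$: decoupling them yields $\Regret^T_{\infoSet}(\DevSet^{\EXT}_{\Actions(\infoSet)})$ from the first maximum and $\max_{a \in \Actions(\infoSet)} \sum_{\infoSet' \in \InfoSets_i(\infoSet, a)} \Regret^T_{\infoSet'}(\DevSet^{\EXT}_{\PureStratSet_i})$ from the second, exactly the claimed bound. The main obstacle I anticipate is the bookkeeping around the maximizations — in particular justifying that perfect recall genuinely licenses the per-subtree factorization of the continuation maximum, and confirming that the inequality runs the right way, since it is lossy precisely because the immediate action maximizing $A(a)$ need not be the one maximizing the future regret $B(a)$.
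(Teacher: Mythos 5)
Your proposal is correct and follows essentially the same route as the paper's proof: expand the full regret from its definition, apply the Bellman-style decomposition $\cfv_{\infoSet}(a;\strat) = \termValue(\infoSet, a; \strat_{-i}) + \sum_{\infoSet' \in \InfoSets_i(\infoSet,a)} \cfv(\infoSet';\strat)$, add and subtract $\sum_t \sum_{\infoSet'} \cfv(\infoSet';\strat^t)$, and finish with sub-additivity of the maximum. If anything, you are slightly more careful than the paper, since you explicitly justify (via perfect recall and subtree disjointness) the factorization of the continuation maximization into independent per-successor maxima, a step the paper uses implicitly in its underbraced ``suboptimality after $a$'' term.
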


\setcounter{theorem}{1}
\begin{theorem}
  Let $\dev$ be the deviation that plays to reach $\infoSet^{\TARGET}$ from initial information set $\infoSet_0$ and uses action deviation $\dev^{\TARGET} \in \DevSet^{\TARGET} \subseteq \DevSet^{\SWAP}_{\Actions(\infoSet^{\TARGET})}$ once there, but otherwise leaves its given strategy unmodified.
The instantaneous intermediate counterfactual regret with respect to $\dev$ is
$\cfv(\infoSet_0; \dev(\strat_i^t), \strat_{-i}^t)
  - \cfv(\infoSet_0; \strat^t)$
and the cumulative intermediate counterfactual regret is bounded as
\begin{align*}
&\sum_{t = 1}^T
  \cfv(\infoSet_0; \dev(\strat_i^t), \strat_{-i}^t)
  - \cfv(\infoSet_0; \strat^t)\\
  &\le \Regret^T_{\infoSet^{\TARGET}}(\DevSet^{\TARGET})
  + \sum_{\infoSet^0 \preceq \infoSet \prec \infoSet^{\TARGET}}
    \Regret^T_{\infoSet}(\DevSet^{\EXT}_{\Actions(\infoSet)}).
\end{align*}
 \begin{proof}
  Let $a_{n - 1}$ be the action taken at $\infoSet^{\TARGET}$'s predecessor, $\infoSet_{n - 1}$, to reach $\infoSet^{\TARGET}$.
  Then, the intermediate regret from $\infoSet_{n - 1}$ to $\infoSet^{\TARGET}$ (which includes two deviation steps, one at $\infoSet_{n - 1}$ and another at $\infoSet^{\TARGET}$) is
  \begin{align}
  &\sum_{t = 1}^T \cfv(\infoSet_{n - 1}; \dev(\strat^t_i), \strat^t_{-i})
    - \cfv(\infoSet_{n - 1}; \strat^t)\\
    &=\sum_{t = 1}^T
      \cfv_{\infoSet^{\TARGET}}(\dev^{\TARGET}(\strat^t_i(\infoSet^{\TARGET})); \strat^t)
      - \cfv(\infoSet_{n - 1}; \strat^t)
      \\&\quad+ \underbrace{
        \termValue(\infoSet_{n - 1}, a_{n - 1}; \strat^t_{-i})
        + \sum_{\substack{
          \infoSet' \in \InfoSets_i(\infoSet_{n - 1}, a_{n - 1})\\
          \infoSet' \ne \infoSet^{\TARGET}
        }}
          \cfv(\infoSet'; \strat^t)
      }_{\text{Value from histories that do not lead into } \infoSet^{\TARGET}.}.\\
    \shortintertext{
      Add and subtract $\cfv(\infoSet^{\TARGET}; \strat^t)$:
    }\nonumber
    &=
      \sum_{t = 1}^T
        \cfv_{\infoSet^{\TARGET}}(\dev^{\TARGET}(\strat^t_i(\infoSet^{\TARGET})); \strat^t)
        - \cfv(\infoSet_{n - 1}; \strat^t)
        \\&\quad+ \underbrace{
          \termValue(\infoSet_{n - 1}, a_{n - 1}; \strat^t_{-i})
          + \sum_{
            \infoSet' \in \InfoSets_i(\infoSet_{n - 1}, a_{n - 1})
          }
            \cfv(\infoSet'; \strat^t)}_{
              \cfv_{\infoSet_{n - 1}}(a_{n - 1}; \strat^t)
            }
        - \cfv(\infoSet^{\TARGET}; \strat^t).\\
    \shortintertext{Maximizing over actions at $\infoSet_{n - 1}$:}
    &\le
      \sum_{t = 1}^T
        \cfv_{\infoSet^{\TARGET}}(\dev^{\TARGET}(\strat^t_i(\infoSet^{\TARGET})); \strat^t)
        - \cfv(\infoSet^{\TARGET}; \strat^t)
      + \max_{a \in \Actions(\infoSet_{n - 1})}
        \sum_{t = 1}^T
          \cfv_{\infoSet_{n - 1}}(a; \strat^t)
          - \cfv(\infoSet_{n - 1}; \strat^t).\\
    \shortintertext{Maximizing over transformations at $\infoSet^{\TARGET}$:}
    &\le
      \Regret^T_{\infoSet^{\TARGET}}(\DevSet^{\TARGET})
      + \Regret^T_{\infoSet_{n - 1}}(\DevSet^{\EXT}_{\Actions(\infoSet_{n - 1})}).
  \end{align}
  The same relationship holds between $\infoSet_{n - 1}$ and its predecessor where we can treat $\infoSet_{n - 1}$ as the target with $\DevSet^{\EXT}_{\Actions(\infoSet)}$ as the deviation set at $\infoSet_{n - 1}$.
  Therefore, unrolling the sum across all predecessors of $\infoSet^{\TARGET}$ to $\infoSet_0$ completes the argument.
\end{proof}
\end{theorem}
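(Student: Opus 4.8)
The plan is to unroll the one-step recursion of \cref{lem:cfr-decomp} along the unique sequence of information sets $\infoSet_0, \infoSet_1, \dots, \infoSet_{n-1}, \infoSet^{\TARGET}$ that perfect recall guarantees leads to the target, peeling off the innermost step first and recursing outward. The starting point is the expansion of $\cfv(\infoSet_0; \dev(\strat_i^t), \strat_{-i}^t)$ into the target value $\cfv_{\infoSet^{\TARGET}}(\dev^{\TARGET}(\strat_i^t(\infoSet^{\TARGET})); \strat^t)$ plus a telescoping sum of one-step differences along the path, together with the Bellman-style decomposition in \cref{eq:bellman}. The goal is to accumulate one external-regret term $\Regret^T_{\infoSet}(\DevSet^{\EXT}_{\Actions(\infoSet)})$ per intermediate information set and a single $\DevSet^{\TARGET}$-regret term at the target.

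For the innermost step I would measure the intermediate regret from the target's immediate predecessor $\infoSet_{n-1}$, letting $a_{n-1}$ be the action that reaches $\infoSet^{\TARGET}$. Expanding $\cfv(\infoSet_{n-1}; \dev(\strat_i^t), \strat_{-i}^t)$ via \cref{eq:bellman} separates the value into the contribution routed through $\infoSet^{\TARGET}$ (where $\dev^{\TARGET}$ acts) and the contributions from the terminal histories $\termValue(\infoSet_{n-1}, a_{n-1}; \strat_{-i}^t)$ and the sibling successors $\infoSet' \in \InfoSets_i(\infoSet_{n-1}, a_{n-1}) \setminus \set{\infoSet^{\TARGET}}$, all of which are left unmodified. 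Adding and subtracting $\cfv(\infoSet^{\TARGET}; \strat^t)$ then regroups everything into two pieces: the regret at $\infoSet^{\TARGET}$ against $\dev^{\TARGET}$, and the immediate counterfactual advantage $\cfv_{\infoSet_{n-1}}(a_{n-1}; \strat^t) - \cfv(\infoSet_{n-1}; \strat^t)$ of choosing $a_{n-1}$ at $\infoSet_{n-1}$. Maximizing the two pieces independently---over $\dev^{\TARGET} \in \DevSet^{\TARGET}$ and over $a_{n-1} \in \Actions(\infoSet_{n-1})$---bounds this two-step slice by $\Regret^T_{\infoSet^{\TARGET}}(\DevSet^{\TARGET}) + \Regret^T_{\infoSet_{n-1}}(\DevSet^{\EXT}_{\Actions(\infoSet_{n-1})})$.

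The theorem then follows by induction on path length: the same manipulation applies with $\infoSet_{n-1}$ in the role of the target, except that the transformation there is now the external choice of $a_{n-1}$ rather than a general $\dev^{\TARGET}$, so every further step back contributes an external-regret term. Unrolling all the way to $\infoSet_0$ collects exactly $\sum_{\infoSet_0 \preceq \infoSet \prec \infoSet^{\TARGET}} \Regret^T_{\infoSet}(\DevSet^{\EXT}_{\Actions(\infoSet)})$ together with the single target term. The main obstacle I anticipate is the bookkeeping for the sibling branches at each step: action $a_j$ may lead into several successor information sets, only one of which continues toward $\infoSet^{\TARGET}$, so I must verify that the off-path continuation values and the terminal value $\termValue(\infoSet_j, a_j; \strat_{-i}^t)$ reassemble (via \cref{eq:bellman}) into the full immediate value $\cfv_{\infoSet_j}(a_j; \strat^t)$, leaving only the on-path successor to be handled by the next level of recursion. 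Getting this regrouping right is the delicate part, but it is a direct consequence of the value decomposition rather than an independent difficulty.
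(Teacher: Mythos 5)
Your proposal is correct and follows essentially the same route as the paper's proof: the same innermost-step expansion at $\infoSet_{n-1}$ via the Bellman decomposition, the same add-and-subtract of $\cfv(\infoSet^{\TARGET}; \strat^t)$ to regroup into a target regret term plus an immediate external regret term at $\infoSet_{n-1}$, the same independent maximizations, and the same backward unrolling with external deviation sets at each intermediate information set. The sibling-branch bookkeeping you flag as the delicate part is handled in the paper exactly as you describe, by reassembling the off-path successor values and $\termValue(\infoSet_{n-1}, a_{n-1}; \strat^t_{-i})$ into $\cfv_{\infoSet_{n-1}}(a_{n-1}; \strat^t)$.
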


\correctionStart{Add proof.}
\setcounter{theorem}{2}
\begin{theorem}
  If the average full regret for player $i$ at each information set $\infoSet$ for the sequence of $T$ mixed strategy profiles,
$\tuple{ \strat^t }_{t = 1}^T$,
with respect to each blind counterfactual deviation
$\dev$
is
$d_{\infoSet}(\dev)f(T) \ge 0$,
where
$d_{\infoSet}(\dev)$
is the number of external action transformations
$\dev$
applies from $\infoSet$ to the end of the game, then the observable sequential rationality gap for $i$ with respect to the blind counterfactual and external deviations is no more than
$\abs{\InfoSets_i} f(T)$. \end{theorem}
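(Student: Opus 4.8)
The plan is to bound the observable sequential rationality gap at each information set $\infoSet \in \InfoSets_i$ separately for the two deviation families, using the identity established just before \cref{def:sub-seq-rationality}: for a blind counterfactual or external deviation $\dev$, the benefit at $\infoSet$ equals $\frac{1}{T}\subex*{\regret^{1:T}_{\infoSet}(\dev)}^+$, the positive part of its average full regret from $\infoSet$. It then suffices to show each such quantity is at most $\abs{\InfoSets_i} f(T)$ and take the maximum over $\infoSet$ and over both families.

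First, for a blind counterfactual deviation $\dev$, the hypothesis gives directly $\frac{1}{T}\regret^{1:T}_{\infoSet}(\dev) = d_{\infoSet}(\dev) f(T) \ge 0$, so the positive part equals the value itself. Since any deviation applies at most one action transformation per information set, $d_{\infoSet}(\dev) \le \abs{\InfoSets_i}$, and hence the gap of every blind counterfactual $\dev$ at every $\infoSet$ is at most $\abs{\InfoSets_i} f(T)$.

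The substantive case is external deviations, where the gap at $\infoSet$ is $\frac{1}{T}\subex*{\Regret^T_{\infoSet}(\DevSet^{\EXT}_{\PureStratSet_i})}^+$. Here I would unroll \cref{lem:cfr-decomp} from $\infoSet$ down to the end of the game, bounding each $\max_{a}$ step by a sum of positive parts over all immediate player-$i$ successors (each successor corresponds to a unique action–successor pair, so pushing the max inside the sum and inserting positive parts is valid even when individual successor regrets are negative). This yields $\Regret^T_{\infoSet}(\DevSet^{\EXT}_{\PureStratSet_i}) \le \sum_{\infoSet' : \infoSet \preceq \infoSet'} \subex*{\Regret^T_{\infoSet'}(\DevSet^{\EXT}_{\Actions(\infoSet')})}^+$, a sum over the at most $\abs{\InfoSets_i}$ descendants of $\infoSet$ (including $\infoSet$ itself). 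The key link is that each immediate external regret $\Regret^T_{\infoSet'}(\DevSet^{\EXT}_{\Actions(\infoSet')})$ coincides with the full regret from $\infoSet'$ of the one-step blind counterfactual deviation targeting $\infoSet'$ with its maximizing action: from $\infoSet'$ such a deviation applies exactly one action transformation and then follows recommendations, so $d_{\infoSet'}(\dev) = 1$ and the hypothesis forces its average value to be $f(T)$ (and its positive part likewise, since $f(T) \ge 0$). Summing over the descendants bounds the external gap at $\infoSet$ by $\abs{\InfoSets_i} f(T)$ as well.

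Taking the maximum over all $\infoSet \in \InfoSets_i$ and over the blind counterfactual and external families, every contribution is at most $\abs{\InfoSets_i} f(T)$, so the overall observable sequential rationality gap for player $i$ is at most $\abs{\InfoSets_i} f(T)$. The main obstacle is the external-deviation step: getting the counting right so the unrolled decomposition produces at most $\abs{\InfoSets_i}$ terms, and cleanly identifying each immediate external regret with a $d_{\infoSet'}(\dev)=1$ blind counterfactual deviation so that the hypothesis can be applied term by term. Care with positive parts when moving the $\max_a$ inside the sum is the secondary technical point.
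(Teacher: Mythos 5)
Your proposal is correct and follows essentially the same route as the paper: both recursively apply \cref{lem:cfr-decomp} and invoke the hypothesis only through one-step blind counterfactual deviations ($d_{\infoSet'}(\dev)=1$) to bound each immediate external regret $\Regret^T_{\infoSet'}(\DevSet^{\EXT}_{\Actions(\infoSet')})$ by $f(T)$, then count at most $\abs{\InfoSets_i}$ such terms, using the identity that equates the observable sequential rationality gap with positive-part average full regret. If anything, your fully unrolled form, bounding the external full regret from $\infoSet$ by $\sum_{\infoSet' \succeq \infoSet} \subex*{\Regret^T_{\infoSet'}(\DevSet^{\EXT}_{\Actions(\infoSet')})}^+$ and counting descendants, is a cleaner bookkeeping than the paper's height induction (whose counting step $(d-1)\abs{\InfoSets_i(\infoSet,a)} \le \abs{\InfoSets_i}-1$ is stated loosely), but the underlying argument is the same.
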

\begin{proof}
  First, establish a simple fact about blind counterfactual deviations.
  At each information set $\infoSet$, the full regret with respect to each blind counterfactual deviation $\dev^{\to \infoSet, a}$ that plays to $\infoSet$, chooses an action $a$, and then immediately re-correlates is at most $f(T)$ since $d_{\infoSet}(\dev^{\to \infoSet, a}) = 1$.
  Formally, denote this value as
  \begin{align}
    \regret_{\infoSet}^{1:T}(a)
      &= \sum_{t = 1}^T
        \cfv_{\infoSet}(a; \strat^t) - \cfv(\infoSet; \strat^t)\\
      &\le f(T).
    \label{eq:immediateCfRegretBound}
  \end{align}

  Next, consider the terminal or height $1$ information sets for player $i$, \ie/, those without successors.
  The maximum full regret with respect to a blind counterfactual deviation is the positive part of the maximum full regret with respect to an external deviation at each terminal information set $\infoSet$; either the blind counterfactual deviation can change the action at $\infoSet$ as an external deviation would or it can re-correlate.
  Therefore, the theorem is proved if all information sets are terminal as $d < \abs{\InfoSets_i}$.
  This serves as the base case of a proof by induction.

  For the induction step, assume that the maximum full regret of an external and or a blind counterfactual deviation that chooses action $a$ at information set $\infoSet$ is upper bounded at each immediate successor $\infoSet' \in \InfoSets_i(\infoSet, a)$ by $(d - 1)f(T)$, where $d$ is the height of $\infoSet$ ($d - 1$ player $i$ actions leads to a terminal information set).
  The full regret of external deviation $\dev^{\to \pureStrat}$ that deviates to pure strategy $\pureStrat$ is upper bounded as
  \begin{align}
    \regret_{\infoSet}^{1:T}(\dev^{\to \pureStrat})
      &\le
        \Regret^T_{\infoSet}(\DevSet_{\Actions(\infoSet)}^{\EXT})
        + \max_{a \in \Actions(\infoSet)}
            \sum_{\infoSet' \in \InfoSets_i(\infoSet, a)}
              \Regret^T_{\infoSet'}(\DevSet_{\PureStratSet_i}^{\EXT})
    \shortintertext{according to \cref{lem:cfr-decomp}.
      We can bound the full regret at each $\infoSet'$ by the induction assumption,}
    \regret_{\infoSet}^{1:T}(\dev^{\to \pureStrat})
      &\le
        \Regret^T_{\infoSet}(\DevSet_{\Actions(\infoSet)}^{\EXT})
        + \abs{\InfoSets_i(\infoSet, a)}
          (d - 1)f(T).
    \shortintertext{We can then bound the maximum immediate regret at $\infoSet$ by \cref{eq:immediateCfRegretBound},}
    \regret_{\infoSet}^{1:T}(\dev^{\to \pureStrat})
      &\le
        f(T)
        + \abs{\InfoSets_i(\infoSet, a)}
          (d - 1)f(T)\\
      &\le
        \abs{\InfoSets_i}f(T),
      \label{eq:cfElevationCompletion}
  \end{align}
  where the last inequality follows from the fact that
  $(d - 1)\abs{\InfoSets_i(\infoSet, a)} \le \abs{\InfoSets_i} - 1$.
  \Cref{eq:cfElevationCompletion} completes the proof.
\end{proof}
\correctionEnd
  
\end{document}